\title{{\bf Definable Inapproximability: \\ New Challenges for
    Duplicator}}
\author{Albert Atserias \\
  Departament de Ci\`encies de la Computaci\'o \\
  Universitat Polit\`{e}cnica de Catalunya 
  \and
  Anuj Dawar \\
  Department of Computer Science and Technology \\ 
  University of Cambridge}
\newcommand{\logic}[1]{\ensuremath\mathrm{#1}\xspace}
\newcommand{\FO}{\logic{FO}}
\newcommand{\FOC}{\logic{FOC}}
\newcommand{\FP}{\logic{FP}}
\newcommand{\FPC}{\logic{FPC}}
\newcommand{\cclass}[1]{\ensuremath\mathrm{#1}\xspace}
\newcommand{\NP}{\cclass{NP}\xspace}
\newcommand{\PT}{\cclass{P}\xspace}
\newcommand{\prob}[1]{\ensuremath\textsc{#1}\xspace}
\newcommand{\vc}[1]{\mathrm{vc}(#1)}
\newcommand{\vcd}[1]{\mathrm{vcd}(#1)}
\newcommand{\is}[1]{\mathrm{IS}(#1)}
\newcommand{\his}[2]{\mathrm{IS}_{#1}(#2)}
\newcommand{\isd}[1]{\mathrm{isd}(#1)}
\newcommand{\FF}[1]{\ensuremath{\mathbb{F}_{#1}}\xspace}
\newcommand{\NN}{\ensuremath\mathbb{N}\xspace}
\newcommand{\Ck}{\ensuremath{C^k}}
\newcommand{\Lk}{\ensuremath{L^k}}
\newcommand{\ELk}{\ensuremath{\exists L^{k,+}}}
\newcommand{\Cequiv}[1]{\ensuremath{\equiv_{C^{#1}}}}
\newcommand{\ELsub}[1]{\ensuremath{\Rightarrow_{#1}}}
\newcommand{\struct}[1]{\ensuremath\mathbb{#1}}
\newcommand{\Aa}{\struct{A}}
\newcommand{\Bb}{\struct{B}}
\newcommand{\class}[1]{\ensuremath\mathscr{#1}}
\newcommand{\Exi}{\ensuremath{\exists^i}}
\newcommand{\ra}{\rightarrow}
\newcommand{\powerset}[1]{\mathscr{P}(#1)}
\newcommand{\kpowerset}[2]{\mathscr{P}_{#1}(#2)}
\newcommand{\nats}{\mathbb{N}}
\newcommand{\vect}[1]{\overline{#1}}
\newtheorem{lemma}{Lemma}
\newtheorem{theorem}[lemma]{Theorem}
\newtheorem{corollary}[lemma]{Corollary}
\newtheorem{claim}[lemma]{Claim}
\date{}
\begin{document}

\maketitle

\begin{abstract}
We consider the hardness of approximation of optimization problems
from the point of view of definability.  For many~$\NP$-hard
optimization problems it is known that, unless~$\PT = \NP$, no
polynomial-time algorithm can give an approximate solution guaranteed
to be within a fixed constant factor of the optimum.  We show, in
several such instances and without any complexity theoretic
assumption, that no algorithm that is expressible in fixed-point logic
with counting (FPC) can compute an approximate solution.  Since
important algorithmic techniques for approximation algorithms (such as
linear or semidefinite programming) are expressible in FPC, this
yields lower bounds on what can be achieved by such methods.  The
results are established by showing lower bounds on the number of
variables required in first-order logic with counting to separate
instances with a high optimum from those with a low optimum for
fixed-size instances.
 \end{abstract}

\section{Introduction}

Twenty years ago, the PCP theorem~\cite{pcp} transformed the landscape
of complexity theory.  It showed that if~$\PT \neq \NP$ then not only
is it impossible to efficiently solve~$\NP$-hard problems exactly but
for some of them it is also impossible to approximate the solution to
within a constant factor.  Consider for instance the
problem~$\prob{MAX 3SAT}$.  Here we are given a Boolean formula
in~$\prob{3CNF}$ and we are asked to determine~$m^*$, the maximum number
of clauses 
that can be simultaneously satisfied by an assignment of Boolean
values to its variables.  It is a consequence of the PCP theorem that
there is a constant~$c < 1$ such that, assuming~$\PT \neq \NP$, no
polynomial-time algorithm can be guaranteed to produce an assignment
that satisfies at least~$cm^*$ clauses, or indeed determine the value
of~$m^*$ up to a factor of~$c$.  The proof of the PCP theorem introduced
sophisticated new techniques into complexity theory such as the
\emph{probabilistically checkable proofs} that gave the theorem its
name.  Over the years, stronger results were proved, improving the
constant~$c$ and, by reductions, proving inapproximability results for
a host of other~$\NP$-hard problems.

A structural theory of hardness of approximation was introduced by
Papadimitriou and Yannakakis~\cite{PY} who defined the
class~$\prob{MAX SNP}$ of approximation problems, with a definition
rooted in descriptive complexity theory.  They showed that for every
problem
in this class, there is a constant~$d$ and a
polynomial-time algorithm can find approximate solutions within a
factor~$d$ of the optimum.
At the
same time, for all problems 
that are~$\prob{MAX SNP}$-hard, under approximation-preserving 
reductions defined in~\cite{PY},
there is a constant~$c$ such that no polynomial-time algorithm can
approximate solutions 
within a factor~$c$.  This makes it a challenge, for each~$\prob{MAX
  SNP}$-complete problem, to determine the exact approximation ratio
that is achievable by an efficient algorithm.  In some cases, this has
been pinned down exactly.  For instance, for~$\prob{MAX 3SAT}$ we know
that there is a polynomial-time algorithm that will produce an
assignment satisfying~$7/8$ of the clauses in any formula but,
unless~$\PT = \NP$, there is no polynomial-time algorithm that is
guaranteed to produce a solution within~$7/8+ \epsilon$ of the
optimal, for any~$\epsilon > 0$~\cite{Hastad2001}.  Another
interesting case is~$\prob{MAX 3XOR}$, where we are given a formula
which is the conjunction of clauses, each of which is the~$\prob{XOR}$
of three literals.  Here, satisfiability is decidable in polynomial
time as the problem is essentially that of solving a system of linear
equations over the two-element field.  However, determining, for an
unsatisfiable system,
how many of its clauses can be simultaneously satisfied is~$\prob{MAX
  SNP}$-hard, and the exact approximation ratio that is achievable
efficiently is known: unless~$\PT = \NP$, no polynomial-time algorithm
can achieve an approximation ratio bounded
above~$1/2$~\cite{Hastad2001}.

To give a problem of another flavour, consider \emph{minimum vertex
  cover}, the problem of finding, in a graph~$G$, a minimum set~$S$ of
vertices such that every edge is incident on a vertex in~$S$.
Let~$\vc{G}$ denote the size of a minimum size vertex cover in~$G$.
There are algorithms that are guaranteed to find a vertex cover no
larger than~$2\vc{G}$ (this being a minimization problem, the
approximation ratio is expressed as a number~$c \geq 1$).  It has been
proved, by means of rather sophisticated reductions starting at the
PCP theorem, that, unless~$\PT = \NP$, no polynomial-time algorithm
can achieve a ratio better than~$1.36$~\cite{DinurSafra}.  Very recent
results announced in~\cite{KhotMinzerSafra} improve this lower bound
to~$\sqrt{2}$.  It is conjectured that indeed no such algorithm could
achieve a ratio of~$2 - \epsilon$ for arbitrarily small~$\epsilon > 0$
but, as of our current knowledge, the right threshold constant could
be somewhere between~$\sqrt{2}$ and~$2$.

We approach these questions on the hardness of approximability from
the point of view of definability.  Our aim is to show that the tools
of descriptive complexity can be brought to bear in showing lower
bounds on the definability of approximations and that these
definability lower bounds have consequences on understanding commonly
used techniques in approximation algorithms.

A reference logic in descriptive complexity is fixed-point logic with
counting,~$\FPC$.  The class of problems definable in this logic form
a proper subclass of the complexity class~$\PT$.  However,~$\FPC$ is
very expressive and many natural problems in~$\PT$ are expressible in
this logic.  For instance, any polynomial-time decidable problem on a
proper-minor closed class of graphs is expressible
in~$\FPC$~\cite{Grohe-book}.  Also, problems that can be formulated as
linear programming or semidefinite programming problems are
in~$\FPC$~\cite{ADH15,AO18,DW-lics17}.  At the same time, for many
problems we are able to prove categorically, i.e., without complexity
theoretic assumptions, that they are not definable in~$\FPC$.  Among
these are~$\NP$-complete problems like~$\prob{3SAT}$,
graph~$3$-colourability and Hamiltonicity (see~\cite{DawarSigLog}).
We can also prove that certain problems in~$\PT$ are not in~$\FPC$,
such as~$\prob{3XOR}$~\cite{ABD09,CFI92}.

A particularly interesting class of problems is the class of optimization
problems known as~$\prob{MAX CSP}$ or constraint maximization
problems, where we are given a collection of constraints and the
problem is to find the maximum number of constraints that can be
simultaneously satisfied.  When it comes to finding exact solutions,
definability in~$\FPC$ turns out to be an excellent guide to the
tractability of such problems.  It is known that each such problem is
either in~$\PT$ \emph{and} definable in~$\FPC$ or it is~$\NP$-complete
and provably \emph{not} definable in~$\FPC$~\cite{DW15}.  We would
like to extend such results also to the \emph{approximability} of such
problems.  This paper develops the methodology for doing so.

For~$\prob{MAX 3SAT}$, we prove, without any complexity theoretic
assumption, that no algorithm expressible in~$\FPC$ can achieve an
approximation ratio of~$7/8 + \epsilon$.  The question seems ill-posed
at first sight as~$\FPC$ is a formalism for defining problems rather
than expressing algorithms.  We return to the precise formulation
shortly, but first note that there is a sense in which~$\FPC$ can
express, say the ellipsoid method for solving linear
programs~\cite{ADH15}.  This is the basis for showing that many
commonly used algorithmic techniques for approximation problems, such
as semidefinite programming relaxations, are also expressible
in~$\FPC$.  Thus, on the one hand, reductions from~$\prob{MAX
  SNP}$-hard problems show inapproximability by \emph{any}
polynomial-time algorithm, assuming~$\PT \neq \NP$.  On the other
hand, our results show, without the assumption, inapproximability by
the most commonly used polynomial-time methods.

Undefinability of a class of structures~$\class{C}$ in~$\FPC$ is
typically established by showing that structures in~$\class{C}$ cannot
be distinguished from structures not in~$\class{C}$
in~$\Ck$---first-order logic with counting and just~$k$
variables---for any fixed~$k$.  In the terminology
of~\cite{DW-lics17},~$\class{C}$ has unbounded \emph{counting width}.
On the other hand, hardness of approximation for a maximization
problem
is typically established by showing
that every class that includes all instances 
with an optimum~$m^*$ and excludes all instances with an optimum less
than~$cm^*$, is~$\NP$-hard.  Our method combines these two.  We aim to
show that any class separating instances 
with an optimum~$m^*$ from instances with an optimum less than~$cm^*$ has
unbounded counting width.  In general, we not only show that counting
width is unbounded, but establish stronger bounds on how it grows with
the size of instances, as such bounds are directly tied to lower
bounds on semidefinite programming hierarchies~\cite{DW-lics17,AO18}.  This
methodology poses new challenges for Spoiler-Duplicator games in
finite model theory.  Such games are typically played on pairs of
structures that are minimally different. In the new setting, we need
to show Duplicator winning strategies in games on pairs of structures
that differ substantially, on some numeric~parameters.

The PCP theorem is the \emph{fons et origo} of results on hardness of
approximation.  It established the first provably~$\NP$-hard constant
gap between the fully satisfiable instances of~$\prob{MAX 3SAT}$,
i.e., those in which all clauses can be satisfied, and the less
satisfiable ones, those where no more than~$1-\epsilon_0$ can be
satisfied, for some explicit~$\epsilon_0 > 0$. The gap between~$1$
and~$1-\epsilon_0$ was then amplified and also transferred to other
problems by means of reductions.  For us, the starting point is the
problem~$\prob{MAX 3XOR}$.  We are able to establish a definability
gap between the satisfiable instances of this and instances in which
little more than~$1/2$ of the clauses can be satisified.  The
constant~$1/2$ is easily seen to be optimal since in
every~$\prob{3XOR}$ instance at least half of the equations can be
satisfied.

The methods for establishing this optimal \emph{initial gap} are very
different from that for the PCP theorem. We construct a~$k$-locally
satisfiable instance of~$\prob{MAX 3XOR}$ which, by a random
construction is at the same time highly unsatisfiable.  We can then
combine this with a construction adapted from~\cite{ABD09} to obtain
an optimal gap that defeats any fixed counting width.  This shows that
no algorithm that is expressible in~$\FPC$ can approximate~$\prob{MAX
  3XOR}$ within a constant above~$1/2$, even on satisfiable instances.
It should be pointed out that, although the inapproximability
of~$\prob{MAX 3XOR}$ above~$1/2$ matches algorithmic lower bounds and
is tight, the \emph{type} of definability gap that we obtain, which
applies to satisfiable instances, \emph{cannot} have an analogue in
the algorithmic setting.  The satisfiable instances of~$\prob{MAX
  3XOR}$ \emph{are} distinguished from unsatisfiable ones by a
polynomial-time algorithm.  To show inapproximability for any constant
greater than~$1/2$ one has to show that it is the almost satisifable
ones that are indistinguishable from those that are highly
unsatisfiable.  This distinction supports our claim that our methods
are very different from that for the PCP theorem.

With such an optimal initial gap for~$\prob{MAX 3XOR}$ in hand, we can
then transfer it to other problems by means of reductions, just as in
classical inapproximability.  Our reductions have to preserve~$\FPC$
definability and we mostly rely on first-order definable reductions.
For one, the standard direct reduction from~$\prob{3XOR}$
to~$\prob{3SAT}$ is trivially first-order definable and gives an
optimal undefinability gap for~$\prob{MAX 3SAT}$: no algorithm
expressible in~$\FPC$ can achieve an approximation ratio
of~$7/8+\epsilon$, even on satisfiable instances. Again this matches
known algorithm lower bounds and is tight. For other problems we need
to rely on more sophisticated constructions, without leaving the realm
of first-order definable reductions. It turns out that many of the
reductions used in the classical theory of approximability \emph{are}
first-order reductions but this requires close examination and proof.

We show that the \emph{long-code reductions} from \cite{Hastad2001}
are definable in first-order logic. Such reductions have the merit of
providing different constructions of optimal gaps for~$\prob{MAX
  3XOR}$ and~$\prob{MAX 3SAT}$ starting at \emph{any} initial gap
whatsoever. In addition, the techniques that are involved in them have
applications elsewhere. For the vertex cover problem, we are able to
show that the reduction from~\cite{DinurSafra}, which is based on the
same long-code reduction techniques as in~\cite{Hastad2001}, is
first-order definable, showing that~$\FPC$ cannot give an
approximation better than~$1.36$.  It is possible that this could be
improved to~$\sqrt{2}$ using the recent breakthrough
of~\cite{KhotMinzerSafra} but we leave this to future work.

\section{Preliminaries} \label{sec:preliminaries}

We use~$\FF{2}$ to denote the~$2$-element field.  For any positive
integer~$n$, let~$[n] := \{1,\ldots,n\}$.

\subsection{Logics and games}  

We assume familiarity with first-order logic~$\FO$.  All our
vocabularies are finite and relational, and all structures are finite.
For a structure~$\struct{A}$, we write~$A$ to denote its universe, and
we often write~$|\struct{A}|$ and~$|A|$ interchangeably to mean the
number of elements in the universe.  We refer to fixed-point
logic~$\FP$ and fixed-point logic with counting~$\FPC$ but the
definitions of these are not required for the technical development in
the paper.  For this it suffices to consider the bounded variable
fragments of first-order logic.

For a fixed positive integer~$k$, we write~$\Lk$ to denote the
fragment of first-order logic in which every formula has at most~$k$
variables, free or bound.  We also write~$\ELk$ for the
\emph{existential positive} fragment of~$\Lk$.  This consists of those
formulas of~$\Lk$ formed using only the positive Boolean
connectives~$\land$ and~$\lor$, and existential
quantification. ~$\FOC$ is the extension of first-order logic with
\emph{counting quantifiers}.  For each natural number~$i$, we have a
quantifier~$\Exi$ where~$\struct A \models \Exi x \,\phi$ if, and only
if, there are at least~$i$ distinct elements~$a \in A$ such
that~$\struct A\models \phi[a/x]$. While the extension of first-order
logic with counting quantifiers is no more expressive than~$\FO$
itself, the presence of these quantifiers does affect the number of
variables that are necessary to express a query.  Let~$\Ck$ denote
the~$k$-variable fragment of~$\FOC$ in which no more than~$k$
variables appear, free or bound.

For two structures~$\struct A$ and~$\struct B$, we write~$\struct A
\Cequiv{k} \struct B$ to denote that they are not distinguished by any
sentence of~$\Ck$. All that we need to know about~$\FPC$ is that for
every formula~$\phi$ of~$\FPC$ there is a~$k$ such that if~$\struct{A}
\Cequiv{k} \struct{B}$ then~$\struct{A} \models \phi$ if, and only
if,~$\struct{B} \models \phi$.  We also write~$\struct A \ELsub{k}
\struct B$ to denote that every sentence of~$\ELk$ that is true
in~$\struct A$ is also true in~$\struct B$.  While~$\Cequiv{k}$ is an
equivalence relation,~$\ELsub{k}$ is reflexive and transitive but not
symmetric. These relations have well established characterizations in
terms of two-player pebble games.  The relation~$\ELsub{k}$ is
characterized by the \emph{existential~$k$-pebble game}~\cite{KV90}
and~$\Cequiv{k}$ by the \emph{$k$-pebble bijective game}~\cite{Hel96}.

Both versions of the game are played on a pair of structures~$\struct
A$ and~$\struct B$ by two players, Spoiler and Duplicator, using~$k$
pairs of pebbles~$(a_1,b_1),\dots,(a_k,b_k)$.  In a game position,
some (or all) of the pebbles~$a_1,\ldots,a_k$ are placed on elements
of~$\struct A$ while the matching pebbles among~$b_1,\ldots,b_k$ are
placed on elements of~$\struct B$.  Where it causes no confusion, we
do not distinguish notationally between the pebble~$a_i$ (or~$b_i$)
and the element on which it is placed.  In the
\emph{existential~$k$-pebble game}, at each move Spoiler chooses a
pebble~$a_i$ (which might or might not already be on an element
of~$\struct A$) and places it on any element of~$\struct A$.
Duplicator has to respond by placing~$b_i$ on an element of~$\struct
B$.  If the resulting partial map from~$\struct A$ to~$\struct B$
given by~$a_i \mapsto b_i$ is not a partial homomorphism, then Spoiler
has won the game.  In the \emph{$k$-pebble bijective game} Spoiler
chooses a pair of pebbles~$(a_i,b_i)$ and Duplicator has to respond by
giving a bijection~$f: \struct{A} \ra \struct{B}$ which agrees with
the map~$a_j \mapsto b_j$ for all~$j \neq i$.  Spoiler chooses a
pair~$(a,f(a))$ on which to place the pebbles~$(a_i,b_i)$.  Again, if
the resulting partial map from~$\struct A$ to~$\struct B$ given
by~$a_i \mapsto b_i$ is not a partial isomorphism, then Spoiler has
won the game.  In both games, we say Duplicator has a \emph{winning
  strategy} if, no matter how Spoiler plays, it can play forever
without losing.  The following summarises the connection between these
games and the relations~$\Cequiv{k}$, and~$\ELsub{k}$.  For any two
structures~$\struct{A}$ and~$\struct{B}$, the following
hold:~$\struct{A} \ELsub{k} \struct{B}$ if, and only if, Duplicator
has a winning strategy in the existential~$k$-pebble game played
on~$\struct{A}$ and~$\struct{B}$~\cite{KV90}; and~$\struct{A}
\Cequiv{k} \struct{B}$ if, and only if, Duplicator has a winning
strategy in the~$k$-pebble bijective game played on~$\struct{A}$
and~$\struct{B}$~\cite{Hel96}.

For undirected graphs, the relation~$\Cequiv{2}$ has a simple
combinatorial characterization in terms of \emph{vertex refinement}
(see~\cite{IL90}).  For any graph~$G$, there is a coarsest
partition~$C_1,\ldots,C_m$ of the vertices of~$G$ such that for
each~$1 \leq i,j \leq m$ there exists~$\delta_{ij}$ such that each~$v
\in C_i$ has exactly~$\delta_{ij}$ neighbours in~$C_j$.  Let~$H$ be
another graph and~$D_1,\ldots D_{m'}$ be the corresponding partition
of~$H$ with constants~$\gamma_{ij}$.  Then~$G \Cequiv{2} H$ if, and
only if,~$m = m'$ and there is a permutation~$h \in \mathrm{Sym}_m$
such that~$|C_i| = |D_{h(i)}|$ and~$\delta_{ij} = \gamma_{h(i)h(j)}$
for all~$i$ and~$j$.

All classes of structures we consider in this paper are assumed to be
closed under isomorphism.  Let~$\class{C}$ be such a class of
structures and for any~$n \in \NN$, let~$\class{C}_n$ denote the
structures in~$\class{C}$ with at most~$n$ elements.  The
\emph{counting width} of~$\class{C}$~\cite{DW-lics17} is the
function~$k: \NN \ra \NN$ where~$k(n)$ is the smallest value such that
for any~$\struct{A} \in \class{C}_n$ and any~$\struct{B} \not\in
\class{C}$, we have~$\struct{A} \not\Cequiv{k(n)} \struct{B}$.  Note
that~$k(n) \leq n$.  Because~$\struct{A} \not\Cequiv{1} \struct{B}$
whenever~$\struct{A}$ and~$\struct{B}$ have different numbers of
elements,~$k(n)$ is also the smallest value such that~$\class{C}_n$ is
a union of~$\Cequiv{k(n)}$-classes.  In particular, it follows that
the counting width of~$\class{C}$ is the same as that of its
complement.  For~$k: \NN \ra \NN$, we say that two disjoint
classes~$\class{C}$ and~$\class{D}$ are \emph{$\Ck$-separable} if
whenever~$\struct{A} \in \class{C}_n$ and~$\struct{B} \in
\class{D}_n$, then we have~$\struct{A} \not\Cequiv{k(n)} \struct{B}$.
Equivalently~$\class{C}$ and~$\class{D}$ are \emph{$\Ck$-separable} if
there is a class~$\class{E}$ of counting width ~$k$ such
that~$\class{C} \subseteq \class{E}$ and~${\class{D}} \subseteq
\overline{\class{E}}$.  To see that the two conditions are equivalent,
first suppose that whenever~$\struct{A} \in \class{C}_n$
and~$\struct{B} \in \class{D}_n$ we have~$\struct{A} \not\Cequiv{k(n)}
\struct{B}$.  Then, if we define~$\class{E}$ to be the set that
contains, for every~$n$, all structures of size~$n$ that
are~$\Cequiv{k(n)}$-equivalent to some structure in~$\class{A}$, it
witnesses the second condition.  In the other direction, if for
some~$n$, we have~$\struct{A} \in \class{C}_n$ and~$\struct{B} \in
\class{D}_n$ and~$\struct{A} \Cequiv{k(n)} \struct{B}$, then
any~$\class{E}$ with counting width~$k$ that contains~$\struct{A}$
must also contain~$\struct{B}$.

\subsection{Interpretations}  

Consider two vocabularies~$\sigma$ and~$\tau$.  A \emph{$d$-ary
  ~$\FO$-interpretation of~$\tau$ in~$\sigma$} is a sequence of
first-order formulas in vocabulary~$\sigma$ consisting of: (i) a
formula~$\delta(\vect{x})$; (ii) a formula~$\varepsilon(\vect{x},
\vect{y})$; (iii) for each relation symbol~$R \in \tau$ of arity~$k$,
a formula~$\phi_R(\vect{x}_1, \dots, \vect{x}_k)$; and (iv) for each
constant symbol~$c \in \tau$, a formula~$\gamma_c(\vect{x})$, where
each~$\vect{x}$,~$\vect{y}$ or~$\vect{x}_i$ is a~$d$-tuple of
variables.  We call~$d$ the \emph{dimension} of the interpretation.
If~$d = 1$, we say that the interpretaion is \emph{linear}.  We say
that an interpretation~$\Theta$ associates a~$\tau$-structure~$\Bb$ to
a~$\sigma$-structure~$\Aa$ if there is a map~$h$ from~$\{ \vect{a} \in
A^d \mid \Aa \models \delta[\vect{a}] \}$ to the universe~$B$ of~$\Bb$
such that: (i)~$h$ is surjective onto~$B$; (ii)~$h(\vect{a}_1) =
h(\vect{a}_2)$ if, and only if,~$\Aa\models \varepsilon[\vect{a}_1,
  \vect{a}_2]$; (iii)~$R^{\Bb}(h(\vect{a}_1), \dots, h(\vect{a}_k))$
if, and only if,~$\Aa \models \phi_R[\vect{a}_1, \dots, \vect{a}_k]$;
and (iv)~$h(\vect{a}) = c^{\Bb}$ if, and only if,~$\Aa \models
\gamma_c[\vect{a}]$.  Note that an interpretation~$\Theta$ associates
a~$\tau$-structure with~$\Aa$ only if~$\varepsilon$ defines an
equivalence relation on~$A^d$ that is a congruence with respect to the
relations defined by the formulae~$\phi_R$ and~$\gamma_c$. In such
cases, however,~$\Bb$ is uniquely defined up to isomorphism and we
write~$\Theta(\Aa) = \Bb$.  It is also worth noting that the size
of~$\Bb$ is at most~$n^d$, if~$\Aa$ is of size~$n$.  But, it may in
fact be smaller.  We call an interpretation~$p$-bounded, for a
polynomial~$p$, if~$|\Bb| \leq p(|\Aa|)$, and say the interpretation
is \emph{linearly bounded} if~$p$ is linear.  Every linear
interpretation is linearly bounded, but the converse is not
necessarily the case.

For a class of structures~$\class{C}$ and an interpretation~$\Theta$,
we write~$\Theta(\class{C})$ to denote the class~$\{\Theta(\struct{A})
\mid \struct{A} \in \class{C}\}$.  We mainly use interpretations to
define reductions between classes of structures.  These allow us to
transfer bounds on separability, by the following lemma.

\begin{lemma}\label{lem:reduction}
  Let~$\Theta$ be a~$p$-bounded interpretation of dimension~$d$ and
  let~$t$ be the maximum number of variables appearing in any formula
  of~$\Theta$.  If~$\class{C}$ and~$\class{D}$ are two disjoint
  classes of structures such that~$\Theta(\class{C})$
  and~$\Theta(\class{D})$ are~$C^{k(n)}$-separable, then~$\class{C}$
  and~$\class{D}$ are ~$C^{dk(p(n)) + t}$-separable.
\end{lemma}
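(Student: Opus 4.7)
The plan is to prove the lemma by pulling back a separating sentence through the interpretation. Given $\Aa \in \class{C}_n$ and $\Bb \in \class{D}_n$, the structures $\Theta(\Aa)$ and $\Theta(\Bb)$ belong to $\Theta(\class{C})$ and $\Theta(\class{D})$ respectively and each has at most $p(n)$ elements. By hypothesis they are not $\Cequiv{k(p(n))}$-equivalent, so some $\Ck$-sentence $\phi$ with $k=k(p(n))$ distinguishes them. It then suffices to construct, from $\phi$, a $\sigma$-sentence $\phi^*$ using at most $dk(p(n))+t$ distinct variables such that $\Cc \models \phi^*$ iff $\Theta(\Cc) \models \phi$ for every $\sigma$-structure $\Cc$ for which $\Theta(\Cc)$ is defined; applied to $\Aa$ and $\Bb$ this yields $\Aa \not\Cequiv{dk(p(n))+t} \Bb$.

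The construction of $\phi^*$ is by the standard substitution lemma for interpretations, which I would establish by induction on formulas. The claim is: for every $\tau$-formula $\phi(x_1,\ldots,x_r)$ of $C^m$ there is a $\sigma$-formula $\phi^*(\vect{x}_1,\ldots,\vect{x}_r)$, each $\vect{x}_i$ a $d$-tuple, using at most $dm+t$ distinct variables, such that for any tuples $\vect{a}_i$ with $\Aa\models\delta[\vect{a}_i]$,
\[
\Aa \models \phi^*[\vect{a}_1,\ldots,\vect{a}_r] \iff \Theta(\Aa) \models \phi[h(\vect{a}_1),\ldots,h(\vect{a}_r)].
\]
For an atomic formula $R(x_{i_1},\ldots,x_{i_k})$ we take $\phi_R(\vect{x}_{i_1},\ldots,\vect{x}_{i_k})$; for equality we take $\varepsilon$; for constants we use $\gamma_c$. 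Boolean combinations are handled in the obvious way. Each interpretation formula uses at most $t$ variables, and these can be reused across different atomic subformulas because they are bound and never escape the scope of the atomic step being translated, so they contribute only $t$ to the overall variable count.

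The delicate case is the counting quantifier $\Exi y\,\psi$: since elements of $\Theta(\Aa)$ correspond to $\varepsilon$-classes of $d$-tuples satisfying $\delta$, its translation must assert the existence of at least $i$ such classes of tuples $\vect{y}$ satisfying $\delta\wedge\psi^*$. This can be encoded by a Boolean combination of tuple-counts together with counts of the sizes of $\varepsilon$-neighbourhoods of satisfying tuples, all expressible in $\FOC$ over the $d$-tuples $\vect{y}$ using $\delta$, $\varepsilon$, and $\psi^*$. The free variables of $\psi^*$ other than $\vect{y}$ take up $d(m-1)$ variables; $\vect{y}$ itself takes $d$ variables; the auxiliary machinery from $\varepsilon$ and $\delta$ fits within the $t$ reusable variables. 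Hence the translation of $\Exi y\,\psi$ uses at most $dm+t$ variables, completing the induction.

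The main obstacle is the bookkeeping in this counting case: one must be careful that counting classes rather than tuples does not inflate the variable budget and that auxiliary variables in $\varepsilon$, $\delta$, $\phi_R$ are indeed reusable (by $\alpha$-renaming) between the different subformulas they occur in, so that they contribute an additive $t$ rather than multiplicative overhead. Once this is verified, instantiating the substitution lemma with $m=k(p(n))$ and applying it to the $\Ck$-sentence that separates $\Theta(\Aa)$ from $\Theta(\Bb)$ gives the desired separating sentence on $\Aa$ and $\Bb$ in at most $dk(p(n))+t$ variables, which is exactly the conclusion of the lemma.
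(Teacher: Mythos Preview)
Your proposal is correct and follows essentially the same approach as the paper: obtain a separating $C^{k(p(n))}$-sentence for $\Theta(\Aa)$ and $\Theta(\Bb)$, pull it back through the interpretation, and count variables to get the bound $dk(p(n))+t$. You are in fact more careful than the paper in noting that the translated counting quantifier must count $\varepsilon$-classes rather than raw $d$-tuples; the paper simply appeals to the well-known fact that counting quantifiers over tuples can be unwound into single-element counts without increasing the variable budget.
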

\begin{proof}
  Let~$\Aa \in \class{C}_n$ and~$\Bb \in \class{D}_n$ be two
  structures.  Then, since~$\Theta(\Aa)$ and~$\Theta(\Bb)$ have size
  at most~$p(n)$, there is a formula~$\phi \in C^{k(p(n))}$ such
  that~$\Theta(\Aa) \models \phi$ and~$\Theta(\Bb) \not\models \phi$.
  We compose~$\phi$ with the interpretation~$\Theta$ to
  obtain~$\phi'$.  That is to say, we replace every relation symbol by
  its defining formula, including replacing all occurrences of
  equality by~$\varepsilon$, and we relativize all quantifiers
  to~$\delta$.  Note that this involves replacing quantification over
  elements with quantification over tuples.  That is to say, we need
  assertions of the form ``there exist~$i$ tuples~$\vect{x}$ such that
  \ldots''.  It is well known that such counting quantifiers over
  tuples can be replaced by a series of counting quantifiers over
  single elements without increasing the total number of variables.
  Then~$\Aa \models \phi'$ and ~$\Bb \not\models \phi'$.  It is also
  easy to check that~$\phi'$ has at most~$dk(p(n)) + t$ variables.
  The multiplicative factor~$d$ comes from the fact that every
  variable in~$\phi$ is replaced by a ~$d$-tuple and the additive~$t$
  accounts for any other variables that may appear in the formulas
  of~$\Theta$.
\end{proof}

When we wish to define a reduction from a class ~$\class{C}$ by a
first-order interpretation, it suffices to give an
interpretation~$\Theta$ for all structures in~$\class{C}$ with at
least two elements (or, indeed, at least~$k$ elements for any
fixed~$k$).  This is because we can define an arbitrary map on a
finite set of structures by a first-order formula, so we just need to
take the disjunction of~$\Theta$ with the formula that defines the
required interpretation on the structures with one element.  With this
in mind, we define the method of \emph{finite expansions} which gives
us interpretations~$\Theta$ that take a structure~$\Aa$ with
universe~$A$ to a structure with a universe consisting of~$l$ labelled
disjoint copies of~$S$ for some definable subset~$S$ of~$A$.  Note
that~$\Theta$ would not, in general, be linear, but it is linearly
bounded.

So, fix a value~$l$, and let~$t$ be the least integer such that~$l
\leq 2^t$.  In a structure~$\Aa$ with at least two elements, we say
that a~$t+1$-tuple of elements~$(a_1,\ldots,a_{t+1})$ \emph{codes} an
integer~$i \in [2^t]$ if~$b_1\cdots b_t$ is the binary representation
of~$i-1$ and for all~$j\in [t]$ we have~$b_j = 1$ if, and only
if,~$a_{j+1} \neq a_{1}$.  For each~$i$, we can clearly define a
formula~$\gamma_i(\vect{y})$ with~$t+1$ free variables that defines
those tuples that code~$i$.  Now, for any formula~$\phi(x)$,
let~$\delta(x,\vect{y})$ be the formula~$\phi(x) \land \bigvee_{i \leq
  l} \gamma_i(\vect{y})$ and let~$\epsilon(x_1,\vect{y}_1,
x_2,\vect{y}_2)$ be the formula ~$$x_1 = x_2 \land \bigvee_{i}
\gamma_i(\vect{y}_1) \land \gamma_i(\vect{y}_2).$$ In other
words,~$\delta$ picks out those~$t+2$ tuples~$(s,\vect{a})$ where~$s$
satisfies~$\phi$ and~$\vect{a}$ codes an integer in~$[l]$,
and~$\epsilon$ identifies distinct tuples which have the same~$s$ and
the same integer~$l$.  An interpretation using these can be seen to
yield a structure with~$l$ disjoint copies of the set of elements
of~$\Aa$ satisfying~$\phi$.

\section{The Basic Gap Construction}

The problems~$\prob{3SAT}$ and~$\prob{3XOR}$ both ask to decide if a
formula consisting of the conjunction of Boolean constraints each on
exactly three Boolean variables is satisfiable. In~$\prob{3SAT}$ the
constraints are disjunctions of literals on three distinct
variables. In~$\prob{3XOR}$ the constraints are parities of three
distinct variables. Both problems are known to have unbounded counting
width~\cite{ABD09}: the class of satisfiable instances cannot be
separated in~$\Ck$, for bounded~$k$, from the class of unsatisfiable
ones. Our aim is to show that this result can be strengthened to show
that the class of satisfiable instances is not~$\Ck$-separable from
the class of instances that are \emph{highly unsatisfiable}, meaning
that no assignment to the variables can satisfy more than a
fraction~$s$ of the constraints for some fixed~$s \in (0,1)$.  We give
a basic construction for~$\prob{3XOR}$, based on that in~\cite{ABD09},
that establishes this for any~$s > 1/2$, with a lower bound on the
value of~$k$ that is linear in the number of variables in the
system. Then we use this construction to get one for~$\prob{3SAT}$ for
any~$s > 7/8$, also for a value of~$k$ that is linear in the number of
variables. In both cases, the constants~$1/2$ and~$7/8$ are known to
be optimal.

\subsection{Systems of constraints} \label{sec:encodings}

Let~$\Gamma$ be a finite set of relations over a finite domain~$D$,
also called a \emph{constraint language}. Let~$I = \{ c_1,\ldots,c_m
\}$ be a collection (multi-set) of constraints, each of the
form~$R(x_{i_1},\ldots,x_{i_k})$, where~$R$ is a~$k$-ary relation
in~$\Gamma$, and~$x_{i_1},\ldots,x_{i_k}$ are~$k$ distinct~$D$-valued
variables from a set~$x_1,\ldots,x_n$ of~$n$ variables. For~$c \in
[0,1]$, we say that the system~$I$ is~$c$-satisfiable if there is an
assignment~$f : \{x_1,\ldots,x_n\} \rightarrow D$ that satisfies at
least~$cm$ constraints; i.e., that
satisfies~$(f(x_{i_1}),\ldots,f(x_{i_k})) \in R$ for at least~$cm$
constraints~$R(x_{i_1},\ldots,x_{i_k})$ from~$I$.  Note that, as we
are counting the number of satisfied constraints, multiplicities
matter and this is why we have multi-sets rather than sets of
constraints.

We think of a system~$I = \{ c_1,\ldots,c_m \}$ over the constraint
language~$\Gamma$ as a finite structure in two ways. In the first
encoding, the universe is the disjoint union of~$x_1,\ldots,x_n$
and~$c_1,\ldots,c_m$. The vocabulary includes binary
relations~$E_1,E_2,\ldots$ such that~$E_i(x,c)$ holds if the
constraint~$c$ has arity~$i$ or more and~$x$ is the~$i$th variable
in~$c$. The vocabulary also includes a unary relation~$Z_R$ for each
relation~$R$ in~$\Gamma$ such that~$Z_R(c)$ holds if~$c$ is
an~$R$-constraint: a constraint of the
form~$R(x_{i_1},\ldots,x_{i_k})$ for some
variables~$x_{i_1},\ldots,x_{i_k}$, where~$k$ is the arity of~$R$. In
the second encoding, the universe is just the set of
variables~$x_1,\ldots,x_n$, and the vocabulary includes a~$k$-ary
relation symbol~$R$ for each~$k$-ary relation~$R$ in~$\Gamma$, such
that~$R(x_{i_1},\ldots,x_{i_k})$ holds if this is one of the
constraints in the collection~$c_1,\ldots,c_m$. Note that in this
second encoding the collection of constraints is treated as a set. In
particular, the multiplicity of constraints is lost, which could
affect its~$c$-satisfiability.

The constraint language~$\Gamma$ is also encoded as a finite structure
in two ways. In the first encoding the domain is~$D^{\leq r} = D \cup
D^2 \cup D^3 \cup \cdots \cup D^r$, where~$r$ is the maximal arity of
a relation in~$\Gamma$. The relations~$E_1,E_2,\ldots$ are interpreted
by the projections:~$E_i(b,(b_1,\ldots,b_k))$ holds for~$b \in D$
and~$(b_1,\ldots,b_k) \in D^k$ if, and only if,~$i \leq k$ and~$b =
b_i$. The relations~$Z_R$ are interpreted by the relation~$R$ itself
as a unary relation over the universe:~$Z_R((b_1,\ldots,b_k))$ holds
if~$k$ is the arity of~$R$ and~$(b_1,\ldots,b_k)$ belongs to~$R$. In
the second encoding, the universe is just~$D$, and the relation
symbol~$R$ is interpreted by~$R$ itself.  Where it causes no
confusion, we do not distinguish between a constraint
language~$\Gamma$ and the structure that encodes it, and similarly
between an instance~$I$ and its encoding structure.

It is easily seen that, in both encodings as finite structures, a
system~$I$ over~$\Gamma$ is satisfiable if, and only if, there is a
homomorphism from the structure that encodes~$I$ to the structure that
encodes~$\Gamma$.  We say that the system is \emph{$k$-locally
  satisfiable} if~$I \ELsub{k} \Gamma$.

For~$\prob{3SAT}$, the constraint language is
denoted~$\Gamma_{\prob{3SAT}}$. It has domain~$D = \{0,1\}$ and the
relations are the eight relations~$R_1,\ldots,R_8 \subseteq \{0,1\}^3$
defined by the eight possible clauses on three variables.
For~$\prob{3XOR}$, the constraint language is
denoted~$\Gamma_{\prob{3XOR}}$. It also has domain~$D = \{0,1\}$ and
the relations are the two relations~$R_0,R_1 \subseteq \{0,1\}^3$
defined by the two possible linear equations~$x + y + z = b$ with
three variables over~$\FF{2} = \{0,1\}$. Accordingly,~$\prob{3XOR}$
instances~$I$ can be identified with systems of linear
equations~$Ax=b$ over~$\FF{2}$. In the following,~$A$ and~$b$ are
referred to as the left-hand side matrix of~$I$ and right-hand side
vector of~$I$, respectively.

It is probably useful to spell out, in simple words, what it means for
a~$\prob{3SAT}$ or~$\prob{3XOR}$ instance to be~$k$-locally
satisfiable. Intuitively, what this means is that every set of less
than~$k$ variables induces a satisfiable subformula and that, in
addition, at least one of the satisfying assignments that exists
can be extended to any other variable, still satisfying the resulting
induced subformula, and itself satisfying the same type of extension
property. Strictly speaking this description is accurate only in the
\emph{second} encoding; in the first encoding one has to consider
sets of less than~$k$ variables \emph{and} clauses, and then the
correspondence with satisfying assignments with the extension property
is not as direct.  However, what is true and useful (and easy to see)
is that if a~$\prob{3SAT}$ or~$\prob{3XOR}$ instance is~$k$-locally
satisfiable in the first encoding then it is also~$k$-locally satisfiable in
the second encoding, while if it is~$3k$-locally satisfiable in the
second encoding, then it is also~$k$-locally satisfiable in the first
encoding.  We use the second of these claims in the proof of Lemma~\ref{lem:linear-local-bound} below
(where we also spell out the easy proof of it).

\subsection{Gap construction}

We now focus on~$\prob{3XOR}$ and hence on systems of linear equations
over~$\FF{2}$.

A starting point for us is the following construction which allows us
to convert any \emph{$k$-locally satisfiable} system of equations into
a pair of systems that are~$\Cequiv{k}$-indistinguishable.
See~\cite[Prop.~32]{ADW-lics17} for a related construction, which is
inspired by the proof in~\cite{ABD09} that satisfiability of systems
of linear equations over~$\FF{2}$ is not invariant under~$\Cequiv{k}$
for any~$k$.

For any instance~$I$ of~$\prob{3XOR}$ we define another
instance~$G(I)$ of~$\prob{3XOR}$ which has two variables~$x_j^0$
and~$x_j^1$ for each variable~$x_j$ of~$I$.  For each
equation~$x_j+x_k+x_l = b$ in~$I$, we have eight equations in~$G(I)$
given by the eight possible values of~$a_1,a_2,a_3 \in \{0,1\}$
in~$x_j^{a_1}+x_k^{a_2}+x_l^{a_3} = b + a_1 + a_2 + a_3$.  If~$I$ is
the system~$Ax=b$, then the homogeneous companion of~$I$ is the
system~$Ax=0$, which we denote~$I^0$. Note that the system~$G(I^0)$ is
satisfiable for any~$I$ by setting each variable~$x_j^a$ to~$a$. We
show that, despite this, as long as~$I$ is locally satisfiable,~$G(I)$
is hard to distinguish from its homogeneous companion~$G(I^0)$.

\begin{lemma}\label{lem:CFI}
  For every~$\prob{3XOR}$ instance~$I$ and every integer~$k \geq 3$,
  if~$I$ is~$k$-locally satisfiable, then~$G(I) \Cequiv{k} G(I^0)$.
\end{lemma}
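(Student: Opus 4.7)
The plan is to give a winning strategy for Duplicator in the $k$-pebble bijective game on $G(I)$ and $G(I^0)$, using the hypothesis $I \ELsub{k} \Gamma_{\prob{3XOR}}$ which gives a winning strategy for Duplicator in the existential $k$-pebble game on $I$ and $\Gamma_{\prob{3XOR}}$.

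The central notion is a \emph{compatible partial assignment}. For a pebble configuration $P$ on $G(I) \times G(I^0)$ of size at most $k$, a compatible assignment is a partial map $\sigma$ from the variables of $I$ to $\{0,1\}$, defined on every $x_j$ \emph{touched} by $P$ (either as $x_j^a$ or as a variable of a pebbled $G(I)$-constraint), such that (i) $\sigma$ satisfies every equation of $I$ whose corresponding $G(I)$-constraint is pebbled, and (ii) the pebble pairs are consistent with the variable twist $\tau_\sigma : x_j^a \mapsto x_j^{a \oplus \sigma_j}$ together with its induced action on constraints via $E_i$-preservation. The invariant I maintain is that the current pebble configuration admits a compatible assignment. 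I obtain such a $\sigma$ by projecting each pebble on $G(I)$ to a pebble on $I$ (variable to variable, $G(I)$-constraint to the equation it comes from) and invoking Duplicator's strategy in the existential game, whose partial homomorphism supplies $\sigma$ by reading off the values it assigns to the pebbled variables and the coordinates of the tuples it assigns to the pebbled constraints.

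When Spoiler designates pebble $i$ for relocation, Duplicator extracts the compatible assignment $\sigma$ of the remaining $k-1$ pebbles, extends it to a total map $\sigma^{\ast}$ on the variables of $I$, and responds with the bijection $\tau_{\sigma^{\ast}}$; after Spoiler chooses $a$, the new pair $(a, \tau_{\sigma^{\ast}}(a))$ becomes pebbled and a new compatible $\sigma'$ for the updated configuration must be produced. If $a$ is a variable, only an extension of $\sigma$ to its original variable is needed. If $a$ is a $G(I)$-constraint for an equation $e$, the extension must additionally satisfy $e$, which (for at most $k$ pebbles) is guaranteed by the extension property of Duplicator's winning strategy in the existential game on $I$.

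The main obstacle is the asymmetry of the bijective game, which forces Duplicator to commit to the bijection before seeing Spoiler's element: with a uniform choice of $\sigma^{\ast}$, Spoiler could trap Duplicator by pebbling a $G(I)$-constraint for an equation that $\sigma^{\ast}$ fails to satisfy. Overcoming this, in the spirit of the classical CFI construction, requires choosing the bijection in a possibly non-uniform way on the untouched portion of $G(I)$, so that for every possible Spoiler response the resulting pair is absorbed into some new compatible assignment. A Hall-type argument on the bipartite graph of valid pair extensions, combined with the fact that at most $k$ equations are ever simultaneously pebbled and the extension property cited above, should yield such a bijection and close the proof.
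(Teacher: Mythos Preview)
Your overall plan---simulate Duplicator's strategy in the existential $k$-pebble game on $(I,\Gamma_{\prob{3XOR}})$ and use it to drive the bijective game on $(G(I),G(I^0))$---is exactly what the paper does. But the ``main obstacle'' you identify is illusory, and the Hall-type argument is unnecessary machinery that you never actually carry out.

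The point you are missing is that the bijection need not come from a single total assignment $\sigma^{\ast}$. The paper works in the second encoding, so the universe of $G(I)$ consists only of the variables $x_j^a$; there are no constraint elements to worry about. From a position with $k-1$ pebbles in the existential game on $(I,\Gamma)$, Duplicator's winning strategy \emph{is} already a function $f$ from all variables of $I$ to $\{0,1\}$: set $f(x_l)$ to be Duplicator's reply were Spoiler to place the free pebble on $x_l$. This $f$ is generally not a satisfying assignment of $I$, and nothing requires it to be. Duplicator's bijection in the bijective game is simply $x_l^a \mapsto x_l^{a+f(x_l)}$. After Spoiler picks any single $x_l^a$, the new pebble pair corresponds exactly to the existential-game position obtained by pebbling $x_l$ with reply $f(x_l)$, which is winning by construction; that restores the invariant. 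No simultaneous coherence among the values $f(x_l)$ across different $l$ is needed, because only one of them becomes pebbled per round. Even if you insist on the first encoding, the same idea handles constraint elements: for $c_e^{a_1,a_2,a_3}$ use Duplicator's existential-game reply $(g_1,g_2,g_3)$ to pebbling the equation $e$ in $I$ (which satisfies $g_1+g_2+g_3=b_e$ and agrees with all currently pebbled variables of $e$), and send $c_e^{a_1,a_2,a_3}$ to the constraint in $G(I^0)$ indexed by $(a_1+g_1,a_2+g_2,a_3+g_3)$. Again only consistency with the $k-1$ existing pebbles is required, and the existential strategy provides exactly that.
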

\begin{proof}
We describe a strategy for Duplicator in the~$k$-pebble bijective game
played on~$G(I)$ and~$G(I^0)$, given a strategy in the
existential~$k$-pebble game on~$I$ and~$\Gamma =
\Gamma_{\prob{3XOR}}$.

Suppose we have a position in the existential~$k$-pebble game on~$I$
and~$\Gamma$ with pebbles on~$x_1,\ldots,x_{k'}$, for some~$k' \leq k$
in~$I$, and corresponding pebbles on~$v_1,\ldots,v_{k'} \in \{0,1\}$
in~$\Gamma$.  Suppose further that this is a winning position for
Duplicator, i.e.\ she has a strategy to play forever from this
position.  Then, we claim that the position in the bijective game
where the pebbles in~$G(I)$ are on~$x_1^{a_1},\ldots,x_{k'}^{a_{k'}}$,
for some~$a_1,\ldots,a_{k'} \in \{0,1\}$ and the matching pebbles
in~$G(I^0)$ are on~$x_1^{a_1+v_1},\ldots,x_{k'}^{a_{k'}+v_{k'}}$ is a
winning position in the bijective game on these two structures.  To
see this, note first that, if~$x_r+x_s+x_t=b_i$ is an equation in~$I$,
for~$1 \leq r,s,t \leq k'$, then by assumption that the position is
winning in the existential game,~$v_r+v_s+v_t = b_i$.
Hence,~$x_r^{a_r}+x_s^{a_s}+x_t^{a_t} = b_i$ is an equation in~$G(I)$
if, and only if,~$x_r^{a_r}+x_s^{a_s}+x_t^{a_t} = 0$ is an equation
in~$G(I^0)$ if, and only
if,~$x_r^{a_r+v_r}+x_s^{a_s+v_s}+x_t^{a_t+v_t} = v_r+v_s+v_t$ is an
equation in~$G(I^0)$, but this last equation
is~$x_r^{a_r+v_r}+x_s^{a_s+v_s}+x_t^{a_t+v_t} = b_i$.  Thus, the map
from~$x_1^{a_1},\ldots,x_{k'}^{a_{k'}}$
to~$x_1^{a_1+v_1},\ldots,x_{k'}^{a_{k'}+v_{k'}}$ is a partial
isomorphism.  To see that Duplicator can maintain the condition,
suppose Spoiler moves the pebbles on~$(x^{a_j}_j,x^{a_j+v_j}_j)$.  By
assumption, Duplicator has a response in the existential game whenever
Spoiler moves the pebble from~$x_j$ to~$x_l$.  This response defines a
function~$f$ from the variables in~$x$ to~$\{0,1\}$.  We use this to
define the bijection taking~$x_l^a$ to~$x_l^{a+f(x_l)}$.  This is a
winning move in the bijective game.
\end{proof}

As far as the degree of satisfiability is concerned, the construction
preserves a gap in the following quantifiable terms:

\begin{lemma}\label{lem:both-lift}
For every~$\prob{3XOR}$ instance~$I$ and every~$c,s \in [0,1]$, the
following hold:
\begin{enumerate}\itemsep=0pt 
\item if~$I$ is~$c$-satisfiable, then~$G(I)$ is~$c$-satisfiable,
\item if~$I$ is not~$s$-satisfiable, then~$G(I)$ is
  not~$(1/2+s/2)$-satisfiable.
\end{enumerate}
\end{lemma}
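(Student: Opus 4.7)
The plan is to treat the two parts separately: part~1 by an explicit lift of a satisfying assignment, and part~2 by a probabilistic extraction from the contrapositive.

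For part~1, starting from an assignment $\alpha$ of $I$ satisfying at least $cm$ constraints, I would set $\beta(x_j^a) := \alpha(x_j) + a$. A one-line algebraic check shows that for every $I$-equation $e\colon x_j+x_k+x_l=b$, each of the eight derived $G(I)$-equations is satisfied by $\beta$ if and only if $\alpha$ satisfies $e$, since the $a_i$'s appearing on the left cancel with those on the right. Hence $\beta$ satisfies $8\cdot|\{e:\alpha\models e\}|\ge 8cm$ of the $8m$ equations of $G(I)$, yielding the desired fraction~$c$.

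For part~2, I would argue the contrapositive. Given $\beta$ satisfying fraction at least $1/2+s/2$ of the equations of $G(I)$, the goal is to produce an assignment of $I$ satisfying at least $sm$ equations. The key reformulation is to define, for each variable $x_j$, the auxiliary function $f_j\colon\{0,1\}\to\{0,1\}$ by $f_j(a) := \beta(x_j^a)+a$. A short calculation shows that the $G(I)$-equation indexed by $(a_1,a_2,a_3)$ for $e\colon x_j+x_k+x_l=b$ is satisfied by $\beta$ exactly when $f_j(a_1)+f_k(a_2)+f_l(a_3)=b$. I would then pick $r_j\in\{0,1\}$ uniformly and independently for each $j$ and define the random assignment $\alpha_r(x_j) := f_j(r_j)$. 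For each fixed $e$, the probability that $\alpha_r\models e$ equals the fraction of triples $(a_1,a_2,a_3)\in\{0,1\}^3$ satisfying $f_j(a_1)+f_k(a_2)+f_l(a_3)=b$, which is one-eighth of the number of $G(I)$-equations derived from $e$ that are satisfied by $\beta$. Summing over $e$ and applying linearity of expectation, the expected number of $I$-equations satisfied by $\alpha_r$ equals one-eighth of the total number of $G(I)$-equations satisfied by $\beta$, hence at least $(1/2+s/2)m\ge sm$ (using $s\le 1$). Any sample $r$ reaching the expectation provides the required assignment.

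The main conceptual obstacle is in part~2: a naive deterministic projection such as $\alpha(x_j):=\beta(x_j^0)$ can lose too much, because a single $I$-equation's eight copies in $G(I)$ need not share a common truth value under $\beta$. The randomized (equivalently, averaging) step is what redistributes $\beta$'s satisfactions evenly across the eight copies and transfers them coherently back to $I$.
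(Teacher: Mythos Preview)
Your proof of part~1 is the same as the paper's. For part~2 your argument is correct but takes a genuinely different route. The paper uses precisely the deterministic projection $h(x_j) := g(x_j^0)$ that you set aside in your final paragraph: a two-case analysis (on whether $g(x_t^0) = g(x_t^1)$ for some variable $x_t$ occurring in the equation) shows that whenever $h$ falsifies an equation of $I$, at least four of the eight derived equations in $G(I)$ are falsified by $g$, and the bound then follows by counting. Your averaging argument via the uniform independent choices $r_j$ is cleaner and in fact proves more: since the expected fraction satisfied by $\alpha_r$ equals exactly the fraction satisfied by $\beta$, you get that the optimal satisfiable fraction of $G(I)$ equals that of $I$, whereas the paper's case analysis only yields the weaker implication that if $G(I)$ is $(1/2+s/2)$-satisfiable then $I$ is $s$-satisfiable. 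So your remark that the naive projection ``can lose too much'' is a misperception---it does suffice for the lemma as stated---but your randomized argument is both valid and slightly sharper. One small point worth making explicit in your write-up: the step equating $\Pr[\alpha_r\models e]$ with the fraction of triples $(a_1,a_2,a_3)$ relies on the three variables of each \prob{3XOR} equation being distinct, so that $r_j,r_k,r_l$ are indeed independent.
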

\begin{proof}
For proving 1, let~$h: \{x_1,\ldots,x_n\} \rightarrow \{0,1\}$ be an
assignment of values to the variables of~$I$ that satisfies at
least~$cm$ of the~$m$ equations in~$I$.  Define the assignment~$g$ on
the variables of~$G(I)$ by~$g(x^a) = h(x) + a$.  For each equation~$e$
satisfied by~$h$, all eight equations arising from~$e$ are satisfied
by~$g$ and so~$g$ satisfies at least~$8cm$ of the~$8m$ equations
in~$G(I)$.
  
For proving 2, suppose~$g$ is an assignment of values in~$\{0,1\}$ to
the variables~$x^a_i$ in~$G(I)$.  Let~$h: \{x_1,\ldots,x_n\} \ra
\{0,1\}$ be the assignment defined by~$h(x_j) = g(x^0_j)$. We claim
that if~$e_i$ is an equation~$x_j + x_k + x_l = b$ in~$I$ that is not
satisfied by~$h$ then at least four of the eight equations in ~$G(I)$
arising from~$e_i$ are falsified by~$g$.  To see this, consider two
cases.  First, suppose that~$g(x_t^0) = g(x_t^1)$ for some~$t \in
\{j,k,l\}$.  Without loss of generality, we assume~$t = j$.  Then
consider the four \emph{pairs} of equations
  \begin{align*}
  & x_j^0 + x_j^{a_1} + x_k^{a_2} = b_i + a_1 + a_2, \\
  & x_j^1 + x_j^{a_1} + x_k^{a_2} = b_i + a_1 + a_2 + 1 
  \end{align*}
obtained by taking the four possible values of~$a_1$ and~$a_2$.
Since~$g(x_j^0) = g(x_j^1)$, if one equation in a pair is satisfied
by~$g$ the other is necessarily falsified.  Thus, at least four
equations are falsified.  For the second case, suppose that for
each~$t \in \{j,k,l\}$ occurring in~$e_i$ we have~$g(x_t^0) \neq
g(x_t^1)$.  But then, since we assume that~$h$ falsifies~$e_i$, it
follows that~$g$ falsifies~$x_j^0+ x_k^0+x_l^0 = b$ and hence it
falsifies all eight equations arising from~$e_i$.  In either case,~$g$
falsifies at least four of the equations arising from~$e_i$.

Now, suppose that~$g$ satisifes at least~$(1/2+s/2)\cdot 8m$ of
the~$8m$ equations in~$G(I)$.  We claim that~$h$ satisfies at
least~$sm$ equations in~$I$.  Suppose for contradiction that~$h$
falsifies a proportion~$\epsilon > 1 - s$ of the equations.  By the
above argument, then~$g$ falsifies at least~$4\epsilon m$ of the
equations in~$G(I)$.  But~$4 \epsilon m > (1/2-s/2)\cdot 8m$
contradicting the assumption that~$g$ satisfies at least~$(1/2+s/2)
\cdot 8 m$ equations.
\end{proof}

The extreme cases of Lemma~\ref{lem:both-lift} are given by~$c = 1$
for point~\emph{1}, and~$s = 1/2+\epsilon$ with~$\epsilon > 0$ for
point~\emph{2}.  Indeed, every~$\prob{3XOR}$ instance
is~$1/2$-satisfiable, as witnessed by the all-zero assignment, or the
all-one assignment, whichever satisfies more equations. Note also that
point~\emph{1} of Lemma~\ref{lem:both-lift} preserves its extremality:
if~$I$ is satisfiable, then so is~$G(I)$. However, point~\emph{2} does
not preserve its extremality, since even if~$I$ is
not~$(1/2+\epsilon)$-satisfiable, the best that can be claimed
about~$G(I)$ is that it is not~$(3/4 + \epsilon/2)$-satisfiable.  In
the following we show that if the vector~$b$ is chosen uniformly at
random, then both instances~$Ax=b$ and~$G(Ax=b)$ are at
most~$(1/2+\epsilon)$-satisfiable, with high probability, provided the
matrix~$A$ has at least a constant-factor more rows than columns.

\begin{lemma} \label{lem:both-extremal}
For every two reals~$\epsilon > 0$ and~$\delta > 0$ there exists an
integer~$c > 0$ such that for every sufficiently large integer~$n$ and
every matrix~$A \in \{0,1\}^{m \times n}$, where~$m = cn$ and each row
of~$A$ has exactly three ones, if~$b$ is chosen uniformly at random
in~$\{0,1\}^m$ then, with probability at least~$1-\delta$,
both~$\prob{3XOR}$ instances~$Ax=b$ and~$G(Ax=b)$ are at
most~$(1/2+\epsilon)$-satisfiable.
\end{lemma}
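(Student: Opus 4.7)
The plan is a standard first-moment / union-bound argument: for each fixed assignment, the satisfaction rate is concentrated around $1/2$ in $b$, and there are only exponentially many assignments to union-bound over, so provided $m = cn$ with $c$ large enough the union bound is dominated by the Chernoff tail. I would prove both claims in parallel, handling $Ax=b$ and $G(Ax=b)$ by the same schema, differing only in bookkeeping.

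For the first instance $Ax=b$, fix an assignment $x \in \{0,1\}^n$. Each equation in $Ax=b$ is satisfied iff $b_i$ equals the fixed bit $(Ax)_i$, so the indicator $X_i$ of this event is Bernoulli($1/2$), and the $X_i$ are independent since the $b_i$ are independent. By a standard Chernoff bound, $\Pr\bigl[\sum_i X_i \ge (1/2+\epsilon)m\bigr] \le \exp(-2\epsilon^2 m)$. A union bound over the $2^n$ possible assignments gives that the probability that \emph{some} assignment satisfies more than a $(1/2+\epsilon)$-fraction is at most $2^n \exp(-2\epsilon^2 cn)$, which tends to $0$ as $n\to\infty$ as soon as $c > (\ln 2)/(2\epsilon^2)$.

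For $G(Ax=b)$, fix an assignment $g$ on the $2n$ variables $x_j^a$. Now each original row $i$ (with variables $x_j,x_k,x_l$) contributes eight equations $x_j^{a_1}+x_k^{a_2}+x_l^{a_3} = b_i + a_1+a_2+a_3$. For fixed $g$, the number $S_i$ of these eight equations that are satisfied depends only on $b_i$: if $N_i$ denotes the number satisfied when $b_i=0$, then exactly $8-N_i$ are satisfied when $b_i=1$, so $S_i$ takes values $N_i$ and $8-N_i$ each with probability $1/2$, and in particular $\mathbb{E}[S_i]=4$. The $S_i$ are independent (different $b_i$) and bounded in $[0,8]$, so Hoeffding's inequality gives $\Pr\bigl[\sum_i S_i \ge (1/2+\epsilon)\cdot 8m\bigr] \le \exp(-2\epsilon^2 m)$. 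A union bound over the $2^{2n}$ assignments $g$ yields probability at most $2^{2n}\exp(-2\epsilon^2 cn)$, which again tends to $0$ provided $c > (\ln 2)/\epsilon^2$.

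Finally, I would choose $c$ large enough (say $c$ an integer exceeding $(\ln 2)/\epsilon^2 + 1$, and then $n$ large enough depending on $\delta$) so that each of the two bounds is at most $\delta/2$; a final union bound over the two bad events gives the claimed probability $\ge 1-\delta$. The only mildly delicate point is the $G(I)$ case, where the $2n$ variables force the union-bound factor $2^{2n}$ rather than $2^n$, so one must remember to pick $c$ using the worst of the two requirements; there is no real obstacle, since the exponent $2\epsilon^2 cn$ grows linearly in $c$ whereas the entropy term is fixed at $2n\ln 2$.
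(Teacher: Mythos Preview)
Your proof is correct and follows essentially the same approach as the paper: Hoeffding for each fixed assignment, then a union bound over all assignments, with~$c$ chosen so that the Chernoff tail beats the entropy term. Your symmetry argument that flipping~$b_i$ complements the set of satisfied equations in the~$i$th block (so~$\mathbb{E}[S_i]=4$ automatically) is slightly slicker than the paper's explicit case analysis, and you correctly use~$2^{2n}$ for the union bound over assignments to~$G(Ax=b)$, which the paper's write-up glosses as~$2^n$; but the structure and constants are otherwise the same.
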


\begin{proof} 
Fix~$\epsilon > 0$ and~$\delta > 0$ and let~$c$ be any integer bigger
than~$1/\epsilon^2$. Let~$n$ be sufficiently large, let~$m = cn$, and
let~$A \in \{0,1\}^{U \times V}$ be any matrix with~$U = [m]$ and~$V =
[n]$ that has exactly three ones in each row. For each ~$b = (b_u)_{u
  \in U} \in \{0,1\}^U$, the instance~$Ax=b$ has one variable~$x_v$
for each~$v \in V$ and one equation~$e_u : x_{v_1(u)} + x_{v_2(u)} +
x_{v_3(u)} = b_u$ for each~$u \in U$, where ~$v_1(u),v_2(u),v_3(u) \in
V$ are the three columns of~$A$ that have ones in row~$u$.  The
instance~$G(Ax=b)$ has three variables~$x_v^a$ for each~$v \in V$ and
eight equations~$e_u^{a_1,a_2,a_3}$ for each ~$u \in U$.

For each assignment~$f : \{ x_v : v \in V \} \rightarrow \{0,1\}$ for
the variables of~$Ax=b$ and each~$u \in U$, let~$X_{f,u}$ be the
indicator random variable for the event that~$f(x_{v_1(u)}) +
f(x_{v_2(u)}) + f(x_{v_3(u)}) = b_u$; i.e., for the event that~$f$
satisfies the equation~$x_{v_1(u)} + x_{v_2(u)} + x_{v_3(u)} =
b_u$. The probability of this event is~$1/2$, and all such events,
as~$u$ ranges over~$U$, are mutually independent. Thus, setting~$X_f =
\sum_{u \in U} X_{f,u}$, we have that~$X_f$ is a binomial random
variable with expectation~$\mathbb{E}[X_f] = m/2$. By Hoeffding's
inequality, the probability that~$X_f - \mathbb{E}[X_f] \geq t$ is at
most~$e^{-2t^2/m}$. In particular, the probability that~$X_f \geq
(1/2+\epsilon)m$ is at most~$e^{-2\epsilon^2 m}$.  By the union bound,
the probability that some~$f$ satisfies~$X_f \geq (1/2+\epsilon)m$ is
at most~$2^n e^{-2\epsilon^2 m}$.
  
Similarly, for each assignment~$f : \{ x_v^a : v \in V,\; a \in
\{0,1\} \} \rightarrow \{0,1\}$ for the variables of~$G(Ax=b)$ and
each~$u \in U$, let~$Y_{f,u} \in [0,1]$ be the fraction of equations
of~$G(Ax=b)$ among those that come from~$u$ that are satisfied by~$f$;
i.e., precisely,~$Y_{f,u}$ is~$1/8$-th of the number of
triples~$(a_1,a_2,a_3) \in \{0,1\}^3$ for which the
equality~$f(x_{v_1(u)}^{a_1}) + f(x_{v_2(u)}^{a_2}) +
f(x_{v_3(u)}^{a_3}) = b_u + a_1 + a_2 + a_3$ holds. We claim that the
expectation of the random variable~$Y_{f,u}$ is~$1/2$. To see this,
consider two cases: 1)~$f(x_{v_j(u)}^{0}) \not= f(x_{v_j(u)}^{1})$ for
all~$j \in \{1,2,3\}$, and 2)~$f(x_{v_j(u)}^{0}) = f(x_{v_j(u)}^{1})$
for some~$j \in \{1,2,3\}$. In case 1), either all eight equations
that come from~$u$ are satisfied, or none is, and each possibility
happens with probability~$1/2$ according to the outcome of the random
choice of~$b_u$. The expectation of~$Y_{f,u}$ is thus~$1/2$ in this
case. In case 2), exactly half of the eight equations that come from
~$u$ are satisfied, and which half depends on the outcome of the
random choice of~$b_u$. The expectation of~$Y_{f,u}$ is thus~$1/2$
also in this case. This shows that the expectation of~$Y_{f,u}$ is
~$1/2$ in either case. Moreover, the random variables~$Y_{f,u}$, as
~$u$ ranges over~$U$, are mutually independent. Thus, setting~$Y_f =
(1/m)\sum_{u \in U} Y_{f,u}$, we have that~$Y_f$ is the average of
~$m$ independent random variables with range in~$[0,1]$. By
Hoeffding's inequality, the probability that~$Y_f - \mathbb{E}[Y_f]
\geq t$ is at most~$e^{-2t^2 m}$. In particular, the probability that
~$Y_f \geq 1/2+\epsilon$ is at most~$e^{-2\epsilon^2 m}$. By the union
bound, the probability that some~$f$ satisfies~$Y_f \geq 1/2+\epsilon$
is at most~$2^n e^{-2\epsilon^2 m}$.

Since~$m = cn$ and~$c > 1/\epsilon^2$, twice~$2^n e^{-2\epsilon^2 m}$
is at most~$2^{n+1} e^{-2n}$ and is less than~$\delta$ for all
sufficiently large values of~$n$. Thus, for any large enough~$n$, the
probability that both~$Ax=b$ and~$G(Ax=b)$ are at
most~$(1/2+\epsilon)$-satisfiable is at least~$1-\delta$.
\end{proof}

The next step is to show that an appropriate choice of the 
matrix~$A$ will give a locally satisfiable instance~$Ax=b$ for any
right-hand side~$b$.  Entirely analogous claims have been known and
proved in the context of the proof complexity of propositional
resolution; indeed, our proof builds on the methods for resolution
width \cite{BenSassonWigderson2001}, and their relationship to
existential pebble games from \cite{Atserias2004, AtseriasDalmau2008}.

In the proof, we need the notion of a graph~$G$ that is a
\emph{bipartite unique-neighbour expander graph with
  parameters~$(m,n,d,s,\beta)$} where~$m,n,d$ and~$s$ are integer
parameters with~$s < n$ and~$\beta$ is a positive real number.  What
this means is that~$G$ is a bipartite graph with parts~$U$ and~$V$
with~$m$ and~$n$ vertices respectively; each~$u \in U$ has exactly~$d$
neighbours in~$V$; and for every~$T \subseteq U$ with~$|T| \leq s$ we
have~$|\partial T| \geq \beta |T|$, where~$\partial T$ denotes the set
of vertices in~$V$ that are \emph{unique neighbours} of~$T$; i.e.,
they are neighbours of a single vertex in~$T$.

\begin{lemma} \label{lem:linear-local-bound} For every integer~$r > 0$
there is a real~$\gamma > 0$ such that for every sufficiently large
integer~$n$ there is a matrix~$A \in \{0,1\}^{m \times n}$,
where~$m=rn$, such that each row of~$A$ has exactly three ones and,
for every vector~$b \in \{0,1\}^m$, the~$\prob{3XOR}$ instance~$Ax=b$
is~$k$-locally satisfiable for~$k \leq \gamma n$.
\end{lemma}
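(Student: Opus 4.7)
The plan is to take $A$ to be the biadjacency matrix of a bipartite graph with left-degree $3$ and good unique-neighbour expansion; use the expansion to show that every bounded subsystem of $Ax=b$ is satisfiable; use this satisfiability to build a Duplicator winning strategy in the existential pebble game played on $Ax=b$ and $\Gamma_{\prob{3XOR}}$ in the second encoding; and finally invoke the $3k$-to-$k$ encoding conversion recalled just before the lemma to transfer the conclusion to the first encoding.

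For the expander, I would use a standard probabilistic construction: for every $r > 0$ one exhibits constants $\alpha > 0$ and $\beta > 0$ such that, for all sufficiently large $n$, a random bipartite graph with parts $U$ of size $m = rn$ and $V$ of size $n$, where every $u \in U$ has exactly three uniformly random neighbours in $V$, satisfies $|\partial T| \geq \beta |T|$ for all $T \subseteq U$ with $|T| \leq \alpha n$ with probability tending to $1$. This is a routine first-moment calculation in the spirit of~\cite{BenSassonWigderson2001}. Fix one such graph and let $A$ be its $0/1$ biadjacency matrix; then $A$ is $m \times n$ with exactly three ones in each row. The expansion immediately yields that every subsystem $A_T x = b_T$ with $|T| \leq \alpha n$ is satisfiable, by induction on $|T|$: if $T \neq \emptyset$ expansion produces $y \in \partial T$, i.e., a variable appearing in exactly one constraint $e \in T$; by induction $A_{T \setminus \{e\}} x = b_{T \setminus \{e\}}$ has a satisfying assignment, and since $y$ does not appear in $T \setminus \{e\}$, this assignment is extended by choosing $y$ so that $e$ is satisfied (any other variable of $e$ not already assigned may be set freely).

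The main technical step, and the chief obstacle, is converting subsystem satisfiability into a winning Duplicator strategy in the existential $k'$-pebble game on (the second encoding of) $Ax=b$ and $\Gamma_{\prob{3XOR}}$ for some $k' = \Omega(n)$. Here I would adapt the standard correspondence of~\cite{Atserias2004,AtseriasDalmau2008}, which builds on the resolution width argument of~\cite{BenSassonWigderson2001}, specialised to linear systems over $\FF{2}$: the expansion of $A$ implies that no resolution-style refutation of $Ax=b$ has width below $\Omega(\alpha n)$, and this width lower bound in turn yields a winning Duplicator family of partial assignments in the existential pebble game. The delicate point is defining the family $\mathcal{F}$---roughly, partial assignments $f\colon S \to \{0,1\}$ with $|S| \leq k'$ that extend to a satisfying assignment of every subsystem $A_T x = b_T$ with $|T|$ below a suitable threshold---so that the extension property at a fresh variable $y$ can be verified by merging the two hypothetical witnesses of failure for $y \mapsto 0$ and $y \mapsto 1$ into a single subsystem still small enough for subsystem satisfiability to apply, while keeping the threshold large enough to guarantee $k' = \Omega(n)$. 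Once Duplicator's $k'$-strategy in the second encoding has been produced, $Ax=b$ is $k'$-locally satisfiable in the second encoding, and the $3k$-to-$k$ conversion recalled before the lemma yields $k$-local satisfiability of $Ax=b$ in the first encoding for $k \leq \lfloor k'/3 \rfloor$; setting $\gamma = k'/(3n) > 0$ completes the proof.
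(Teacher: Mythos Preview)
Your approach is essentially the same as the paper's: both take $A$ to be the biadjacency matrix of a bipartite unique-neighbour expander, derive satisfiability of every subsystem of size at most $\alpha n$ by induction using unique neighbours, and then appeal to the Ben-Sasson--Wigderson width bound together with the Atserias--Dalmau correspondence to produce a Duplicator strategy. The paper executes this last step a little differently from your sketch: rather than adapting the width/pebble argument directly to $\prob{3XOR}$, it rewrites each equation as four 3-clauses to obtain a $\prob{3CNF}$ formula $F$, applies Theorem~5.9 of~\cite{BenSassonWigderson2001} and Theorem~2 of~\cite{AtseriasDalmau2008} to $F$ as black boxes (so Duplicator wins the existential $3k$-pebble game on $F$ versus $\Gamma_{\prob{3SAT}}$ in the second encoding), and then simulates this side game to answer moves in the existential $k$-pebble game on $Ax=b$ versus $\Gamma_{\prob{3XOR}}$ in the \emph{first} encoding, mapping each pebbled equation to its three variables. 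This avoids re-proving the width argument and also bypasses the $3k$-to-$k$ encoding conversion you invoke at the end.

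One caution about your direct route: the ``merging'' step as you describe it does not close. If $\mathcal{F}$ consists of partial assignments that extend to satisfy every subsystem of size at most $t$, and $T_0,T_1$ are the failure witnesses for $y\mapsto 0$ and $y\mapsto 1$, then indeed $f$ fails to extend over $T_0\cup T_1$, but $|T_0\cup T_1|$ may be $2t$, which does not contradict $f\in\mathcal{F}$. To make a direct argument work you must exploit the linear structure: over $\FF{2}$ a failure witness is a single derived equation $\sum_{u\in T'} a_u \cdot x = \sum_{u\in T'} b_u$ supported on $\mathrm{dom}(f)\cup\{y\}$, and unique-neighbour expansion forces $|T'|\leq (|\mathrm{dom}(f)|+1)/\beta$; summing the two derived equations cancels $y$ and yields a small-support witness against $f$ itself. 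Alternatively, just take the paper's black-box route through $\prob{3SAT}$.
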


\begin{proof}
Fix an integer~$r > 0$ and reals~$\alpha > 0$ and~$\beta > 0$, and
let~$n_0$ be sufficiently large that for every~$n \geq n_0$ there
exists a graph~$G$ that is a bipartite unique-neighbour expander graph
with parameters~$(rn,n,3,\alpha n, \beta)$.  For the existence of such
graphs with these parameters see
\cite[Chapter~4]{VadhanSurvey}. Let~$A \in \{0,1\}^{U \times V}$ be
the incidence matrix of~$G$, where~$U = [m]$ and~$V = [n]$ are the two
sides of~$G$, for~$m = cn$.  For each ~$b = (b_u : u \in U) \in
\{0,1\}^U$, the~$\prob{3XOR}$ instance ~$Ax=b$ has one variable~$x_v$
for each~$v \in V$, and one equation ~$e_u :
x_{v_1(u)}+x_{v_2(u)}+x_{v_3(u)} = b_u$ for each~$u \in U$,
where~$v_1(u),v_2(u),v_3(u)$ are the three neighbours of~$u$ in~$G$.
We claim that every choice of~$b \in \{0,1\}^U$ gives that~$Ax = b$
is~$k$-locally satisfiable for~$k \leq \gamma n$ with~$\gamma =
\alpha\beta/9$.

\begin{claim} \label{claim:minimally}
For every~$b \in \{0,1\}^U$, every set of at most~$\alpha n$
equations from~$Ax=b$ is satisfiable.
\end{claim}

\begin{proof}
For each~$T \subseteq U$, let~$e_T$ be the set of equations that are
indexed by vertices in~$T$, and let~$v_T$ be the set of variables that
appear in~$e_T$.  We prove, by induction on~$t \leq \alpha n$, that
if~$T \subseteq U$ and~$|T|=t$, then there exists an assignment that
sets all the variables in~$v_R$ and that satisfies all the equations
in~$e_T$.  For~$t=0$ the claim is obvious.  Assume now that~$1 \leq t
\leq \alpha n$ and let~$T$ be a subset of~$U$ of
cardinality~$t$. Then~$|\partial T| \geq \beta|T| > 0$. Let~$v_0$ be
some element in~$\partial T$ and let~$u_0 \in T$ be the unique
neighbour of~$v_0$ in~$T$.  The induction hypothesis applied to~$S = T
\setminus \{u_0\}$ gives an assignment~$g$ that sets all the variables
in~$v_S$ and satisfies all the equations in~$e_S$.  The assignment~$g$
may assign some of the variables of the equation~$e_{u_0}$, but not
all, since~$v_0$ is not a neighbour of any vertex in~$S$. Let~$f$ be
the unique extension of~$g$ that first sets all the variables in~$v_T
\setminus (v_S \cup \{x_{v_0}\})$ to~$0$, and then sets~$x_{v_0}$ to
the unique value that satisfies the equation~$e_{u_0}$. This
assignment sets all the variables in~$v_T$ and satisfies all the
equations in~$e_T$. The proof is complete.
\end{proof}

\begin{claim}
For every~$b \in \{0,1\}^U$ and~$k \leq \gamma n$, the instance~$I$
is~$k$-locally satisfiable.
\end{claim}

\begin{proof}
If~$I$ is satisfiable, then Duplicator certainly has a winning
strategy and there is nothing to prove. Assume then that~$I$ is
unsatisfiable and let~$I'$ be a minimally unsatisfiable subsystem; a
subset of the equations of~$I$ that is unsatisfiable and every proper
subset of it is satisfiable.  For each equation~$e_u : x_{v_1(u)} +
x_{v_2(u)} + x_{v_3(u)} = b_u$ of~$I$, let~$F_u$ be the four
clauses~$\{x_{v_1(u)}^{(a_1)},x_{v_2(u)}^{(a_2)},x_{v_3(u)}^{(a_3)}\}$
with~$a_1,a_2,a_3 \in \FF{2}$ with~$a_1+a_2+a_3=b_u$, where~$z^{(a)}$
stands for the negative literal~$\neg z$ if~$a = 0$ and the positive
literal~$z$ if~$a = 1$. Let~$F$ be the 3CNF formula that is the union
of all the~$F_u$ as~$u$ ranges over~$U$. Observe that~$F$ is an
unsatisfiable 3CNF. We intend to apply Theorem~5.9
from~\cite{BenSassonWigderson2001} to it.

Let~$\mathscr{A}$ be the collection of all Boolean functions~$f_u :
\{0,1\}^V \rightarrow \{0,1\}$ defined by
\begin{equation*}
f_u(x_v : v \in V) = x_{v_1(u)} + x_{v_2(u)} + x_{v_3(u)} + b_u \mod
2,
\end{equation*}
for~$u \in U$. Each function in~$\mathscr{A}$ is sensitive in the
sense of Definition~5.5 from~\cite{BenSassonWigderson2001}, and
compatible with~$F$ in the sense of Definition~5.3
from~\cite{BenSassonWigderson2001}. Moreover, if~$\mathscr{A}_0
\subseteq \mathscr{A}$ is the set of functions that corresponds to the
minimally unsatisfiable subsystem~$I'$ of~$I$, then its
cardinality~$m_0$ satisfies~$m_0 > \alpha n$ by
Claim~\ref{claim:minimally}. It follows that the
expansion~$e(\mathscr{A})$ in the sense of Definition~5.8
from~\cite{BenSassonWigderson2001} is at least~$\alpha\beta n/3$. By
Theorem~5.9 in \cite{BenSassonWigderson2001}, every resolution
refutation of~$F$ requires width at least~$e\alpha n/3$, and hence at
least~$3k$ since~$k \leq \gamma n = e\alpha n/9$. By Theorem~2 in
\cite{AtseriasDalmau2008}, Duplicator has a winning strategy for the
existential~$3k$-pebble game played on the structures~$F$ and the
constraint language~$\Gamma_{\prob{3SAT}}$ of~$\prob{3SAT}$, in the
second encoding discussed in Section~\ref{sec:encodings}. We use this
winning strategy to design a winning strategy for Duplicator in the
existential~$k$-pebble game played on~$I$ and~$\Gamma_{\prob{3XOR}}$.

While playing the game on~$I$, Duplicator plays the game on~$F$ on the
side and keeps the invariant that each pebbled variable in the game
on~$I$ is also pebbled in the side game, and each pebbled equation in
the game on~$I$ has its three variables pebbled in the side
game. Whenever a new variable is pebbled in the game on~$I$,
Duplicator pebbles the same variable in the side game, and copies the
answer from its strategy on it. Whenever a new equation is pebbled in
the game on~$I$, Duplicator pebbles its three variables in the side
game, and answers the pebbled equation accordingly from its
strategy. Since at each position of the game on~$I$ there are no more
than~$k$ pebbles on the board, at each time during the simulation the
side game has no more than~$3k$ pebbles on the board.  This shows that
the simulation can be carried on forever and the proof is complete.
\end{proof}

This completes the proof of Lemma~\ref{lem:linear-local-bound}.
\end{proof}

We can now prove our first two gap theorems.

\begin{theorem}\label{thm:3lin-onesided}
For every real~$\epsilon > 0$, if~$\class{C}$ is the collection
of~$\prob{3XOR}$ instances that are satisfiable and~$\class{D}$ is the
collection of~$\prob{3XOR}$ instances that are
not~$(1/2+\epsilon)$-satisfiable, then~$\class{C}$ and~$\class{D}$ are
not~$\Ck$-separable for any~$k = k(n)$ such that~$k(n) = o(n)$.
\end{theorem}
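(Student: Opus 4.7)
The plan is to assemble Lemmas~\ref{lem:CFI}, \ref{lem:both-lift}, \ref{lem:both-extremal}, and~\ref{lem:linear-local-bound}. The core idea is: use the expander construction to produce a~$\prob{3XOR}$ instance~$I$ that is highly locally satisfiable yet whose duplicated companion~$G(I)$ is far from satisfiable, while~$G(I^0)$ is trivially satisfiable. Then~$G(I^0) \in \class{C}$ and~$G(I) \in \class{D}$ are forced to be~$\Cequiv{k}$-equivalent for~$k$ linear in the size of the instance.

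Concretely, given~$\epsilon>0$, first apply Lemma~\ref{lem:both-extremal} with this~$\epsilon$ and any fixed~$\delta \in (0,1)$ (say~$\delta = 1/2$) to obtain an integer~$c$. Set~$r := c$ and apply Lemma~\ref{lem:linear-local-bound} with this~$r$ to obtain a real~$\gamma > 0$ and, for every sufficiently large~$n$, a matrix~$A \in \{0,1\}^{m\times n}$ with~$m = rn$ rows, each having exactly three ones, such that~$Ax=b$ is~$\gamma n$-locally satisfiable for \emph{every}~$b \in \{0,1\}^m$. Now draw~$b \in \{0,1\}^m$ uniformly at random: by Lemma~\ref{lem:both-extremal}, with probability at least~$1-\delta > 0$ the instance~$G(Ax=b)$ is not~$(1/2+\epsilon)$-satisfiable, so such a~$b$ exists. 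Fix one and let~$I = (Ax=b)$.

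With~$I$ in hand we collect the three facts we need. First, by construction~$G(I^0)$ is satisfied by the assignment~$x_j^a \mapsto a$, so~$G(I^0) \in \class{C}$. Second, the choice of~$b$ ensures~$G(I) \in \class{D}$. Third, since~$I$ is~$\gamma n$-locally satisfiable and~$\gamma n \geq 3$ for all sufficiently large~$n$, Lemma~\ref{lem:CFI} gives
\begin{equation*}
G(I) \Cequiv{\gamma n} G(I^0).
\end{equation*}

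To conclude, observe that both~$G(I)$ and~$G(I^0)$, encoded as~$\prob{3XOR}$ instances (in either encoding from Section~\ref{sec:encodings}), have universe of size at most~$N := (2 + 8r)n$, linear in~$n$. Hence the witness pair lies in~$\class{C}_N \times \class{D}_N$ and is indistinguishable in~$\Ck$ for any~$k \leq \gamma n = \frac{\gamma}{2+8r}\,N$. Consequently, any function~$k(N)$ separating~$\class{C}$ from~$\class{D}$ must satisfy~$k(N) = \Omega(N)$, which rules out every~$k(N) = o(N)$, as required.

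The proof is essentially a bookkeeping assembly of the preceding lemmas, so there is no single hard step; the only delicate point is to coordinate parameters so that~$r$ is chosen \emph{first} (large enough for Lemma~\ref{lem:both-extremal} to apply) and the expander matrix of Lemma~\ref{lem:linear-local-bound} is then taken with this same~$r$, so that both the random right-hand side argument and the linear-in-$n$ local-satisfiability bound apply to the same matrix simultaneously.
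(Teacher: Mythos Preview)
Your proposal is correct and follows essentially the same approach as the paper: combine Lemmas~\ref{lem:linear-local-bound} and~\ref{lem:both-extremal} to obtain a system~$I$ that is~$\Omega(n)$-locally satisfiable yet has~$G(I)$ not~$(1/2+\epsilon)$-satisfiable, then invoke Lemma~\ref{lem:CFI} to conclude~$G(I^0)\Cequiv{\Omega(n)}G(I)$ with~$G(I^0)$ satisfiable. Your write-up is in fact more explicit than the paper's about coordinating the parameter~$r=c$ between the two lemmas and about the linear size bound on the resulting structures; the only cosmetic difference is that the paper cites the first part of Lemma~\ref{lem:both-lift} for the satisfiability of~$G(I^0)$, whereas you (equivalently) observe it directly via~$x_j^a\mapsto a$.
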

\begin{proof}
By combining Lemma~\ref{lem:linear-local-bound} with
Lemma~\ref{lem:both-extremal}, there is a family of systems ~$(S_k)_{k
  \geq 1}$ with~$O(k)$ variables and equations such that ~$G(S_k)$ is
not~$(1/2+\epsilon)$-satisfiable but~$S_k$ is ~$k$-locally
satisfiable. Let~$I_k^1 = G(S_k)$ and~$I_k^0 = G(S_k^0)$. Then~$I_k^0
\Cequiv{k} I_k^1$ by Lemma~\ref{lem:CFI}. Moreover, by the first part
of Lemma~\ref{lem:both-lift}, the instance~$I_k^0$ is satisfiable
while, by choice, the instance~$I_k^1$ is
not~$(1/2+\epsilon)$-satisfiable.  Since~$I_k^0$ and~$I_k^1$ have two
variables for each variable in~$S_k$ and eight equations for each
equation in~$S_k$, they also have~$O(k)$ variables and equations and
the result follows.
\end{proof}

\begin{theorem}\label{thm:3sat-onesided}
For every real~$\epsilon > 0$, if~$\class{C}$ is the collection
of~$\prob{3SAT}$ instances that are satisfiable and~$\class{D}$ is the
collection of~$\prob{3SAT}$ instances that are
not~$(7/8+\epsilon)$-satisfiable, then~$\class{C}$ and~$\class{D}$ are
not~$\Ck$-separable for any~$k = k(n)$ such that~$k(n) = o(n)$.
\end{theorem}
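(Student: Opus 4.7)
The plan is to reduce $\prob{3XOR}$ to $\prob{3SAT}$ by the standard clause-expansion and then appeal to Theorem~\ref{thm:3lin-onesided} combined with Lemma~\ref{lem:reduction}. Concretely, I would fix $\epsilon > 0$, set $\epsilon_0 := 4\epsilon$, and take the two disjoint classes of $\prob{3XOR}$ instances given by Theorem~\ref{thm:3lin-onesided} with parameter $\epsilon_0$, namely the satisfiable ones $\class{C}_0$ and those that are not $(1/2+\epsilon_0)$-satisfiable~$\class{D}_0$.

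The key reduction $\Theta$ replaces each XOR equation $x+y+z=b$ by the $4$ clauses on $x,y,z$ that together are logically equivalent to the equation (the four clauses whose forbidden assignment differs from $b$ in parity). A direct check shows that any Boolean assignment satisfies \emph{all four} of these clauses when it satisfies the equation, and exactly \emph{three out of four} when it does not. Consequently, if $m$ equations turn into $4m$ clauses and the assignment satisfies a $c$-fraction of the equations, it satisfies exactly a $(3+c)/4$-fraction of the clauses. Hence: (a) if the $\prob{3XOR}$ instance $I$ is satisfiable, then $\Theta(I)$ is satisfiable; and (b) if $I$ is not $(1/2+\epsilon_0)$-satisfiable, then the maximal fraction of clauses of $\Theta(I)$ satisfiable by any assignment is strictly less than $(3 + 1/2 + \epsilon_0)/4 = 7/8 + \epsilon_0/4 = 7/8+\epsilon$. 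Thus $\Theta(\class{C}_0) \subseteq \class{C}$ and $\Theta(\class{D}_0) \subseteq \class{D}$.

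I would then verify that $\Theta$ is a first-order interpretation of $\prob{3SAT}$-instances (in the first encoding of Section~\ref{sec:encodings}) in $\prob{3XOR}$-instances (in the same encoding), and that it is linearly bounded. The variables of $\Theta(I)$ are the variables of $I$; the clauses are four labelled disjoint copies of the set of XOR constraints of $I$. To build these four copies in a first-order definable way, I would use the method of finite expansions from the preliminaries with $l=4$, which needs a tuple of dimension $d = O(1)$ and a bounded number $t=O(1)$ of auxiliary variables. The formula $\delta$ picks out the disjoint union of variables and four copies of constraints; $\varepsilon$ is the obvious equality; the relations $E_1,E_2,E_3$ are defined by reading off the incidence relations from $I$ (each copy of a constraint inherits the three variables of the original XOR constraint); and, for each of the eight clause types $R$ of $\Gamma_{\prob{3SAT}}$, the formula $\phi_{Z_R}$ is a first-order condition that selects those copies of XOR constraints whose parity $b$ and copy index together determine the particular clause of the four-clause expansion. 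All of this is expressible using $O(1)$ variables, so the interpretation has constant dimension and constant number of variables. Moreover $|\Theta(I)| \leq |I| + 4|I| = 5|I|$, so $\Theta$ is linearly bounded.

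Finally, suppose for contradiction that $\class{C}$ and $\class{D}$ were $C^{k(n)}$-separable for some $k(n) = o(n)$. Then so are the subclasses $\Theta(\class{C}_0)$ and $\Theta(\class{D}_0)$. By Lemma~\ref{lem:reduction}, $\class{C}_0$ and $\class{D}_0$ are $C^{d \cdot k(p(n)) + t}$-separable. Since $d$ and $t$ are constants and $p$ is linear, $d\cdot k(p(n))+t$ is still $o(n)$, contradicting Theorem~\ref{thm:3lin-onesided}. The main obstacle I expect is the bookkeeping in step three: making sure the clause-type predicates and the copy-labels are definable by first-order formulas with a bounded number of variables in the first encoding, so that both the dimension and variable count of $\Theta$ are constant; the counting and probabilistic parts have already been absorbed into Theorem~\ref{thm:3lin-onesided}.
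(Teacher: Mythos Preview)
Your proposal is correct and follows essentially the same approach as the paper: the standard four-clause expansion of each XOR equation, verified to be a linearly bounded first-order interpretation, combined with Theorem~\ref{thm:3lin-onesided} and Lemma~\ref{lem:reduction}. Your treatment is in fact slightly more careful than the paper's---you use the exact count that a falsified equation yields exactly three satisfied clauses (giving the clean $(3+c)/4$ formula), and you spell out the finite-expansion construction of the four clause copies, whereas the paper simply asserts the reduction ``is easily defined in first-order logic''; your parameter choice $\epsilon_0 = 4\epsilon$ is the tight one, while the paper uses $\epsilon/4$ (which also works, with slack).
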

\begin{proof}
Consider the reduction~$\Theta$ from~$\prob{3XOR}$ to~$\prob{3SAT}$
that translates each equation into a conjunction of four clauses.
Thus~$x+y+z = d$ becomes four clauses~$\{x^{(a)},y^{(b)},z^{(c)}\}$
with~$a,b,c \in \FF{2}$ and~$a+b+c=d$, where~$z^{(e)}$ stands for the
negative literal~$\neg z$ if~$e = 0$ and the positive literal~$z$
if~$e = 1$.  This is easily defined in first-order logic.  As the set
of variables in~$I$ is the same as in~$\Theta(I)$, it is linearly
bounded. We claim that applying~$\Theta$ to
Theorem~\ref{thm:3lin-onesided} with~$\epsilon$ reset to~$\epsilon/4$
gives the theorem through Lemma~\ref{lem:reduction}.  First, it is
clear that if~$I$ is a~$\prob{3XOR}$ instance that is satisfiable,
then~$\Theta(I)$ is also satisfiable.  Now, suppose that~$I$ is a
system of~$m$ equations that is not~$(1/2+\epsilon/4)$-satisfiable,
and let~$g$ be an assignment of truth values to the variables~$X$
of~$\Theta(I)$.  Applied to~$I$, the assignment~$g$ falsifies at
least~$(1/2-\epsilon/4)m$ of the equations.  For each equation,~$g$
must falsify at least one of the four corresponding clauses
in~$\Theta(I)$.  Thus,~$g$ falsifies at least~$(1/2-\epsilon/4)m$
clauses in~$\Theta(I)$ and so satisfies at most~$4m -
(1/2-\epsilon/4)m = (7/8+\epsilon)\cdot 4m$ of the~$4m$ clauses.
\end{proof}

To formulate the consequences of these two theorems for~$\FPC$
definability, it is useful to introduce some terminology.  Say that a
\emph{term}~$\eta$ of~$\FPC$
\emph{$\delta$-approximates}~$\prob{MAX 3XOR}$, where~$0 < \delta \leq
1$, if whenever~$I$ is an instance of~$\prob{3XOR}$ in which a maximum
of~$m^*$ clauses are simultaneously satisfiable, then the
interpretation of~$\eta^I$ of~$\eta$ in~$I$ is a value such that~$m^*
\geq \eta^I \geq \delta m^*$.  The notion of a term
of~$\FPC$~$\delta$-approximating~$\prob{MAX 3SAT}$ is defined
similarly.

\begin{corollary}
For any~$\epsilon > 0$,
  \begin{enumerate} \itemsep=0pt
  \item there is no term of~$\FPC$ that~$(1/2 +
    \epsilon)$-approximates~$\prob{MAX 3XOR}$; and 
  \item  there is no term of~$\FPC$ that~$(7/8 +
    \epsilon)$-approximates~$\prob{MAX 3SAT}$.
  \end{enumerate}
\end{corollary}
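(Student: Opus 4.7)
The plan is to reduce both items to Theorems~\ref{thm:3lin-onesided} and~\ref{thm:3sat-onesided} via the bounded-counting-width property of \FPC{} recorded in Section~\ref{sec:preliminaries}: every \FPC-formula is invariant under $\Cequiv{k}$ for some constant $k$ depending only on the formula, and the same invariance applies to every \FPC-term $\eta$ (since the predicate ``$\eta^I = i$'' is itself \FPC-expressible for each fixed $i$). This will let us turn any hypothetical approximating term into a constant-$k$ separator, contradicting the asymptotic lower bounds in the two gap theorems.

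For part~(1), suppose for contradiction that some \FPC-term $\eta$ $(1/2+\epsilon)$-approximates \prob{MAX 3XOR}, and fix a rational $\epsilon' \in (0,\epsilon)$. On any 3XOR instance $I$ with $m$ equations and optimum $m^*$, I observe the following dichotomy: if $I$ is satisfiable then $m^* = m$ and the approximation guarantee gives $\eta^I \geq (1/2+\epsilon)m$; if $I$ is not $(1/2+\epsilon')$-satisfiable then $\eta^I \leq m^* < (1/2+\epsilon')m$. Hence the comparison ``$\eta^I \geq (1/2+\epsilon)m$'' splits the two cases. Writing $\epsilon = p/q$ for positive integers $p,q$, this comparison rewrites as the integer inequality $2q \cdot \eta^I \geq (q+2p) \cdot m$ between two \FPC-definable numerical terms ($m$ itself is the number of elements satisfying $Z_{R_0} \lor Z_{R_1}$ in the first encoding, or the number of equation-tuples in the second), and is therefore expressible by an \FPC-sentence $\phi$. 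Choose a constant $k$ under which $\phi$ is $\Cequiv{k}$-invariant. Then $\phi$ witnesses that the class $\class{C}$ of satisfiable 3XOR instances and the class $\class{D}$ of instances that are not $(1/2+\epsilon')$-satisfiable are $\Ck$-separable with $k(n) = k$ a constant, which is certainly $o(n)$. This contradicts Theorem~\ref{thm:3lin-onesided} instantiated with $\epsilon'$ in place of $\epsilon$.

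Part~(2) is proved by the identical template, with 3SAT in place of 3XOR, threshold $7/8$ in place of $1/2$, and Theorem~\ref{thm:3sat-onesided} replacing Theorem~\ref{thm:3lin-onesided}; the same integer-inequality packaging yields an \FPC-sentence separating satisfiable 3SAT instances from instances that are not $(7/8+\epsilon')$-satisfiable, and a constant counting width for this sentence contradicts the $o(n)$ lower bound. There is no real obstacle in this deduction: the genuine content of the corollary lives entirely in the two gap theorems, and what remains here is only to translate the numerical approximation guarantee into a separating \FPC-sentence and then invoke the bounded counting width of \FPC. The mildest point to check is that comparisons of \FPC-terms against a fixed rational multiple of another \FPC-term are themselves \FPC-expressible, which is a routine consequence of the fact that \FPC\ supports basic arithmetic on numeric terms.
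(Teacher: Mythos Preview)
Your proposal is correct and follows essentially the same route as the paper: from a hypothetical approximating \FPC-term, build an \FPC-sentence separating the satisfiable instances from the highly unsatisfiable ones, observe that this sentence has constant counting width, and contradict Theorems~\ref{thm:3lin-onesided} and~\ref{thm:3sat-onesided}. The paper's own proof is a two-sentence sketch of exactly this; you have merely spelled out the threshold comparison explicitly (and your introduction of a rational $\epsilon' < \epsilon$ is a harmless precaution---note only that when you write $\epsilon = p/q$ you are tacitly assuming $\epsilon$ itself is rational; to be completely clean, take the threshold to be any rational strictly between $1/2+\epsilon'$ and $1/2+\epsilon$).
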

\begin{proof}
If there were such a term in case (1), we would obtain an~$\FPC$
sentence defining a class~$\class{E}$ of counting width bounded by a
constant which separates the class of~$\prob{3XOR}$ instances that are
satisfiable from those that are not ~$(1/2 +\epsilon)$-satisfiable,
constradicting Theorem~\ref{thm:3lin-onesided}.  The analogous
situation holds in case (2) and Theorem~\ref{thm:3sat-onesided}.
\end{proof}

\section{Long Code Reductions} \label{sec:amplifying}

In this section we show that certain reductions from the theory of
inapproximability of~$\prob{MAX 3XOR}$ and~$\prob{MAX 3SAT}$ can be
expressed as~$\FO$-interpretations.  While no reduction can provide an
improvement on the already optimal inapproximability results that are
implied by Theorems~\ref{thm:3lin-onesided}
and~\ref{thm:3sat-onesided}, these~$\FO$-interpretations have the
merit of providing optimal gap pairs starting at \emph{any} initial
gap pair, provided the initial gap pair exhibits any constant gap
separation whatsoever. In addition, the details of
the~$\FO$-interpretations that we work out here will also be useful
when we discuss the reductions to the vertex-cover problem in the next
section.

\subsection{Parallel repetition}

We begin by defining the~$\prob{LABEL COVER}$ problem, a standard
problem in the study of hardness of approximation.  Indeed, it is described in the
textbook~\cite[p.~494]{AroraBarakBook} as being ``ubiquitous'' in the
PCP literature.  For a full discussion of the problem,
see~\cite[Chap.~22]{AroraBarakBook}, where it is called~$2\prob{CSP}_W$
with the projection property.

An instance~$I$ of the~$\prob{LABEL COVER}$ problem is given by two
disjoint sets of variables~$U$ and~$V$ with domains of values~$A$
and~$B$, respectively, a predicate~$P : U \times V \times A \times B
\rightarrow \{0,1\}$, and an assignment of weights~$W : U \times V
\rightarrow \nats$. If all the non-zero weights~$W(u,v)$ are equal,
then the instance is said to have \emph{uniform weights}. If for
all~$u \in U$ the sums~$W(u) := \sum_{v \in V} W(u,v)$ of incident
weights are equal, then the instance is called \emph{left-regular}. A
\emph{right-regular} instance is defined analogously in terms of~$W(v)
:= \sum_{u \in U} W(u,v)$.  The instance is a \emph{projection game}
if for every~$(u,v) \in U \times V$ with~$W(u,v) \not= 0$ it holds
that for every~$a \in A$ there is exactly one~$b \in B$
satisfying~$P(u,v,a,b) = 1$.  It is called a \emph{unique game}
if~$|A|=|B|$ and it is a projection game both ways: from~$A$ to~$B$,
and from~$B$ to~$A$. The instance is said to have
parameters~$(m,n,p,q)$ if~$|U|=m$,~$|V|=n$,~$|A|=p$ and~$|B|=q$. Its
\emph{domain size} is~$p+q$.

A value-assignment for an instance~$I$ is a pair of functions~$f : U
\rightarrow A$ and~$g : V \rightarrow B$. The weight~$v(f,g)$ of the
value-assignment~$(f,g)$ is the total weight of the pairs~$(u,v) \in U
\times V$ satisfying the constraint~$P(u,v,f(u),g(v)) = 1$; i.e.,
\begin{equation}
v(f,g) = \sum_{(u,v) \in U\times V} W(u,v) P(u,v,f(u),g(v)).
\end{equation}  
For~$c \in [0,1]$, we say that the instance is~$c$-satisfiable if
there is a value-assignment whose weight is at least~$c\cdot W_0$,
where~$W_0 = \sum_{(u,v) \in U \times V} W(u,v)$ is the maximum
possible weight. We call it satisfiable if it is~$1$-satisfiable.

The \emph{bipartite reduction} takes an instance~$I$ of~$\prob{3XOR}$
and produces a projection game instance~$L(I)$ of~$\prob{LABEL COVER}$
defined as follows. The sets~$U$ and~$V$ are the set of equations
in~$I$ and the set of variables in~$I$, respectively.  The
weight~$W(u,v)$ is~$1$ if~$v$ is one of the variables in the
equation~$u$, and~$0$ otherwise. The domains of values associated
to~$U$ and~$V$ are~$A = \{(a_1,a_2,a_3) \in \FF{2}^3 : a_1+a_2+a_3 =
0\}$ and~$B = \FF{2}$, respectively.  The predicate~$P$ associates to
the pair~$(u,v)$, where~$u$ is the equation~$v_1 + v_2 + v_3 = b$
and~$v = v_i$ for~$i \in \{1,2,3\}$, the set of
pairs~$((a_1,a_2,a_3),a) \in A \times B$ satisfying~$a = a_i + b$.  In
other words, ~$P(u,v,(a_1,a_2,a_3),a) = 1$ if, and only if,~$v$
appears in the equation~$u$, and if~$u$ is~$v_1+v_2+v_3=b$ and~$v =
v_i$, then the (partial) assignment~$\{ v_1 \mapsto a_1+b, v_2 \mapsto
a_2+b, v_3 \mapsto a_3+b \}$, which satisfies the
equation~$v_1+v_2+v_3 = b$ by construction, agrees with the (partial)
assignment~$\{ v_i \mapsto a \}$. Clearly, this defines a projection
game.

\begin{lemma} \label{lem:bipartitereduction}
For every instance~$I$ of~$\prob{3XOR}$ and every~$c,s \in [0,1]$, the
following hold:
\begin{enumerate} \itemsep=0pt
\item if~$I$ is~$c$-satisfiable, then~$L(I)$ is~$c$-satisfiable,
\item if~$I$ is not~$s$-satisfiable, then~$L(I)$ is not~$(s+2)/3$-satisfiable.
\end{enumerate}
Moreover,~$L(I)$ is a left-regular projection game that has uniform
weights.
\end{lemma}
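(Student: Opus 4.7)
The plan is to argue both parts directly from the construction. The total possible weight is $W_0 = 3m$ where $m$ is the number of equations in $I$, because each equation $u$ contributes weight $1$ to each of its three incident variables and $0$ to all others. The ``moreover'' claims drop out of the definition: every non-zero $W(u,v)$ equals $1$ (uniform weights), $W(u) = 3$ for every equation $u$ (left-regular), and for any fixed $u$ and $v = v_i$ in $u$, given $(a_1,a_2,a_3) \in A$ there is exactly one $b' \in B$, namely $b' = a_i + b$, with $P(u,v,(a_1,a_2,a_3),b') = 1$ (projection).

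For part 1, given an assignment $h$ of $I$ that satisfies at least $cm$ equations, the plan is to set $g(v) = h(v)$ for each variable $v \in V$, and for each equation $u : v_1 + v_2 + v_3 = b$ of $I$ to define $f(u) = (h(v_1) + b,\, h(v_2) + b,\, h(v_3) + b)$ whenever $h$ satisfies $u$, and $f(u)$ arbitrary in $A$ otherwise. When $h$ satisfies $u$, the coordinates of $f(u)$ sum to $h(v_1)+h(v_2)+h(v_3) + 3b = b + b = 0$ over $\FF{2}$, so $f(u) \in A$; moreover for each $i \in \{1,2,3\}$ one has $g(v_i) = h(v_i) = (h(v_i)+b) + b = a_i + b$, so each of the three pairs $(u,v_i)$ contributes weight $1$. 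Summing over the at least $cm$ satisfied equations gives $v(f,g) \geq 3cm = c \cdot W_0$.

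For part 2 I would argue contrapositively: assuming $(f,g)$ is a value-assignment of $L(I)$ with $v(f,g) \geq \frac{s+2}{3} \cdot 3m = (s+2)m$, define $h(v) = g(v)$ and show $h$ satisfies at least $sm$ equations of $I$. For each equation $u : v_1+v_2+v_3 = b$ with $f(u) = (a_1,a_2,a_3)$, let $k_u$ denote the number of $i \in \{1,2,3\}$ for which $g(v_i) = a_i + b$, so that $v(f,g) = \sum_u k_u$. The key observation is: if $k_u = 3$ then $h(v_1)+h(v_2)+h(v_3) = (a_1+a_2+a_3) + 3b = 0 + b = b$ over $\FF{2}$ (using that $f(u) \in A$), so $h$ satisfies $u$. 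Equivalently, for any equation $u$ not satisfied by $h$ we must have $k_u \leq 2$. Letting $m_h$ be the number of equations satisfied by $h$, we then get $(s+2)m \leq v(f,g) \leq 3 m_h + 2(m - m_h) = m_h + 2m$, hence $m_h \geq sm$, proving $I$ is $s$-satisfiable.

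I do not anticipate any serious obstacle: the argument is essentially a short piece of $\FF{2}$-arithmetic together with the counting identity $W_0 = 3m$. The one point to handle carefully is the algebraic implication in part 2, that complete agreement of $h$ with $f(u)$ on all three coordinates forces $h$ to satisfy the equation, which relies crucially on the zero-sum constraint built into the definition of $A$.
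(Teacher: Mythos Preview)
Your proposal is correct and follows essentially the same approach as the paper's own proof: the same assignment $(f,g)$ built from $h$ for part~1, and the same contrapositive counting via the bound $3m_h + 2(m-m_h)$ for part~2. If anything, you are more explicit than the paper in verifying that $f(u)\in A$ and in justifying why $k_u=3$ forces $h$ to satisfy $u$ via the zero-sum constraint on $A$.
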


\begin{proof}
Let~$m$ be the number of equations in~$I$, so~$L(I)$ has exactly~$3m$
pairs~$(u,v)$ of unit weight. Such pairs are called constraints. For
proving~1, let~$h$ be an assignment for~$I$ that satisfies at
least~$cm$ of the~$m$ equations in~$I$. For each equation~$u$ in~$I$,
say~$v_1+v_2+v_3=b$, define~$f(u) = (h(v_1)+b,h(v_2)+b,h(v_3)+b)$
if~$h$ satisfies~$v_1+v_2+v_3=b$, and define~$f(u) = (0,0,0)$
otherwise. For each variable~$v$ in~$I$, define~$g(v) = h(v)$. Each
equation in~$I$ gives rise to exactly three constraints in~$L(I)$, and
if the equation is satisfied by ~$h$, then all three constraints
associated to it in~$L(I)$ are satisfied by~$(f,g)$. Thus~$(f,g)$
satisfies at least~$3cm$ of the~$3m$ constraints in~$L(I)$, so~$L(I)$
is ~$c$-satisfiable. For proving~2, let~$(f,g)$ be an assigment
for~$L(I)$ that satisfies at least~$(s+2)m$ of the~$3m$ constraints
in~$L(I)$. For each variable~$v$ in~$I$, define~$h(v) = g(v)$. Let~$t$
be the number of equations of~$I$ that are satisfied by~$h$. In terms
of~$t$, the assignment~$(f,g)$ satisfies at most~$3t+2(m-t)$ of
the~$3m$ constraints of~$L(I)$. Thus~$t \geq sm$, so~$I$
is~$s$-satisfiable.
\end{proof}

The \emph{parallel repetition reduction} takes an instance~$I$
of~$\prob{LABEL COVER}$, and an integer~$t \geq 1$, and produces
another instance~$R(I,t)$ of~$\prob{LABEL COVER}$ defined as
follows. Let~$U$ and~$V$ be the sets of variables in~$I$ and let~$W :
U \times V \rightarrow \nats$ be the weight assignment. The sets of
variables of~$R(I,t)$ are~$U^t$ and~$V^t$. For~$\vect{u} =
(u_1,\ldots,u_t) \in U^t$ and~$\vect{v} = (v_1,\ldots,v_t) \in V^t$,
the weight~$W(\vect{u},\vect{v})$ is defined as~$\prod_{i=1}^t
W(u_i,v_i)$. If~$A$ and~$B$ are the domains of values associated
to~$U$ and~$V$, then the domains of values associated to~$U^t$
and~$V^t$ are~$A^t$ and~$B^t$ respectively.  For~$\vect{u} =
(u_1,\ldots,u_t) \in U^t$,~$\vect{v} = (v_1,\ldots,v_t) \in
V^t$,~$\vect{a} = (a_1,\ldots,a_t) \in A^t$ and~$\vect{b} =
(b_1,\ldots,b_t) \in B^t$, the
predicate~$P(\vect{u},\vect{v},\vect{a},\vect{b})$ is defined
as~$\prod_{i=1}^t P(u_i,v_i,a_i,b_i)$. Observe that this definition
guarantees that if~$I$ is a projection game, then so is~$R(I,t)$.

\begin{theorem}[Parallel Repetition Theorem \cite{Raz,Holenstein2007}]
  \label{thm:parallelrepetition}
There exists a constant~$\alpha > 0$ such that for every instance~$I$
of~$\prob{LABEL COVER}$ with domain size at most~$d \geq 1$, every~$s
\in [0,1]$ and every~$t \geq 1$ the following hold:
  \begin{enumerate}\itemsep=0pt 
  \item if~$I$ is satisfiable, then~$R(I,t)$ is satisfiable,
  \item if~$I$ is not~$s$-satisfiable, then~$R(I,t)$ is
    not~$(1-(1-s)^3)^{\alpha t/d}$-satisfiable.
  \end{enumerate}
Moreover, if~$I$ is a projection game, left-regular, right-regular, or
has uniform weights, then so is~$R(I,t)$.
\end{theorem}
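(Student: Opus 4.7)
The first claim and the ``moreover'' part are immediate from the multiplicative definitions. If $(f,g)$ satisfies $I$, then the coordinate-wise pair $(\vect f,\vect g)$ given by $\vect f(u_1,\ldots,u_t)=(f(u_1),\ldots,f(u_t))$ and analogously for $\vect g$ satisfies $R(I,t)$, since $P(\vect u,\vect v,\vect a,\vect b)=\prod_i P(u_i,v_i,a_i,b_i)$ and $W(\vect u,\vect v)=\prod_i W(u_i,v_i)$. The structural preservation properties follow by the same factorisation: a product of projections is a projection, and $W(\vect u)=\prod_i W(u_i)$ means left-regularity (and symmetrically right-regularity or uniformity of weights) in $I$ passes to $R(I,t)$.

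For the soundness statement 2, which is the substantive content, my plan is to follow the information-theoretic proof of Raz, in the simplified form of Holenstein. Proceeding by contrapositive, I fix a value-assignment $(F,G)$ with $F:U^t\to A^t$ and $G:V^t\to B^t$ whose normalised weight exceeds the claimed bound, sample $(\vect U,\vect V)$ from the distribution on $U^t\times V^t$ with probabilities proportional to $W(\vect u,\vect v)$, and write $W_i$ for the event that coordinate $i$ is satisfied by $(F(\vect U),G(\vect V))$. The aim is to extract from $(F,G)$, together with a single input pair sampled from $W$, a value-assignment for $I$ whose expected weight exceeds $s$, contradicting the non-$s$-satisfiability of $I$.

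The core step is to condition on the event $W_T=\bigcap_{i\in T}W_i$ for a random subset $T\subseteq[t]$ of size $m=\Theta(t/d)$ and to bound, in expectation over a random $j\notin T$, the conditional mutual information $I(U_j;V_j\mid \vect U_T,\vect V_T, W_T)$. The chain rule together with the global bound $O(\log d)$ on joint information gives that this quantity is $o(1)$; Pinsker's inequality then forces the conditional distribution of $(U_j,V_j)$ to be statistically close to the product of its marginals, so that the two provers can use shared randomness to locally sample a completion of the tuple, apply $F$ and $G$, and read off the $j$-th coordinate. Conditioning on $W_T$ biases the result towards winning coordinate $j$ as well, and a careful accounting shows the conditional winning probability is bounded above by $1-(1-s)^3$ per step, yielding the exponent $\alpha t/d$ after iteration over disjoint batches, provided $\alpha$ is small enough to absorb the constants.

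The hard part is precisely the quantitative information-theoretic accounting. Obtaining the cubic dependence $(1-s)^3$, rather than a weaker power, requires a careful choice of the events on which one conditions, and the local-sampling step must introduce only $o(1)$ error in statistical distance so that the advantage gained from conditioning on $W_T$ is not drowned out by sampling noise. These estimates are the technical heart of \cite{Raz,Holenstein2007}, and in a fully written proof I would invoke those results as a black box at this point. Everything else in the statement, including the preservation clauses, is compositional and passes through the reduction transparently.
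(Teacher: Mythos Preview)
The paper gives no proof of this theorem; it is stated with attribution to \cite{Raz,Holenstein2007} and used as a black box. Your proposal is therefore not comparable to a proof in the paper, because there is none. Your treatment of claim~1 and the ``moreover'' clause is correct and is all that one would expect to verify locally; your sketch of claim~2 is a reasonable outline of the Raz--Holenstein argument, and since you ultimately defer to those references as well, you end up in the same place as the paper, just with more commentary along the way.
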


Although it is the case that the bipartite and the parallel repetition
reductions are both~$\FO$-interpretations, we do not need to formulate
this.  Instead, we show the~$\FO$-definability of the composition of
these reductions with the long-code reductions that we discuss next.

\subsection{First long-code reduction}

The \emph{first long-code reduction} takes a
projection game instance~$I$ of~$\prob{LABEL COVER}$ and a
rational~$\epsilon \in [0,1]$ and produces an instance~$C(I,\epsilon)$
of~$\prob{3XOR}$ defined as follows. Let~$U$ and~$V$ be the sets of
variables of sizes~$m$ and~$n$, respectively, with associated domains
of values~$A = [p]$ and~$B = [q]$, let~$W : U \times V \rightarrow
\nats$ be the weight assignment, let~$P : U\times V\times A\times B
\rightarrow \{0,1\}$ be the predicate of~$I$, and for each~$(u,v) \in
U \times V$ with~$W(u,v) \not= 0$ and each~$a \in A$
let~$\pi_{u,v}(a)$ be the unique value~$b \in B$ that
satisfies~$P(u,v,a,b) = 1$. The existence of such a
function~$\pi_{u,v} : A \rightarrow B$ is guaranteed from the
assumption that~$I$ is a projection game. The set of variables
of~$C(I,\epsilon)$ includes one variable~$u(a)$ for each~$u \in U$
and~$a \in \FF{2}^{p-1}$, and one variable~$v(b)$ for each~$v \in V$
and~$b \in \FF{2}^{q-1}$, for a total of~$m 2^{p-1} + n 2^{q-1}$
variables. Before we are able to define the set of equations
of~$C(I,\epsilon)$ we need a piece of notation. For a vector~$z =
(z_1,\ldots,z_d) \in \FF{2}^d$ of dimension~$d \geq 2$, we write~$S(z)
= z_d$ and~$F(z) = (z_1 + S(z),\ldots,z_{d-1} + S(z))$. Note
that~$S(z)$ is a single field element, and~$F(z)$ is a vector of
dimension~$d-1$.  With this notation, the set of equations
of~$C(I,\epsilon)$ includes~$W(u,v) \cdot M^q \cdot \epsilon^D \cdot
(1-\epsilon)^{q-D}$ copies of the equation~$v(F(x)) + u(F(y)) +
u(F(z)) = S(x) + S(y) + S(z)$ for each~$(u,v) \in U \times V$, each~$x
\in \FF{2}^{q}$ and each~$y,z \in \FF{2}^{p}$, where~$M$ is the
denominator of~$\epsilon = N/M$ reduced to lowest terms,~$D$ is the
number of positions~$i \in [p]$ such that~$z_i \not= x_{\pi(i)} +
y_i$, and~$\pi = \pi_{u,v}$ if~$W(u,v) \not= 0$.

\begin{theorem}[H{\aa}stad 3-Query Linear Test \cite{Hastad2001}]
\label{thm:hastad}
For every~$s,\epsilon \in [0,1]$ with~$\epsilon > 0$ and~$s > 0$ and
every projection game instance~$I$ of~$\prob{LABEL COVER}$, the
following hold:
\begin{enumerate}\itemsep=0pt 
\item if~$I$ is satisfiable, then~$C(I,\epsilon)$
  is~$(1-\epsilon)$-satisfiable,
\item if~$I$ is not~$s$-satisfiable, then~$C(I,\epsilon)$ is
not~$(1/2+(s/\epsilon)^{1/2}/4)$-satisfiable.
\end{enumerate}
\end{theorem}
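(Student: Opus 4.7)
The plan is to follow H{\aa}stad's Fourier-analytic analysis of the long-code test. It is convenient to first recast the variables of~$C(I,\epsilon)$. Given an assignment $u(\cdot) : \FF{2}^{p-1} \rightarrow \FF{2}$, define $\hat u(y) := u(F(y)) + S(y)$ on~$\FF{2}^p$, and similarly obtain $\hat v$ on~$\FF{2}^q$ from each~$v(\cdot)$. This is a bijection between the assignments of~$C(I,\epsilon)$ and the functions satisfying the \emph{folding identity} $\hat u(y+\mathbf{1}) = \hat u(y) + 1$. Under this change of variables the equation $v(F(x)) + u(F(y)) + u(F(z)) = S(x)+S(y)+S(z)$ becomes simply $\hat v(x) + \hat u(y) + \hat u(z) = 0$, and the given weight distribution on equations amounts to sampling~$(u,v)$ with probability proportional to $W(u,v)$, sampling $x \in \FF{2}^q$ and $y \in \FF{2}^p$ uniformly, and then sampling $z \in \FF{2}^p$ coordinate-wise by setting $z_i = x_{\pi(i)} + y_i$ with probability $1-\epsilon$ and its opposite with probability $\epsilon$, independently across~$i$, where $\pi = \pi_{u,v}$.

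For part~1, given satisfying $f : U \rightarrow A$ and $g : V \rightarrow B$, take the \emph{long-code} assignment $\hat u(y) := y_{f(u)}$ and $\hat v(x) := x_{g(v)}$; these both satisfy the folding identity and hence correspond to valid assignments to the variables $u(\cdot)$, $v(\cdot)$. Since $(f,g)$ satisfies every constraint of~$I$, we have $\pi(f(u)) = g(v)$ whenever $W(u,v) \neq 0$, so the equation evaluates to $z_{f(u)} = x_{\pi(f(u))} + y_{f(u)}$, which holds with probability exactly $1-\epsilon$ by the noise distribution on~$z$. Thus a $(1-\epsilon)$-fraction of the total weight is satisfied.

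For part~2 we prove the contrapositive: if some assignment to~$C(I,\epsilon)$ satisfies a fraction $1/2+\delta$ of the equations, then~$I$ is $16\delta^2\epsilon$-satisfiable; taking $\delta = (s/\epsilon)^{1/2}/4$ then yields $s$-satisfiability of~$I$, contradicting the hypothesis. Passing to $\pm 1$-valued $U_u := (-1)^{\hat u}$ and $V_v := (-1)^{\hat v}$, the assumption becomes $\sum_{(u,v)} W(u,v)\,\mathbb{E}[V_v(x)U_u(y)U_u(z)] \geq 2\delta W_0$. Expanding into Fourier characters over~$\FF{2}^p$ and~$\FF{2}^q$ and using independence of coordinates of~$z$ conditional on $x,y$, one computes
\[
\mathbb{E}\bigl[V_v(x)U_u(y)U_u(z)\bigr] \;=\; \sum_{\alpha \subseteq A} \widehat{V_v}\bigl(\pi(\alpha)\bigr)\,\widehat{U_u}(\alpha)^2\,(1-2\epsilon)^{|\alpha|},
\]
where $\pi(\alpha) \subseteq B$ is defined by $\pi(\alpha)_b := \bigoplus_{a : \pi(a)=b} \alpha_a$. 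Folding forces~$\widehat{U_u}$ and~$\widehat{V_v}$ to be supported on odd-weight characters, so $\pi(\alpha)$ is non-empty whenever the summand is nonzero. Define a randomized labelling of~$I$: for each~$u$ sample~$\alpha$ with probability $\widehat{U_u}(\alpha)^2$ and return a uniform $a \in \alpha$, and analogously for each~$v$ using $\widehat{V_v}(\beta)^2$. Cauchy--Schwarz applied to the inner sum (using $\sum_\alpha \widehat{U_u}(\alpha)^2 = 1$ by Parseval), combined with the elementary bound $(1-2\epsilon)^{|\alpha|} \leq 1/(2\epsilon|\alpha|)$ and the fact that uniform sampling inside~$\alpha$ produces any fixed $b \in \pi(\alpha)$ with probability at least $1/|\alpha|$, shows that the expected weight satisfied by the randomized labelling is at least $16\delta^2\epsilon \cdot W_0$, as required.

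The main obstacle is the soundness computation: producing the Fourier identity for $\mathbb{E}[V_v U_u U_u]$ under the correlated noise on~$z$, handling the projection operator~$\pi$ correctly, and extracting a randomized labelling whose success is at least a multiple of $\delta^2\epsilon$. The factor $\sqrt{s/\epsilon}/4$ in the soundness threshold is precisely the price of the Cauchy--Schwarz step that converts the (quadratic) Fourier quantity controlling the test's acceptance probability into the (linear) quantity controlling the decoded strategy's success, with the geometric decay $(1-2\epsilon)^{|\alpha|}$ accounting for the~$\epsilon$ in the denominator.
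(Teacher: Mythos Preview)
Your sketch is correct and is precisely H{\aa}stad's Fourier-analytic argument. The paper, however, does not reproduce this argument: it simply states that the theorem follows from Lemmas~5.1 and~5.2 of~\cite{Hastad2001} and then devotes several paragraphs to matching its notation with H{\aa}stad's (explaining the projection-game formulation, the passage from acceptance probability of a probabilistic test to $c$-satisfiability of a weighted \textsc{3XOR} instance, and how the $F(\cdot)$/$S(\cdot)$ device implements folding-over-true so that conditioning-upon-$h$ becomes unnecessary for projection games). Your opening recasting via $\hat u(y) = u(F(y)) + S(y)$ is exactly this folding translation, and your decoding/Cauchy--Schwarz step with $(1-2\epsilon)^{|\alpha|}$ and the $1/|\alpha|$ hitting bound is the content of H{\aa}stad's Lemma~5.2 with $\delta = (s/\epsilon)^{1/2}/4$. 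So you have supplied the proof the paper only cites; there is no methodological difference.
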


\noindent 
The proof of Theorem~\ref{thm:hastad} follows from Lemmas 5.1 and 5.2
in \cite{Hastad2001}. In order to see this, we need to explain how our
notation matches the one in \cite{Hastad2001}.  Besides the obvious
and minor correspondance between multiplicative and additive notation
for~$\FF{2}$, with~$-1 \leftrightarrow 1$ and~$+1 \leftrightarrow 0$,
there are three other noticeable differences between the statement of
Theorem~\ref{thm:hastad} and the statements of Lemmas~5.1 and~5.2
in~\cite{Hastad2001}.

The first difference is that Theorem~\ref{thm:hastad} applies to
arbitrary projection game instances of~$\prob{LABEL COVER}$, while the
statements in~\cite{Hastad2001} are phrased only for the special cases
of the problem that result from applying the parallel repetition
construction to a suitable bipartite reduction applied to
a~$\prob{3SAT}$ instance. We chose to formulate
Theorem~\ref{thm:hastad} in this more general and modular form because
this is what the proofs of Lemmas~5.1 and~5.2 in~\cite{Hastad2001}
show, and also because this is how more recent expositions of these
results are presented (see, e.g., \cite{AroraBarakBook}).

The second difference is that the conclusion of our statement is
phrased in terms of the~$c$-satisfiability of a~$\prob{3XOR}$
instance, while the statements of Lemmas~5.1 and~5.2 in
\cite{Hastad2001} are phrased in terms of the acceptance rate of a
probabilistic test that has the following form: given access to
certain \emph{tables}~$A_u$ and~$A_v$, with~$\FF{2}$
entries~$\{A_u(x)\}_{x \in I}$ and~$\{ A_v(y) \}_{ y \in J }$ for
certain index sets~$I$ and~$J$, respectively, choose a random
3-variables parity test on the~$A_u(x)$ and~$A_v(y)$ entries under a
well-designed special-purpose distribution, and check if it is
satisfied. This difference is only notational and minor: our instance
of~$\prob{XOR}$ is built by viewing the~$A_u(x)$ and~$A_v(y)$ entries
as variables~$u(x)$ and~$v(y)$, and assigning weight to each
3-variable parity equation on these variables proportionally to the
probability that it is checked by the probabilistic test on the~$A_u$
and~$A_v$ tables. With this change,~$c$-satisfiability of the instance
translates into the probability of acceptance of the test being at
least~$c$, and vice-versa.

The third difference in the notation is that our variables~$u(x)$
and~$v(y)$, and the corresponding entries~$A_u(x)$ and~$A_v(y)$ of the
tables~$A_u$ and~$A_v$, are indexed by~$\mathbb{F}_2^{p-1}$
and~$\mathbb{F}_2^{q-1}$ instead of the more
natural~$\mathbb{F}_2^{p}$ and~$\mathbb{F}_2^q$, respectively. This is
due to the fact that we implement the operations of \emph{folding over
  true} and \emph{conditioning upon~$h$} from~\cite{Hastad2001}
directly in our construction. In other words, our tables~$A_u$
and~$A_v$ are what~\cite{Hastad2001}
calls~$A_{W,h,\text{\emph{true}}}$ and~$A_{U,\text{\emph{true}}}$,
respectively. Folding over true as in~$A_{U,\text{\emph{true}}}$ is
achieved for~$A_v$ through the notation~$S(z)$ and~$F(z)$ defined
above: we chose to partition~$\FF{2}^p$ into~$2^{p-1}$ pairs of the
form~$(z,0),(F((z,1)),1)$, as~$z$ ranges over~$\FF{2}^{p-1}$, and view
an arbitrary~$A_v : \FF{2}^{p-1} \rightarrow \FF{2}$ as representing
the function~$A'_v : \FF{2}^p \rightarrow \FF{2}$ defined by~$A'_v(z)
= A_v(F(z)) + S(z)$ for every~$z \in \FF{2}^p$. It is straightfoward
to see that~$A'_v$ is folded over true, in the definition
of~\cite{Hastad2001}, by construction.

Conditioning upon~$h$ as in~$A_{W,h,\text{\emph{true}}}$ for~$A_u$ is
achieved through the same mechanism as folding over true with the
additional observation that the operation of conditioning upon~$h$ is
necessary only if the instance of~$\prob{LABEL COVER}$ fails to
satisfy the property that for every~$(u,v) \in U \times V$ and
every~$a \in A$ there is at least one~$b \in B$ that satisfies the
predicate~$P(u,v,a,b)$. When this is the case, one defines~$h =
h_{u,v} : A \rightarrow \{0,1\}$ as the predicate indicating if a
given~$a$ has at least one~$b$ that satisfies~$P(u,v,a,b)$, and
\emph{conditions the table~$A_u$ upon~$h$}. In our case we do not
require this since the given instance of~$\prob{LABEL COVER}$ is a
projection game instance, and, in particular, for every~$a$ there is
exactly one~$b$, and hence at least one~$b$, such that~$P(u,v,a,b) =
1$; i.e.,~$h = h_{u,v}$ is the constant~$1$ predicate. It should be
added that the reason why we can assume that~$I$ is a projection game
instance is that our bipartite reduction is designed in such a way
that the values~$a$ in~$A$ are partial assignments that always satisfy
the corresponding constraints~$u$ in~$U$. In constrast,
in~\cite{Hastad2001} the values are taken as arbitrary truth
assignments to the variables of a collection of clauses, and not all
such assignments satisfy all the clauses. Our exposition is again more
modular and also matches more recent expositions of the results
in~\cite{Hastad2001} (again, see, e.g.,~\cite{AroraBarakBook}).

With this notational correspondence, it is now easy to see that
Lemma~5.1 in~\cite{Hastad2001} gives the first claim in
Theorem~\ref{thm:hastad}, and Lemma~5.2 in~\cite{Hastad2001} applied
with~$\delta = (s/\epsilon)^{1/2}/4$ gives the second claim in
Theorem~\ref{thm:hastad}.

Next, by composing Lemma~\ref{lem:bipartitereduction},
Theorem~\ref{thm:parallelrepetition}, and Theorem~\ref{thm:hastad}
with the appropriate parameters we get the following:

\begin{theorem} \label{thm:gapamplireduction}
For every~$s,\epsilon \in [0,1]$ with~$0 < s < 1$ and~$\epsilon > 0$,
there is an FO-interpretation~$\Theta$ that maps instances
of~$\prob{3XOR}$ to instances of~$\prob{3XOR}$ in such a way that, for
every~$\prob{3XOR}$ instance~$I$ the following hold:
\begin{enumerate}\itemsep=0pt 
\item if~$I$ is satisfiable, then~$\Theta(I)$
  is~$(1-\epsilon)$-satisfiable,
\item if~$I$ is not~$s$-satisfiable, then~$\Theta(I)$ is
  not~$(1/2+\epsilon)$-satisfiable.
\end{enumerate}
\end{theorem}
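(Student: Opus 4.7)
The plan is to obtain $\Theta$ by composing three reductions developed in the preceding subsections: the bipartite reduction $L$ from \prob{3XOR} to projection-game instances of \prob{LABEL COVER} (Lemma~\ref{lem:bipartitereduction}), parallel repetition $R(\cdot,t)$ (Theorem~\ref{thm:parallelrepetition}), and the first long-code reduction $C(\cdot,\epsilon)$ back to \prob{3XOR} (Theorem~\ref{thm:hastad}). First I would fix the parameters so that the gaps chain correctly. Taking the long-code parameter to be $\epsilon$, Theorem~\ref{thm:hastad} reduces the task to feeding it a \prob{LABEL COVER} instance that is at most $s_2$-satisfiable with $s_2\le 16\epsilon^3$, since then $(s_2/\epsilon)^{1/2}/4\le \epsilon$. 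A non-$s$-satisfiable input $I$ yields, via $L$, an instance that is not $(s+2)/3$-satisfiable; setting $s_1:=(s+2)/3\in(2/3,1)$ and noting that the domain size of $L(I)$ is the fixed constant $d=|A|+|B|=6$, Theorem~\ref{thm:parallelrepetition} applied with any $t$ large enough that $(1-(1-s_1)^3)^{\alpha t/6}\le 16\epsilon^3$ produces the required instance. Completeness follows by composing the first parts of the three results: a satisfiable $I$ maps successively to a satisfiable $L(I)$, a satisfiable $R(L(I),t)$, and finally to a $(1-\epsilon)$-satisfiable $C(R(L(I),t),\epsilon)$.

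It then remains to realise this composition as a single \FO-interpretation from \prob{3XOR} to \prob{3XOR} in the first encoding of Section~\ref{sec:encodings}. The key observation is that, with $s$ and $\epsilon$ fixed, the parameters $t$, $d=6$, $|A|^t$, $|B|^t$, the exponential index sets $\FF{2}^{|A|^t-1}$ and $\FF{2}^{|B|^t-1}$, and the integers $M,N$ with $\epsilon=N/M$ are all constants, independent of $I$. The bipartite reduction is a linear \FO-interpretation. Parallel repetition is a $t$-dimensional interpretation over $t$-tuples. The long-code step creates, for each $u\in U^t$, exactly $2^{|A|^t-1}$ labelled copies as variables, and similarly for each $v\in V^t$; because these counts are fixed constants, the finite-expansion device at the end of Section~\ref{sec:preliminaries} produces them by an interpretation of bounded dimension. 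The constraints of $C(\cdot,\epsilon)$ together with their integer multiplicities $W(u,v)\cdot N^D(M-N)^{q-D}$ are then selected by first-order formulas that refer to the fixed bit-level operations $S$ and $F$, the projection $\pi_{u,v}$ (which is read off from the input predicate $P$ in parallel-repetition form), and the count $D$; since all relevant indices live in constant-size sets, this is expressible.

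The main obstacle lies in this last step: translating the long-code construction into the finite-expansion framework so that the correct number of copies of each $u$, of each $v$, and of each equation (in order to realise the prescribed multiplicities) appear in the output universe, all with adjacencies defined by first-order formulas in the input vocabulary. None of the individual pieces is conceptually deep, but the verification must ensure that the large constants $t$, $|A|^t$, $|B|^t$, $M$, and $N$ are absorbed into the fixed schema of $\Theta$ rather than depending on the instance $I$.
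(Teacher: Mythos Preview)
Your proposal is correct and follows essentially the same route as the paper: compose the bipartite reduction, parallel repetition, and the first long-code reduction, with $t$ chosen so that $(1-(1-(s+2)/3)^3)^{\alpha t/6}\le 16\epsilon^3$, and then argue FO-definability using that all the auxiliary parameters ($t$, $p=4^t$, $q=2^t$, $M$, $N$, and the resulting multiplicities) are constants, invoking the finite-expansion device for the long-code variables. The paper's own proof is almost identical in both the parameter analysis and the FO-interpretation argument.
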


\begin{proof}
First we define~$\Theta(I)$ and then check that this definition is an
FO-interpretation. In anticipation for the proof, let~$t$ be a large
enough integer so that the following inequality holds:
\begin{equation}
(1-(1-(s+2)/3)^3)^{\alpha t/6} \leq 16\epsilon^3,
\end{equation} 
where~$\alpha$ is the constant in
Theorem~\ref{thm:parallelrepetition}. Such a~$t$ exists because~$s <
1$ and~$\epsilon > 0$. Apply the bipartite reduction to~$I$ to obtain
the instance~$I' = L(I)$ from
Lemma~\ref{lem:bipartitereduction}. Observe that the domain size~$d$
of~$I'$ is~$|A|+|B|=6$. Next apply the parallel repetition reduction
to~$I'$ with parameter~$t$ to obtain a new instance~$I''$. Finally
apply the long-code reduction to~$I''$ with parameter~$\epsilon$ to
obtain the system~$I'''$. The parameters were chosen in a way that the
system~$I'''$ satisfies properties~1 and~2, through
Theorem~\ref{thm:hastad}.

It remains to argue that~$I'''$ can be produced from~$I$ by an
FO-interpretation.  To define~$I'$ from~$I$ there is no difficulty at
all: the FO-interpretation is even linear.  To define~$I''$ from~$I'$
we note that~$t$ is a constant, and that the weights~$W(u,v)$ of~$I'$
are~$0$ or~$1$, so again there is no difficulty. In this case the
FO-interpretation has dimension~$t$, and it is~$n^t$-bounded.  To
define~$I'''$ from~$I''$ we note that the domain sizes~$p$ and~$q$ of
the instance~$I''$ are constants, indeed~$p = 4^t$ and~$q = 2^t$. This
means that there are~$|U| \cdot 2^{p-1}$ variables of type~$u(a)$,
and~$|V| \cdot 2^{q-1}$ variables of type~$v(b)$, and these are
constant multiples of~$|U|$ and~$|V|$, respectively. Such domains
are~$\FO$-definable by the method of finite expansions (see
Section~\ref{sec:preliminaries}). Finally, since the weights~$W(u,v)$
of~$I''$ are still zeros or ones and both~$\epsilon$ and~$q$ are
constants, the multiplicities of the equations of~$I'''$ are also
constants, and hence~$\FO$-definable.
\end{proof}

It is useful to compare Theorem~\ref{thm:gapamplireduction} with
Lemma~\ref{lem:both-lift}. Both statements are reductions that
take~$\prob{3XOR}$ instances to~$\prob{3XOR}$ instances, and they both
preserve gaps.  But the reductions differ in what happens to
satisfiable instances.  For statement~\emph{1}, in which the extreme
case is~$c = 1$, the reduction in Lemma~\ref{lem:both-lift} preserves
this extremality exactly.  In contrast, the reduction in
Theorem~\ref{thm:gapamplireduction} incurs a vanishing~$\epsilon$ loss
as it produces instances that are only~$(1-\epsilon)$-satisfiable.

\subsection{Second long-code reduction}

The \emph{second long-code reduction} takes a projection game
instance~$I$ of~$\prob{LABEL COVER}$ and a rational~$\delta \in [0,1]$
and produces an instance~$D(I,\delta)$ of~$\prob{3SAT}$ defined as
follows. Before we define~$D(I,\delta)$, let us define an intermediate
instance~$D'(I,\epsilon)$ of~$\prob{3SAT}$ that takes a different
parameter~$\epsilon \in [0,1]$.
Let~$U$,~$V$,~$m$,~$n$,~$A$,~$B$,~$p$,~$q$,~$W$,~$P$,
and~$\pi_{u,v}(a)$ be as in the first long-code reduction. The set of
variables of~$D(I,\epsilon)$ is defined as in the first long-code
reduction: a variable~$u(a)$ for each~$u \in U$ and each~$a \in
\FF{2}^{p-1}$, and a variable~$v(b)$ for each~$v \in V$ and each~$b
\in \FF{2}^{q-1}$. We also use the \emph{folding} notation~$F(z)$
and~$S(z)$ from the first long-code reduction.  Now the
instance~$D'(I,\epsilon)$ includes~$W(u,v)\cdot M^q \cdot \epsilon^D
\cdot (1-\epsilon)^{E-D} \cdot H$ copies of the clause~$\{
v(F(x))^{(S(x))} , u(F(y))^{(S(y))} , u(F(z))^{(S(z))} \}$ for
each~$(u,v) \in U \times V$, each~$x \in \FF{2}^q$ and each~$y,z \in
\FF{2}^p$, where~$M$ is the denominator of~$\epsilon = N/M$ reduced to
lowest terms,~$E$ is the number of positions~$i \in [p]$
with~$x_{\pi(i)} = 1$ and~$D$ is the number of positions~$i \in [p]$
with~$x_{\pi(i)} = 1$ and~$z_i \not= y_i$ for~$\pi = \pi_{u,v}$
if~$W(u,v) \not= 0$, while~$H \in \{0,1\}$ is the indicator for the
event that in each position~$i \in [p]$ with~$x_{\pi(i)} = 0$ we
have~$z_i \not= y_i$.  Finally, to define the instance~$D(I,\delta)$,
set~$t = \lceil{\delta^{-1}}\rceil$ and~$\epsilon_1 = \delta$,
and~$\epsilon_{i+1} = \delta^{71} 2^{-35} \epsilon_{i}$ for~$i =
1,\ldots,t-1$, and let the instance be~$\bigcup_{i=1}^t
D'(I,\epsilon_i)$.

\begin{theorem}[H{\aa}stad 3-Query Disjunction Test \cite{Hastad2001}]
\label{thm:hastad2}
There exists~$s_0 > 0$ such that for every~$s \in [0,1]$ with~$0 < s <
s_0$ and every projection game instance~$I$ of~$\prob{LABEL COVER}$
the following hold:
\begin{enumerate}\itemsep=0pt 
\item if~$I$ is satisfiable, then~$C(I,\epsilon)$ is
satisfiable,
\item if~$I$ is not~$s$-satisfiable, then~$C(I,\epsilon)$ is not~$(7/8
  + \log_2(1/s)^{-1/2})$-satisfiable.
\end{enumerate}
\end{theorem}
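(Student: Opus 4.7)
The plan is to derive Theorem~\ref{thm:hastad2} from H{\aa}stad's 3-bit disjunction PCP test for \prob{MAX 3SAT} in~\cite{Hastad2001}, by the same route as used for Theorem~\ref{thm:hastad}. The instance produced by the second long-code reduction, $D(I,\delta) = \bigcup_{i=1}^t D'(I,\epsilon_i)$, is essentially a weighted encoding of H{\aa}stad's probabilistic test for~\prob{MAX 3SAT}, and the two halves of the theorem correspond to the completeness and soundness of that test.

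For part~1, I would verify that a satisfying value-assignment $(f,g)$ for $I$ induces a long-code assignment to the variables $u(a)$ and $v(b)$ of $D(I,\delta)$ that satisfies \emph{every} clause in every $D'(I,\epsilon_i)$. Concretely, one sets $u(a)$ according to the folded long code of $f(u) \in A$ and $v(b)$ according to the folded long code of $g(v) \in B$, using the $F,S$ folding to respect folding-over-true. Each clause $\{v(F(x))^{(S(x))},u(F(y))^{(S(y))},u(F(z))^{(S(z))}\}$ is designed together with the indicator $H$ so that, when the tables are honest folded long codes of values compatible through $\pi_{u,v}$, at least one of its three literals is forced to be true; taking the union over all $i$ preserves this property, and the conclusion does not depend on $\epsilon_i$.

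For part~2, I would invoke H{\aa}stad's Fourier-analytic soundness theorem. Given any Boolean assignment to the variables of $D(I,\delta)$ and any $i \in [t]$, the expected fraction of clauses of $D'(I,\epsilon_i)$ satisfied by the assignment can be expanded as a Fourier series in the tables $A_u$ on $\FF{2}^{p-1}$ and $A_v$ on $\FF{2}^{q-1}$, after folding over true as described in the remarks following Theorem~\ref{thm:hastad}. The central inequality of~\cite{Hastad2001} asserts that if this fraction exceeds $7/8 + \eta$ for some~$i$, then a randomized decoding extracts a value-assignment for $I$ whose expected weight is at least a polynomial in $\eta$ divided by a polynomial in $\epsilon_i$. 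The geometric progression $\epsilon_{i+1} = \delta^{71} 2^{-35} \epsilon_i$, with $t = \lceil \delta^{-1} \rceil$ levels, is calibrated so that for the choice $\eta = \log_2(1/s)^{-1/2}$ the decoded weight must exceed $s$, contradicting the $s$-unsatisfiability hypothesis. The threshold $s_0$ in the statement simply guarantees that $\log_2(1/s)^{-1/2}$ is in the range where the inequality is meaningful and the parameters above are well-defined.

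The main obstacle is that the Fourier-analytic soundness inequality is the deepest ingredient in~\cite{Hastad2001} and I would not reprove it from scratch. My substantive work is the notational translation: checking that the weighted multiplicity $W(u,v) \cdot M^q \cdot \epsilon^D \cdot (1-\epsilon)^{E-D} \cdot H$ correctly encodes H{\aa}stad's distributional test on the pair $(u,v)$, that the $F,S$ folding reproduces the folded-over-true tables $A_{U,\text{true}}$, and that the projection-game assumption makes conditioning upon $h$ trivial, so that our $u(a)$ directly play the role of $A_{W,h,\text{true}}$. These translations are parallel to those performed in the discussion following Theorem~\ref{thm:hastad}, and once in place the theorem follows by Lemmas~6.1 and~6.2 of~\cite{Hastad2001} applied with the parameters $\epsilon_i$ and the union structure $\bigcup_i D'(I,\epsilon_i)$.
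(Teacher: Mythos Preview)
Your proposal is correct and takes essentially the same approach as the paper: both treat the theorem as a notational translation of H{\aa}stad's completeness and soundness lemmas for the 3-bit disjunction test, with the substantive work being to match the weighted-instance encoding, the $F,S$ folding, and the projection-game assumption to H{\aa}stad's setup. The paper cites Lemmas~6.12 and~6.13 of~\cite{Hastad2001} (not 6.1 and 6.2), and is slightly more explicit about how $s_0$ is chosen---namely so that for $\delta = 8\log_2(1/s)^{-1/2}/5$ one has $2^{-64\delta^{-2}/25} < 2^{-d\delta^{-1}\log_2(\delta^{-1})}$, which is what makes Lemma~6.13 apply---but otherwise the route is the same.
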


\noindent 
For the proof of Theorem~\ref{thm:hastad2}, see Lemmas 6.12 and 6.13
in \cite{Hastad2001}. As in the first long-code reduction, some
explanation is needed for seeing this.

Besides the notational differences that were already pointed out in
the first long-code reduction, the second long-code reduction adds the
following. First, the constants~$71$ and~$35$ in the definition
of~$\epsilon_{i+1}$ come from setting~$c = 1/35$ in the definition of
Test F3S$^\delta(u)$ in~\cite{Hastad2001}. According to Lemma~6.9
in~\cite{Hastad2001}, this is an acceptable setting of~$c$. Second,
the constant~$s_0 > 0$ in Theorem~\ref{thm:hastad2} is meant to be
chosen small enough so as to ensure that, for each~$s$ satisfying~$s <
s_0$, we have~$2^{-64\delta^{-2}/25} < 2^{-d
  \delta^{-1}\log_2(\delta^{-1})}$ for~$\delta =
8\log_2(1/s)^{-1/2}/5$, where~$d$ is the constant hidden in the
asymptotic~$O$-notation of Lemma~6.13 in~\cite{Hastad2001}.  Such
an~$s_0$ exists because~$N \log_2(N) = o(N^2)$ as~$N \rightarrow
+\infty$. With this notation, Lemma~6.12 in~\cite{Hastad2001} gives
point~1, and Lemma~6.13 in~\cite{Hastad2001} with~$\delta =
8\log_2(1/s)^{-1/2}/5$ gives point~2 in Theorem~\ref{thm:hastad2}.

By composing Lemma~\ref{lem:bipartitereduction},
Theorem~\ref{thm:parallelrepetition}, and Theorem~\ref{thm:hastad2}
with the appropriate parameters we get the following:

\begin{theorem} \label{thm:gapamplireduction2}
For every~$s,\epsilon \in [0,1]$ with~$0 < s < 1$ and~$\epsilon > 0$,
there is an FO-interpretation~$\Theta$ that maps instances
of~$\prob{3XOR}$ to instances of~$\prob{3SAT}$ in such a way that, for
every~$\prob{3XOR}$ instance~$I$ the following hold:
\begin{enumerate}\itemsep=0pt 
\item if~$I$ is satisfiable, then~$\Theta(I)$ is satisfiable,
\item if~$I$ is not~$s$-satisfiable, then~$\Theta(I)$ is
  not~$(7/8+\epsilon)$-satisfiable.
\end{enumerate}
\end{theorem}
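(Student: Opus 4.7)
The plan is to follow the template of the proof of Theorem~\ref{thm:gapamplireduction}: given an instance $I$ of $\prob{3XOR}$, apply the bipartite reduction $L$, then the parallel repetition $R(\cdot,r)$, and finally the second long-code reduction $D(\cdot,\delta)$, with $r$ and $\delta$ chosen so that the resulting $\prob{3SAT}$ instance has the desired completeness and soundness. The two main differences with the first long-code construction are that Theorem~\ref{thm:hastad2} demands the input to be not $s$-satisfiable for $s$ below a universal constant $s_0$, and its soundness slack is $\log_2(1/s)^{-1/2}$, so we need to drive parallel repetition hard enough to push $s$ below both $s_0$ and $2^{-1/\epsilon^2}$. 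On the positive side, the second long-code reduction preserves satisfiability exactly, so the completeness of the overall composition holds without any loss.

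Concretely, I would fix any constant $\delta > 0$ (say $\delta = 1/2$) and choose $r$ large enough that
\[ \bigl(1 - (1 - (s+2)/3)^3\bigr)^{\alpha r/6} \;<\; \min\!\bigl(s_0,\,2^{-1/\epsilon^2}\bigr), \]
which is possible because $s < 1$ makes the inner base strictly less than $1$; here $\alpha$ is the constant from Theorem~\ref{thm:parallelrepetition} and $6$ is the domain size of $L(I)$. Completeness then follows immediately from the fact that each of $L$, $R$, and $D$ preserves satisfiability (points~1 of Lemma~\ref{lem:bipartitereduction}, Theorem~\ref{thm:parallelrepetition}, and Theorem~\ref{thm:hastad2}). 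For soundness, if $I$ is not $s$-satisfiable then $I' = L(I)$ is not $(s+2)/3$-satisfiable by Lemma~\ref{lem:bipartitereduction}, so $I'' = R(I',r)$ is not $s'$-satisfiable for some $s'$ below $\min(s_0, 2^{-1/\epsilon^2})$ by Theorem~\ref{thm:parallelrepetition}, and finally $D(I'',\delta)$ is not $(7/8+\log_2(1/s')^{-1/2})$-satisfiable by Theorem~\ref{thm:hastad2}, which is at most $(7/8+\epsilon)$-satisfiable by our choice of $r$.

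The step I expect to need the most care is verifying that the composition is an $\FO$-interpretation, but it closely parallels the argument for Theorem~\ref{thm:gapamplireduction}. The bipartite reduction $L$ is a linear $\FO$-interpretation, and the parallel repetition $R(\cdot,r)$ with constant $r$ has dimension $r$ and is $n^r$-bounded. For $D(\cdot,\delta)$ the key point is that $\delta$ is a fixed constant, so the number of layers $t = \lceil\delta^{-1}\rceil$ and all the numbers $\epsilon_1,\ldots,\epsilon_t$ in the geometric sequence $\epsilon_{i+1} = \delta^{71}2^{-35}\epsilon_i$ are constants. The input $I''$ has constant domain sizes $p = 4^r$ and $q = 2^r$, so the long-code variables of $D(I'',\delta)$ form a disjoint union of constantly many copies of $U^r$ and $V^r$ and can be defined by the finite expansions method from Section~\ref{sec:preliminaries}. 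Because the weights $W(u,v)$ in $I''$ take values in $\{0,1\}$ and each of the factors $M^q$, $\epsilon_i^D$, $(1-\epsilon_i)^{E-D}$, and the indicator $H$ is a bounded constant, the multiplicities of the clauses in each $D'(I'',\epsilon_i)$ are bounded constants, and the constant-size union $\bigcup_{i=1}^t D'(I'',\epsilon_i)$ is therefore still $\FO$-definable.
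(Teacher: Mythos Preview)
Your proposal is correct and follows essentially the same route as the paper: compose the bipartite reduction, parallel repetition with a constant number of rounds chosen to drive the soundness below both $s_0$ and $2^{-1/\epsilon^2}$, and the second long-code reduction, then argue $\FO$-definability by the same mechanism as in Theorem~\ref{thm:gapamplireduction}. Your treatment of why $D(\cdot,\delta)$ is an $\FO$-interpretation is in fact more detailed than the paper's, which simply refers back to the earlier argument without spelling out the role of the constants $t$, $\epsilon_1,\ldots,\epsilon_t$, the bounded multiplicities, or the $\{0,1\}$-valued weights.
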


\begin{proof}
First we define~$\Theta(I)$ and then check that this definition is an
FO-interpretation.  Let~$t$ be a large enough integer so that the
following inequality holds:
\begin{equation}
(1-(1-(s+2)/3)^3)^{\alpha t/6} \leq \min\{ 2^{-1/\epsilon^2}, s_0 \}
\end{equation} 
where~$\alpha$ is the constant in Theorem~\ref{thm:parallelrepetition}
and~$s_0 > 0$ is small enough as in Theorem~\ref{thm:hastad2}. Such
a~$t$ exists because~$s < 1$ and~$\epsilon > 0$ as well as~$s_0 >
0$. Apply the bipartite reduction to~$I$ to obtain the instance~$I' =
L(I)$ from Lemma~\ref{lem:bipartitereduction}. Observe that the domain
size~$d$ of~$I'$ is~$|A|+|B|=6$. Next apply the parallel repetition
reduction to~$I'$ with parameter~$t$ to obtain a new
instance~$I''$. Finally apply the second long-code reduction to~$I''$
to obtain the system~$I'''$. The parameters were chosen so that the
system~$I'''$ satisfies properties~1 and~2, through
Theorem~\ref{thm:hastad2}. As in the proof of
Theorem~\ref{thm:gapamplireduction} this reduction is FO-definable.
\end{proof}

This gives us another route to Theorem~\ref{thm:3sat-onesided}.

\section{Vertex Cover}

We investigate gap inexpressibility results for the vertex cover
problem~$\prob{VC}$ on graphs. Recall that a set~$X \subseteq V$ of
vertices in a graph~$G=(V,E)$ is a \emph{vertex cover} if every edge
in~$E$ has at least one of its endpoints in~$X$.  If the graph comes
with a weight function~$w : V \rightarrow \mathbb{R}^+$, then the
weight of~$X$ is the sum of the weights of the vertices in~$X$. If the
weights of the vertices are omitted in the specification of the graph,
then all the vertices are assumed to have unit weight. The problem of
finding the minimum weight vertex cover in a graph is a
classic~$\NP$-complete problem.

In the following we write~$\vc{G}$ for the weight of a minimum weight
vertex cover, and~$\vcd{G} := \vc{G}/W_0$, where~$W_0 := \sum_{v \in
  V} w(v)$, for the \emph{vertex cover density}. Analogously, we
write~$\is{G}$ for the weight of a maximum weight independent set,
and~$\isd{G} := \is{G}/W_0$. Clearly~$\vcd{G} = 1 - \isd{G}$ holds for
all weighted graphs.

\subsection{Direct reductions}

The standard reduction that proves the~$\NP$-completeness of the
vertex cover problem (see, e.g.~\cite[Thm.~9.4]{Pap94}) takes an
instance~$I$ of~$\prob{3SAT}$ with~$n$ variables and~$m$ clauses and
gives a graph~$G$ with~$3m$ vertices in which the minimum vertex cover
has size exactly~$2 c m$, if~$c m$ is the maximum number of clauses
in~$I$ that can be simultaneously satisfied.  It is also easy to see
that this reduction can be given as an ~$\FO$-interpretation.  This
interpretation is linearly bounded and therefore it follows from
Theorem~\ref{thm:3sat-onesided} and Lemma~\ref{lem:reduction} that for
any~$\epsilon > 0$ the collection of graphs~$G$ with~$\vcd{G} \leq
7/12+\epsilon$ and the collection of graphs~$G$ with~$\vcd{G} \geq
2/3$ cannot be separated in~$\Ck$ for any~$k = o(n)$.  This has the
consequence that no approximation algorithm for the vertex cover
problem expressible in~$\FPC$ can achieve an approximation ratio
better than~$8/7$.

We can improve on this by considering instead the so-called FGLSS
reduction (see \cite{FGLSS1996}, and \cite{Goldreich2011} for what by
now became standard terminology) from~$\prob{3XOR}$ to vertex-cover,
which we describe next.

\begin{theorem}\label{thm:fglss}
There is a linearly-bounded first-order reduction~$G$ that takes an
instance~$I$ of~$\prob{3XOR}$ with~$m$ equations to a graph~$G(I)$
with~$4m$ vertices so that if~$m^*$ is the maximum number of equations
of~$I$ that can be simultaneously satisfied, then~$\vc{G} = 4m-m^*$.
\end{theorem}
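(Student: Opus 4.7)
The plan is to use the FGLSS-style reduction. For each equation $e$ of $I$, say $x+y+z=b$, there are exactly four satisfying assignments to its three variables. Let $G(I)$ have $4m$ vertices, four per equation, one for each such satisfying local assignment. The edges are of two kinds: within each group of four vertices coming from the same equation, place all $\binom{4}{2}$ edges (a clique), so that any independent set can pick at most one representative per equation; and between two vertices from distinct equations, place an edge exactly when the two encoded local assignments share a variable and disagree on its value.

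The heart of the argument is the identity $\is{G(I)} = m^*$, from which the conclusion follows since $\vc{G(I)} = 4m - \is{G(I)} = 4m - m^*$ for any graph. For the easy direction, any global assignment satisfying $k$ equations yields $k$ vertices (one per satisfied equation, namely the restriction to its three variables) that pairwise avoid both the intra-equation clique edges and the inter-equation consistency edges. For the converse, let $S$ be an independent set. The clique edges force $S$ to contain at most one vertex per equation-group. The consistency edges force all local assignments corresponding to vertices of $S$ to be pairwise consistent on shared variables, so their union is a well-defined partial assignment that, by construction of $G(I)$, satisfies each of the $|S|$ equations it touches; extending arbitrarily to the remaining variables gives a global assignment satisfying at least $|S|$ equations, whence $m^* \geq |S| = \is{G(I)}$.

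For the FO-interpretation, I would work in the first encoding of $\prob{3XOR}$ instances from Section~\ref{sec:encodings}, where the universe contains both variables and equations, the binary relations $E_i$ identify variable positions inside equations, and the unary relations $Z_{R_0}, Z_{R_1}$ record the parity bit of each equation. Using the method of finite expansions with $l = 4$ applied to the formula $Z_{R_0}(x) \lor Z_{R_1}(x)$, the vertex set of $G(I)$ becomes four labelled disjoint copies of the set of equations, yielding a structure of size $4m$; in particular the interpretation is linearly bounded. The edge relation is then a Boolean combination of atomic formulas of $I$, expressed via a finite disjunction over the constantly many cases determined by the two copy-labels (each in $[4]$), the two variable positions (each in $\{1,2,3\}$), and the two parity bits (each in $\{0,1\}$) distinguishing $R_0$-equations from $R_1$-equations.

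The only real work in the construction is bookkeeping the finite case analysis that defines the edge relation: one must spell out, as a first-order formula, the predicate ``label $i \in [4]$ on equation $u$ with parity bit $b_u$ and label $j \in [4]$ on equation $u'$ with parity bit $b_{u'}$ assign the shared variable in positions $(p, p') \in \{1,2,3\}^2$ different values.'' Because all four parameters range over fixed finite sets, this is a routine finite disjunction posing no obstacle, and the resulting interpretation meets all the claimed properties.
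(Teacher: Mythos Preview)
Your proposal is correct and follows essentially the same FGLSS construction as the paper: one vertex per local satisfying assignment, with edges recording inconsistency, and the identity $\is{G(I)}=m^*$ yielding $\vc{G(I)}=4m-m^*$. If anything, you are more thorough: the paper's proof gives only the easy direction of $\is{G(I)}=m^*$ and omits any discussion of why the reduction is a linearly-bounded $\FO$-interpretation, whereas you spell out both directions and sketch the interpretation via finite expansions in the first encoding.
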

\begin{proof}
For each equation~$x+y+z = b$ in~$I$, the graph~$G(I)$ has a 4-clique
of vertices, each labelled with a distinct assignment of values to the
three variables that make the equation true.  In addition, we have an
edge between any pair of vertices that are labelled by inconsistent
assignments.  It is easily seen that the largest independent set
in~$G(I)$ is obtained by taking an assignment~$g$ of values to the
variables of~$I$ that satisfies ~$m^*$ equations and, for each
satisfied equation, selecting the vertex in its 4-clique that is the
projection of~$g$.  This yields an independent set of size
exactly~$m^*$ and the result follows.
\end{proof}

From this, and Theorem~\ref{thm:3lin-onesided}, we immediately get the
following result.
\begin{corollary}\label{cor:Ck-VC1}
For any~$\epsilon > 0$, if~$\class{C}$ is the collection of graphs~$G$
with~$\vcd{G} \leq 3/4$ and~$\class{D}$ is the collection of
graphs~$G$ with~$\vcd{G} \geq 7/8-\epsilon$ then~$\class{C}$
and~$\class{D}$ are not~$\Ck$-separable for any~$k = k(n)$ such
that~$k(n) = o(n)$.
\end{corollary}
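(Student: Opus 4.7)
The plan is to derive the corollary as a direct consequence of Theorem~\ref{thm:3lin-onesided} and Theorem~\ref{thm:fglss}, using Lemma~\ref{lem:reduction} in its contrapositive form. The FGLSS reduction from 3XOR to VC converts a quantitative gap in satisfiability into a quantitative gap in vertex-cover density, and since it is a linearly-bounded first-order interpretation, it transports a non-separability lower bound from 3XOR to VC with only a constant-factor loss.

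First I would compute what the reduction does to densities. If $I$ has $m$ equations and maximum satisfiable count $m^*$, then by Theorem~\ref{thm:fglss} the graph $G(I)$ has $4m$ vertices and $\vc{G(I)} = 4m - m^*$, so $\vcd{G(I)} = 1 - m^*/(4m)$. If $I$ is satisfiable, then $m^* = m$ and $\vcd{G(I)} = 3/4$, placing $G(I)$ in $\class{C}$. If $I$ is not $(1/2+\epsilon')$-satisfiable, then $m^* < (1/2+\epsilon')m$ and $\vcd{G(I)} > 7/8 - \epsilon'/4$. Choosing $\epsilon' = 4\epsilon$ ensures that $\vcd{G(I)} \geq 7/8 - \epsilon$, so $G(I) \in \class{D}$.

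Now let $\class{C}_0$ be the satisfiable 3XOR instances and $\class{D}_0$ the 3XOR instances that are not $(1/2+\epsilon')$-satisfiable. By Theorem~\ref{thm:3lin-onesided}, these are not $\Ck$-separable for any $k(n) = o(n)$. Suppose, for contradiction, that $\class{C}$ and $\class{D}$ were $C^{k(n)}$-separable for some $k(n) = o(n)$. Since $G(\class{C}_0) \subseteq \class{C}$ and $G(\class{D}_0) \subseteq \class{D}$, this separation would also separate $G(\class{C}_0)$ from $G(\class{D}_0)$. Applying Lemma~\ref{lem:reduction} to $G$, which is a linearly bounded interpretation of some constant dimension $d$ with a constant number $t$ of variables, would yield a $C^{dk(p(n))+t}$-separation of $\class{C}_0$ from $\class{D}_0$, where $p$ is linear. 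Since $dk(p(n)) + t = o(n)$, this contradicts Theorem~\ref{thm:3lin-onesided}.

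There is no substantive obstacle in this argument: the quantitative content lives in the density calculation in the first step and in the parameters of Lemma~\ref{lem:reduction}, both of which are routine. The one point that warrants a word of care is verifying that the FGLSS interpretation of Theorem~\ref{thm:fglss} indeed fits the hypotheses of Lemma~\ref{lem:reduction}, namely that it has fixed dimension (the four vertices per equation can be introduced via finite expansions as in Section~\ref{sec:preliminaries}) and a bounded formula complexity; this is already asserted in the statement of Theorem~\ref{thm:fglss}, so the corollary really is immediate.
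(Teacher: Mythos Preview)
Your proposal is correct and matches the paper's approach exactly: the paper states the corollary as an immediate consequence of Theorem~\ref{thm:fglss} and Theorem~\ref{thm:3lin-onesided} without giving any further argument, and your write-up supplies precisely the routine density computation and the appeal to Lemma~\ref{lem:reduction} that this entails.
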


%


This improves the FPC inapproximability ratio from~$8/7$ to~$7/6$.
Better lower bounds on the approximation ratio are known under the
assumption that~$\PT\neq \NP$.  One such lower bound was achieved by
Dinur and Safra~\cite{DinurSafra} who showed that, under this
assumption, no polynomial-time algorithm for approximating vertex
cover can achieve an approximation ratio better than~$1.36$.  In the
next section we argue that this reduction is also an
FO-interpretation, so we get the same inapproximability ratio for
algorithms that are expressible in~$\FPC$, giving a strengthening of
Corollary~\ref{cor:Ck-VC1}.

\subsection{Dinur-Safra reduction}

As in the long-code reductions from Section~\ref{sec:amplifying}, this
reduction is also composed of several steps: we start with the
bipartite reduction, continue with the parallel repetition reduction,
then we apply an \emph{intermediate reduction} to a technical variant
of the independent set problem, and end with a long-code reduction
that is specially tailored for the vertex cover problem.

The \emph{intermediate reduction} takes a projection game instance~$I$
of~$\prob{LABEL COVER}$ as input and produces an undirected
graph~$G(I)$ defined as follows.  Let~$U$,~$V$,~$A$,~$B$,~$W$
and~$\pi_{u,v}$ determine the projection game instance~$I$. The set of
vertices of the graph~$G(I)$ is~$U \times A$.  There is an edge
between~$(u_1,a_1) \in U \times A$ and~$(u_2,a_2) \in U \times A$
in~$G(I)$ if, and only, if either~$u_1 = u_2$ and~$a_1 \not= a_2$,
or~$u_1 \not= u_2$ and there exists~$v \in V$ such that~$W(u_1,v) > 0$
and~$W(u_2,v) > 0$ and~$\pi_{u_1,v}(a_1) \not= \pi_{u_2,v}(a_2)$. This
defines~$G(I)$.  In the terminology of~\cite{DinurSafra}, the
graph~$G(I)$ is ~$(m,p)$-co-partite: its edge-set is the complement of
an~$m$-partite graph with all its parts of size~$p$.

For an undirected (unweighted) graph~$G$, recall that~$\is{G}$ denotes
the size of a largest independent set in~$G$. For an integer~$h \geq
2$, let~$\his{h}{G}$ denote the size of a largest subset of vertices
of~$G$ that does not contain any~$h$-clique. Note that~$\his{h}{G}
\geq \is{G}$, and~$\his{2}{G} = \is{G}$.

\begin{lemma} \label{lem:intermediatereduction} 
For every integer~$h \geq 2$, every~$s \in [0,1]$ with~$s > 0$ and
every instance~$I$ of ~$\prob{LABEL COVER}$ that is a left-regular
projection game with uniform weights and parameters~$(m,n,p,q)$, the
following hold:
\begin{enumerate} \itemsep=0pt
\item if~$I$ is satisfiable, then~$\is{G(I)} = m$,
\item if~$I$ is not~$s$-satisfiable, then~$\his{h}{G(I)} \leq sh^3 m$.
\end{enumerate}
Moreover,~$G(I)$ is an~$(m,p)$-co-partite graph.
\end{lemma}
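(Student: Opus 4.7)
The plan is to establish the three claims of the lemma separately.

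\textbf{Moreover clause.} By construction of $G(I)$, every pair $(u,a_1),(u,a_2)$ with $a_1\neq a_2$ is an edge, so each part $\{u\}\times A$ is a $p$-clique and the complement of $G(I)$ has no edges within the $m$ parts $\{u\}\times A$. Thus $G(I)$ is $(m,p)$-co-partite.

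\textbf{Point 1.} The upper bound $\is{G(I)}\le m$ is immediate because the $m$ cliques $\{u\}\times A$ cover the vertex set, forcing any independent set to meet each part in at most one vertex. For the lower bound, given a satisfying value-assignment $(f,g)$ of $I$, I would take $S_f=\{(u,f(u)) : u\in U\}$, of size $m$, and show it is independent: for distinct $u_1,u_2$ and any $v$ with $W(u_1,v)>0$ and $W(u_2,v)>0$, the projection-game constraints force $\pi_{u_1,v}(f(u_1))=g(v)=\pi_{u_2,v}(f(u_2))$, so no $v$ witnesses an edge.

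\textbf{Point 2.} I would argue the contrapositive: if $S\subseteq U\times A$ has no $h$-clique and $|S|>sh^3 m$, then $I$ is $s$-satisfiable. Setting $S_u:=\{a:(u,a)\in S\}$ and $T:=\{u : S_u\neq\emptyset\}$, the clique structure of each part forces $k_u:=|S_u|\le h-1$, so $|T|\ge|S|/(h-1)$. The plan is to analyse the random value-assignment in which $f(u)$ is drawn uniformly from $S_u$ for $u\in T$ (and arbitrarily otherwise) and $g(v)$ is chosen to maximize the number of constraints at $v$ satisfied by $f$. I would show that the expected fraction of constraints satisfied is at least $s$; an averaging argument then produces a deterministic assignment meeting the bound.

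The main difficulty is to extract the factor $h^3$ from the no-$h$-clique hypothesis. The key combinatorial input I plan to use is that for each $v\in V$, if the projected labels $\pi_{u,v}(a)$ concentrate on a single value $b\in B$ across many $(u,a)\in S$ with $u\in N_v$, then selecting one $(u,a)$ per $u$ from this consensus class yields $h$ vertices with pairwise distinct $u$'s; these cannot form an $h$-clique, so some pair must already be non-adjacent in $G(I)$, and non-adjacency forces agreement of their projections at every other common $v'$, making many additional constraints simultaneously satisfiable. Balancing the $1/(h-1)$ losses from randomizing $f(u)$ inside $S_u$ against the double counting of such consensus classes is expected to deliver the quantitative $sh^3 m$ bound.
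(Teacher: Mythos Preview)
The paper does not give its own proof of this lemma; it simply cites Theorem~2.1 of Dinur--Safra. So there is no paper-proof to compare against at the level of detail, only the underlying Dinur--Safra argument.

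Your Moreover clause and Point~1 are correct and match the standard argument.

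For Point~2 your framework (contrapositive, $|S_u|\le h-1$, random $f(u)\in S_u$, then choose $g(v)$ greedily) is the right one, but the ``key combinatorial input'' you sketch is pointed in the wrong direction and, as written, does not close. You look at $v$'s where many $(u,a)\in S$ share the \emph{same} projected value $b$, then invoke the no-$h$-clique hypothesis to find a single non-adjacent pair, and hope to propagate that pair's agreement to other $v'$. But one non-adjacent pair per $h$-tuple gives you information about only two vertices, and nothing in your outline explains how to aggregate these isolated pairwise agreements into an assignment satisfying an $s$-fraction of constraints. The ``balancing'' sentence is a placeholder, not an argument.

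The actual use of the no-$h$-clique hypothesis is the opposite observation and is much simpler: for each fixed $v$, any $h$ vertices $(u_i,a_i)\in S$ with $u_i\in N_v$ and \emph{pairwise distinct} projected values $\pi_{u_i,v}(a_i)$ form an $h$-clique (pairs with equal $u$ are adjacent inside a part; pairs with distinct $u$ are adjacent because $v$ witnesses disagreement). Hence the set $\{\pi_{u,v}(a):(u,a)\in S,\ u\in N_v\}$ has at most $h-1$ elements. Now let $g(v)$ be uniform over this set (independently of $f$). For every constraint $(u,v)$ with $u\in T$, we have $\pi_{u,v}(f(u))$ lying in that set, so $\Pr[g(v)=\pi_{u,v}(f(u))]\ge 1/(h-1)$. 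By left-regularity with uniform weights, the expected fraction of constraints satisfied is at least
\[
\frac{|T|}{m(h-1)}\;\ge\;\frac{|S|}{m(h-1)^2}\;>\;\frac{sh^3}{(h-1)^2}\;\ge\;s,
\]
and an averaging argument yields the $s$-satisfiable assignment. This is the content of the Dinur--Safra proof; replacing your concentration-and-bootstrap paragraph by the ``at most $h-1$ projected values per $v$'' observation makes your outline into a complete proof.
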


\noindent Note that, in the statement of
Lemma~\ref{lem:intermediatereduction}, the completeness case (point
\emph{1}) is about~$\is{G}$ but the soundness case (point \emph{2}) is
about~$\his{h}{G}$. For the proof of
Lemma~\ref{lem:intermediatereduction}, see the proof of Theorem~2.1
in~\cite{DinurSafra}.

The \emph{vertex-cover long-code reduction} inputs a graph~$G =
(V,E)$, two rational parameters~$\epsilon > 0$ and~$p$ satisfying~$0 <
p < p_{\mathrm{max}} := (3-\sqrt{5})/2$, and two integer parameters~$h
\geq 2$ and~$r \geq 2$, and, if~$G$ is an~$(m,r)$-co-partite graph for
some~$m$, then it produces a (weighted) graph~$H(G,\epsilon,p,h; r)$
defined as follows. Let the vertex-set~$V$ of~$G$ be~$[m] \times [r]$,
so that~$\{(u,a) : a \in [r]\}$ forms an~$r$-clique for each~$u \in
[m]$. In abstract, the set of vertices of~$H(G,\epsilon,p,h; r)$ is
\begin{equation} 
W := \{ (B,\powerset{\kpowerset{\geq l_{1}}{B}}) : B \in
\kpowerset{=l}{V} \}, \label{eqn:defW}
\end{equation} 
where~$l = 2l_{1} \cdot r$, and~$l_{1}$ is an integer that depends
only on~$h$,~$\epsilon$, and~$p$, and is independent of~$r$, that is
set as in Definition~2.3 of~\cite{DinurSafra}. Here, and in the
following,~$\kpowerset{=k}{X}$ and~$\kpowerset{\geq k}{X}$ denote the
collections of subsets of~$X$ of size exactly~$k$ and size at
least~$k$, respectively, and~$\powerset{X}$ denotes the collection of
all subsets of~$X$. Thus, if~$n = mp$ is the number of vertices
of~$G$, then~$H(G,\epsilon,p,h;r)$ has~$\binom{n}{l} \cdot
2^{\sum_{i=l_{1}}^l \binom{l}{i}}$ vertices.  Since we want to be able
to show that for fixed~$r$,~$h$,~$\epsilon$ and~$p$ the
graph~$H(G,\epsilon,p,h;r)$ can be produced from~$G$ by
an~$\FO$-interpretation, we give an alternative presentation of the
set of vertices~$W$.

Let~$V^{l,\not=}$ denote the set of~$l$-tuples of pairwise distinct
elements from~$V$. Formally,
\begin{equation}
V^{l,\not=} := 
\{(u_1,\ldots,u_l) \in V^l : u_i \not= u_j \text{ for } i,j \in [l]
\text{ with } i \not= j \}. \label{eqn:notfinalaltdefW}
\end{equation} 
For each~$\vect{u} = (u_1,\ldots,u_l) \in V^{l,\not=}$,
let~$\sigma_{\vect{u}} : \{1,\ldots,l\} \rightarrow
\{u_1,\ldots,u_l\}$ be the natural bijection defined
by~$\sigma_{\vect{u}}(i) = u_i$ for~$i = 1,\ldots,l$. The set
\begin{equation}
W' := V^{l,\not=} \times \powerset{\kpowerset{\geq
    l_1}{[l]}} \label{eqn:proxy}
\end{equation}
is a good proxy for the set~$W$ through the identification
of~$\{1,\ldots,l\}$ and~$\{u_1,\ldots,u_l\}$ given
by~$\sigma_{\vect{u}}$.  Now, turning~$W'$ into a faithful copy of~$W$
is only a matter of taking a quotient with the appropriate equivalence
relation, as we do next.  

Let~$\sim$ be the equivalence relation
on~$V^l$ defined by~$(u_1,\ldots,u_l) \sim (v_1,\ldots,v_l)$ if and
only if for each~$i \in [l]$ there exists~$j \in [l]$ with~$v_j = u_i$
and for each~$j \in [l]$ there exists~$i \in [l]$ with~$u_i = v_j$.
Restricted to~$V^{l,\not=} \subseteq V^l$, this is still an
equivalence relation.  Moreover, whenever~$\vect{u} =
(u_1,\ldots,u_l)$ and ~$\vect{v} = (v_1,\ldots,v_l)$
are~$\sim$-equivalent tuples in ~$V^{l,\not=}$, there is a unique
permutation~$\pi \in S_l$ that sends~$\vect{u}$ to~$\vect{v}$; i.e.,
that satisfies~$\pi \cdot \vect{u} = \vect{v}$, or~$u_{\pi(i)} = v_i$
for each~$i \in [l]$.  Now we extend this equivalence relation~$\sim$
from the set ~$V^{l,\not=}$ to the set~$V^{l,\not=} \times
\powerset{\kpowerset{\geq l_1}{[l]}}$ as follows:~$(\vect{u},S) \sim
(\vect{v},T)$ if, and only if,~$\vect{u} \sim \vect{v}$ and the unique
permutation~$\pi \in S_l$ that sends~$\vect{u}$ to ~$\vect{v}$ also
sends~$S$ to~$T$; i.e., it satisfies~$\pi \cdot S = T$, where~$\pi
\cdot S$ denotes the natural action of~$\pi$ on ~$S$. It is not hard
to see that the set of equivalence classes
\begin{equation}
W'' := (V^{l,\not=} \times \powerset{\kpowerset{\geq l_1}{[l]}})/\!\sim
\label{eqn:altdefW}
\end{equation} 
is an alternative presentation of the same set~$W$.  This alternative
presentation of~$W$ is useful when we argue that the reduction is
an~$\FO$-interpretation in Theorem~\ref{thm:dinursafrareduction}
below.

We still need to define the vertex-weights and the edge-set
of~$H(G,\epsilon,p,h;r)$. The weight of a vertex~$(B,S)$ in~$W$ is
defined as
\begin{equation}
w(B,S) := M^q \cdot p^{|S|} \cdot (1-p)^{|\kpowerset{\geq
    l_1}{B}\setminus S|}, \label{eqn:weight}
\end{equation}
where~$M$ is the denominator of~$p = N/M$ reduced to lowest terms,
and~$q = |\kpowerset{\geq l_1}{B}|$. Next we define the edge-set: two
vertices~$(B_1,S_1)$ and~$(B_2,S_2)$ in~$W$ are adjacent if, and only
if, either~$B_1 = B_2$ and~$S_1 \cap S_2 = \emptyset$, or there exist
an edge~$\{v_1,v_2\} \in E$ of~$G$ and an~$(l-1)$-element
subset~$\hat{B}$ of~$V$ such that~$B_1 = \hat{B} \cup \{v_1\}$
and~$B_2 = \hat{B} \cup \{v_2\}$ and, for all~$(A_1,A_2) \in S_1
\times S_2$, either~$A_1 \cap \hat{B} \not= A_2 \cap \hat{B}$,
or~$(v_1,v_2) \in A_1 \times A_2$.


\begin{theorem}[Dinur-Safra Vertex-Cover Test \cite{DinurSafra}] \label{thm:dinursafralongcode}
For any two rationals~$\epsilon$ and~$p$ satisfying~$0 < \epsilon \leq
1$ and~$0 < p < p_{\mathrm{max}} = (3-\sqrt{5})/2$, any small
enough~$s_0 > 0$, any large enough integer~$h$, and
any~$(m,r)$-co-partite graph~$G$, the following hold:
\begin{enumerate} \itemsep=0pt
\item if~$\is{G} = m$, then~$\isd{H(G,\epsilon,p,h;r)} \geq p - \epsilon$,
\item if~$\his{h}{G} \leq s_0 m$, then~$\isd{H(G,\epsilon,p,h;r)} \leq
  p^{\bullet} + \epsilon$, where~$p^{\bullet} = \max(p^2,4p^3-3p^4)$.
\end{enumerate}
\end{theorem}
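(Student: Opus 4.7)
The plan is to follow the biased long-code framework; the constants $p_{\mathrm{max}}=(3-\sqrt{5})/2$ and $p^{\bullet}=\max(p^{2},4p^{3}-3p^{4})$ are the thresholds dictated by the extremal theory of biased intersecting and Kneser-like families, and the proof splits naturally into completeness and soundness, both relying on the $p$-biased product distribution on $\powerset{\kpowerset{\geq l_{1}}{B}}$ from which the weight function $w(B,S)$ is derived. The work will mostly reduce to analysing the behaviour of each $B$-slice $J_{B}:=\{S:(B,S)\in J\}$ of a candidate independent set $J$ as a subset of this biased product space.

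For (1), I would construct an explicit independent set of weighted density approaching $p$ from a maximum independent set $T\subseteq V(G)$. Since $G$ is $(m,r)$-co-partite with $\is{G}=m$, such a $T$ picks exactly one vertex from each of the $m$ parts, so it can be viewed as a function $t:[m]\to[r]$. For each $l$-set $B$ and each $v\in T\cap B$, put $F_{B,v}:=\{A\in\kpowerset{\geq l_{1}}{B}:v\in A\}$ and include in the candidate set all $(B,S)$ whose slice $S$ is ``consistent with $t$'' in the sense of lying in a suitable $p$-biased Hamming-neighbourhood of $F_{B,v}$ for some $v\in T\cap B$. The edge-rule of $H$ was designed precisely so that non-adjacency reduces to the independence of $T$ in $G$: for intra-$B$ edges, two $t$-consistent slices share a set $A\in\kpowerset{\geq l_{1}}{B}$ containing a common $v\in T\cap B$; for inter-$B$ edges $B_{1}=\hat B\cup\{v_{1}\}$, $B_{2}=\hat B\cup\{v_{2}\}$ with $\{v_{1},v_{2}\}\in E(G)$, the independence of $T$ prevents $v_{1},v_{2}$ from both lying in $T$, leaving only configurations where both slices realign to a common $v\in T\cap\hat B$ and a witnessing pair $A_{1}=A_{2}\subseteq\hat B$ emerges. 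A direct computation with the $p$-biased product measure gives weighted density $p-O(1/l_{1})$, and choosing $l_{1}$ large enough in terms of $\epsilon$ and $p$ yields $p-\epsilon$.

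For (2) I would argue contrapositively, using biased long-code Fourier analysis. Suppose $J\subseteq W$ is an independent set of density exceeding $p^{\bullet}+\epsilon$; the goal is to extract $T\subseteq V(G)$ of size $>s_{0}m$ containing no $h$-clique. For each $B$, the slice $J_{B}$ is monotone under the $p$-biased measure (forced by the $B_{1}=B_{2}$ edge rule), so the $p$-biased analogue of Friedgut's junta theorem implies that above density $p^{\bullet}$ the family $J_{B}$ concentrates its $p$-biased Fourier spectrum on a small ``core'' junta of coordinates $A\in\kpowerset{\geq l_{1}}{B}$. The inter-$B$ edge-rule forces these juntas to be coherent across any two $B$'s differing by an edge of $G$, so a majority-vote decoding over all $B$ containing a fixed vertex $v\in V(G)$ produces a randomised labelling of $V(G)$. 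A standard averaging/union-bound argument then extracts a deterministic $T\subseteq V(G)$ of size $>s_{0}m$, and the edge-condition of $H$ translates precisely into $T$ being $h$-clique-free once $h$ is taken large enough in terms of $p$ and $\epsilon$.

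The hard part is undoubtedly (2): the tight value $p^{\bullet}=\max(p^{2},4p^{3}-3p^{4})$ is exactly where the extremal density of biased intersecting/Kneser-type families transitions from the ``$p^{2}$ regime'' (dominant for small $p$) to the ``$4p^{3}-3p^{4}$ regime''; matching this bound with a genuine algorithmic decoding demands the full strength of the $p$-biased Fourier-analytic toolkit, together with a careful list-decoding step that transfers junta cores across the hypergraph of $l$-sets. By comparison, completeness is essentially a dictator construction and is routine once the weight definition and the edge-rule have been unpacked.
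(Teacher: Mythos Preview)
The paper does not give its own proof of this statement: immediately after the theorem it writes ``For the proof of Theorem~\ref{thm:dinursafralongcode}, see Theorem~2.2 in~\cite{DinurSafra}.'' The result is imported as a black box from Dinur and Safra, and the paper's own contribution in this section is only to verify that the surrounding reductions can be cast as $\FO$-interpretations. So there is no proof in the paper to compare your proposal against.

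As a sketch of the actual Dinur--Safra argument your outline is in the right spirit, but one technical point is off and would derail the soundness analysis as you have written it. You assert that the intra-$B$ edge rule forces each slice $J_B$ to be \emph{monotone}; in fact the rule is that $(B,S_1)$ and $(B,S_2)$ are adjacent precisely when $S_1\cap S_2=\emptyset$, so independence of $J$ makes each $J_B$ an \emph{intersecting} family (any two members meet), not a monotone one. Consequently the relevant extremal input is not Friedgut's junta theorem for low-influence Boolean functions but the Dinur--Safra/Friedgut structural analysis of $p$-biased intersecting (and, across neighbouring blocks, cross-intersecting) families; this is exactly where the threshold $p^{\bullet}=\max(p^2,4p^3-3p^4)$ and the restriction $p<p_{\mathrm{max}}=(3-\sqrt5)/2$ arise, via Ahlswede--Khachatrian-type extremal configurations. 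With that correction, your high-level split into a dictator construction for completeness and a Fourier-analytic list-decoding for soundness does match the approach in~\cite{DinurSafra}.
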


\noindent For the proof of Theorem~\ref{thm:dinursafralongcode}, see
Theorem~2.2 in~\cite{DinurSafra}.

The reduction described above produces a \emph{weighted}
graph~$H(G,\epsilon,p,h;r)$.  The weights, as defined
in~(\ref{eqn:weight}) are non-negative integers with a maximum value
of~$M^q$.  This value depends on~$\epsilon, p, h$ and~$r$ but is
independent of the number of vertices of~$G$.  In other words, fixing
the other parameters,~$H$ gives us a traslation from~$G$ to a weighted
graph, with integer weights bounded by a constant.  This can be easily
modified to get an unweighted graph.  Indeed, let~$H = (V,E,W)$ be a
graph with a weight function~$W: V \ra \nats$.  We define from this an
unweighted graph~$H'$ with~$\vcd{H'} = \vcd{H}$.  This is obtained by
replacing each vertex~$v$ by the set of vertices~$v^* := \{v\}\times
[W(v)]$ and having an edge between~$(u,i)$ and~$(v,j)$ if, and only
if,~$\{u,v\} \in E$.  To see that this has the right property, it is
sufficient to observe that~$S \subseteq V$ is a minimum weight vertex
cover in~$H$ if, and only if,~$S^* := \bigcup_{v \in S} v^*$ is a
minimum vertex cover of~$H'$.  The direction from right to left is
obvious.  For the other direction, suppose that~$H'$ has a minimum
vertex cover~$X$ that is not of this form.  In particular, for some~$v
\in V$,~$v^* \cap X \neq \emptyset$ and~$v^* \not\subseteq X$.  But
then~$X \setminus v^*$ is still a vertex cover, contradicting the
minimality of~$X$.

By composing Lemma~\ref{lem:bipartitereduction},
Theorem~\ref{thm:parallelrepetition} and
Theorem~\ref{thm:dinursafralongcode} with the appropriate parameters
and combining it with the observation above we get the following.

\begin{theorem} \label{thm:dinursafrareduction}
For every~$s,\epsilon,p$ with~$0 < s,\epsilon < 1$, and~$1/3 < p <
p_{\mathrm{max}} = (3-\sqrt{5})/2$, there is
an~$\FO$-interpretation~$\Theta$ that maps instances of~$\prob{3XOR}$
to undirected graphs in such a way that, for every~$\prob{3XOR}$
instance~$I$ the following hold:
\begin{enumerate} \itemsep=0pt
\item if~$I$ is satisfiable, then~$\vcd{\Theta(I)} \geq 1-p+\epsilon$,
\item if~$I$ is not~$s$-satisfiable, then~$\vcd{\Theta(I)} \leq
  1-p^{\bullet}-\epsilon$, where~$p^{\bullet} = \max(p^2,4p^3-3p^4)$.
\end{enumerate}
\end{theorem}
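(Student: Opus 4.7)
The plan is to build $\Theta$ as a composition of four reductions developed earlier in the paper, followed by the unweighting transformation $H \mapsto H'$ described just before the theorem. In order, I would apply the bipartite reduction $L$ from Lemma~\ref{lem:bipartitereduction}, then the parallel repetition reduction $R(\cdot,t)$ from Theorem~\ref{thm:parallelrepetition}, then the intermediate reduction $G(\cdot)$ from Lemma~\ref{lem:intermediatereduction}, and finally the vertex-cover long-code reduction $H(\cdot,\epsilon',p,h;r)$ from Theorem~\ref{thm:dinursafralongcode}, followed by the unweighting blow-up. Given $s,\epsilon,p$, I would first pick $s_0 > 0$ and an integer $h$ as promised by Theorem~\ref{thm:dinursafralongcode} for an auxiliary error parameter $\epsilon' < \epsilon$, and then choose the repetition exponent $t$ large enough that
\[
\bigl(1-(1-(s+2)/3)^{3}\bigr)^{\alpha t/6} \leq s_0/h^3,
\]
so that the soundness guarantees chain correctly through Lemma~\ref{lem:bipartitereduction}(2), Theorem~\ref{thm:parallelrepetition}(2), and Lemma~\ref{lem:intermediatereduction}(2).

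The gap analysis is then a routine composition of the stated guarantees. In the completeness case, satisfiability of $I$ propagates through $L$ and $R(\cdot,t)$, so $\is{G}=m$ by Lemma~\ref{lem:intermediatereduction}(1), whence $\isd{\Theta(I)} \geq p - \epsilon'$ by Theorem~\ref{thm:dinursafralongcode}(1). In the soundness case, the chain drives the bound down to $\his{h}{G} \leq s_0 m$ and then $\isd{\Theta(I)} \leq p^{\bullet} + \epsilon'$. Since the unweighting step preserves vertex-cover density and $\vcd{H} = 1 - \isd{H}$ for every graph $H$, translating these independence-density bounds gives the two conclusions of the theorem after absorbing $\epsilon'$ into $\epsilon$.

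The substantive work will be verifying that the composite reduction is an $\FO$-interpretation. The bipartite step is a linear $\FO$-interpretation; parallel repetition with a constant $t$ is an interpretation of dimension $t$; and because the domains $A^t$ and $B^t$ of the resulting projection game have constant size, the intermediate reduction $G(\cdot)$ can be produced by the method of finite expansions from Section~\ref{sec:preliminaries}, its edge relation being an existential quantification over $V^t$ together with constant-size checks on the projections $\pi_{u,v}$. The main obstacle is the vertex-cover long-code reduction $H(\cdot,\epsilon',p,h;r)$. Here I would use the quotient presentation $W'' = (V^{l,\neq} \times \powerset{\kpowerset{\geq l_1}{[l]}})/\!\sim$ from~(\ref{eqn:altdefW}): since $l$ and $l_1$ are constants, the second factor is a fixed finite set that can be encoded by finite expansions; the equivalence $\sim$ is $\FO$-definable because the unique permutation relating two equivalent tuples lies in the finite symmetric group $S_l$, so the condition ``$\pi\cdot S = T$'' unfolds into a finite disjunction over $\pi \in S_l$; and the edge relation spelled out after~(\ref{eqn:weight}) becomes an existential statement over a single edge $\{v_1,v_2\}$ of the input graph and an $(l-1)$-element set $\hat{B}$, combined with constant-size Boolean checks indexed by the finite sets $S_1,S_2$. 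Finally, the unweighting $v \mapsto v \times [W(v)]$ is itself a finite expansion because the weights~(\ref{eqn:weight}) take values in a fixed finite set bounded by the constant $M^q$, which completes the $\FO$-interpretation $\Theta$.
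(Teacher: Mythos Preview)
Your proposal is correct and follows essentially the same approach as the paper: the same composition $L \to R(\cdot,t) \to G(\cdot) \to H(\cdot,\epsilon,p,h;r) \to$ unweighting, the same choice of $t$ via the displayed inequality $(1-(1-(s+2)/3)^3)^{\alpha t/6} \leq s_0/h^3$, and the same argument for $\FO$-definability via the quotient presentation $W''$ in~(\ref{eqn:altdefW}) together with the method of finite expansions. Your treatment is slightly more explicit about the edge relation of $H$ and about introducing an auxiliary $\epsilon' < \epsilon$, but these are cosmetic differences.
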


\begin{proof}
Firt we define~$\Theta(I)$ and then check that it is
an~$\FO$-interpretation. Let~$t$ be a large enough integer so that the
following inequality holds:
\begin{equation}
(1-(1-(s+2)/3)^3)^{\alpha t/6} \leq s_0/h^3
\end{equation}
when~$s_0$ is small enough, and~$h$ is large enough, so that
Theorem~\ref{thm:dinursafralongcode} applies. Such a~$t$ exists
because~$s < 1$ and~$s_0 > 0$. Apply the bipartite reduction to~$I$ to
obtain the instance~$I' = L(I)$ from
Lemma~\ref{lem:bipartitereduction}. The domain size of~$I'$
is~$6$. Apply the parallel repetition reduction of
Theorem~\ref{thm:parallelrepetition} to~$I'$ with parameter~$t$ to get
another instance~$I''$. Next apply the intermediate reduction of
Lemma~\ref{lem:intermediatereduction} to get a graph~$G$. Finally,
apply the Dinur-Safra long-code reduction of
Theorem~\ref{thm:dinursafralongcode} to get a weighted graph~$H$ and
convert it to an unweighted graph that is the output of~$\Theta$. The
parameters were chosen in such a way that the points 1 and 2 hold via
the relationship~$\vcd{H} = 1 - \isd{H}$.

We still need to check that~$\Theta$ is an~$\FO$-interpretation.  As
in the proof of Theorem~\ref{thm:gapamplireduction}, producing~$I'$
from~$I$ and~$I''$ from~$I'$ is straightforward. Producing~$G =
G(I'')$ from~$I''$ is equally straightforward: the definition of the
intermediate reduction is explicit enough that this can be checked
directly, especially because the weights of~$I''$ are still zeros and
ones. On the other hand, producing~$H = H(G,\epsilon,p,h; r)$ from~$G$
requires some explanation.

In the description of the vertex-cover long-code reduction we already
described~$W''$ as an alternative presentation~\eqref{eqn:altdefW}
of~$W$ in~\eqref{eqn:defW}. This alternative presentation suggests
that the vertex-set of~$H$ be defined by an~$\FO$-interpretation of
dimension~$l$ through the method of finite expansions from
Section~\ref{sec:preliminaries} to produce~$W'$ in~\eqref{eqn:proxy},
followed by a quotient by an~$\FO$-definable equivalence relation. The
method of finite expansions produces a set of the form~$V^{l,\not=}
\times A$ for some bounded set~$A$ that
codes~$\powerset{\kpowerset{l_1}{[l]}}$. The effect of the quotient
on~$V^{l,\not=} \times A$ can be achieved through the
equality-defining formula~$\epsilon(\vect{x},\vect{y})$ of
the~$\FO$-interpretation, which in this case can be designed as
follows. Let~$(\vect{u},\vect{a})$ and~$(\vect{v},\vect{b})$ be two
elements of the expanded domain~$V^{l,\not=} \times A$.  We
want~$\epsilon(\vect{x},\vect{y})$ to tell if~$\vect{u}$
and~$\vect{v}$ involve exactly the same elements from~$V$ and, in such
a case, whether the unique permutation that takes~$\vect{u}$
to~$\vect{v}$ also takes the set of subsets of~$[l]$ coded
by~$\vect{a}$ to the set of subsets of~$[l]$ coded by~$\vect{b}$. The
first part can be stated by means of a simple quantifier-free
formula. The second part can also be stated by a quantifier-free
formula (that depends on~$l$) by taking a disjunction over all~$l!$
potential permutations of~$[l]$.

Once the domain is defined as~$W''$ in equation~\eqref{eqn:altdefW},
defining the edge-set is easy. Defining the weights is also
straightforward given that~$h$,~$\epsilon$,~$p$ and~$l$ are all fixed
constants independent of~$G$, and as noted above, we can replace the
weights with sets of unweighted vertices.
\end{proof}

Now we can state the improved version of Corollary~\ref{cor:Ck-VC1}.
Composing Theorem~\ref{thm:3lin-onesided},
Theorem~\ref{thm:dinursafrareduction}, and Lemma~\ref{lem:reduction}
we get the following.

\begin{theorem}\label{thm:Ck-VC-improved}
For any~$\epsilon > 0$ there is a~$\delta > 0$ such that
if~$\class{C}$ is the collection of graphs~$G$ with~$\vcd{G} \leq
1-4p_{\mathrm{max}}^3 + 3p_{\mathrm{max}}^4-\epsilon$ and ~$\class{D}$
is the collection of graphs~$G$ with~$\vcd{G} \geq
1-p_{\mathrm{max}}+\epsilon$ then~$\class{C}$ and~$\class{D}$ are
not~$\Ck$-separable for any~$k = k(n)$ such that~$k(n) =
o(n^{\delta})$, where~$p_{\max} = (3-\sqrt{5})/2$.
\end{theorem}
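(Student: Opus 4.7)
The plan is to derive Theorem~\ref{thm:Ck-VC-improved} by combining the $\prob{3XOR}$ undefinability gap of Theorem~\ref{thm:3lin-onesided} with the $\FO$-interpretation supplied by Theorem~\ref{thm:dinursafrareduction}, pushing the counting-width lower bound across via Lemma~\ref{lem:reduction}.

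Given $\epsilon > 0$, I would first choose a rational $p$ with $1/3 < p < p_{\max}$ sufficiently close to $p_{\max}$ that $1 - p \leq 1 - p_{\max} + \epsilon/2$ and $1 - p^{\bullet} \geq 1 - 4p_{\max}^3 + 3p_{\max}^4 - \epsilon/2$; this is possible by continuity, since on $(1/3, p_{\max})$ one has $p^{\bullet} = 4p^3 - 3p^4$. Pick any $s \in (1/2, 1)$, set $\epsilon_1 := s - 1/2 > 0$, and apply Theorem~\ref{thm:dinursafrareduction} with parameters $(s, \epsilon/2, p)$ to obtain an $\FO$-interpretation $\Theta$ from $\prob{3XOR}$ to graphs such that satisfiable $I$ gives $\vcd{\Theta(I)} \geq 1 - p + \epsilon/2 \geq 1 - p_{\max} + \epsilon$, and not-$s$-satisfiable $I$ gives $\vcd{\Theta(I)} \leq 1 - p^{\bullet} - \epsilon/2 \leq 1 - 4p_{\max}^3 + 3p_{\max}^4 - \epsilon$. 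Writing $\class{C}_0$ for the satisfiable $\prob{3XOR}$ instances and $\class{D}_0$ for those that are not $(1/2+\epsilon_1)$-satisfiable (hence also not $s$-satisfiable), this yields $\Theta(\class{C}_0) \subseteq \class{D}$ and $\Theta(\class{D}_0) \subseteq \class{C}$.

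Next I would read off the complexity parameters of $\Theta$ from its construction in the proof of Theorem~\ref{thm:dinursafrareduction}: the bipartite step is linear, parallel repetition contributes a constant dimension $t = t(\epsilon, s)$, the intermediate reduction is linear, and the Dinur--Safra long-code step contributes a constant dimension $\ell = \ell(\epsilon, p, h)$ via the method of finite expansions applied to the tuple encoding of the vertex set $W$. All of $t, h, \ell, l_1$ are constants depending only on the fixed $\epsilon, s, p$, not on the input instance, so $\Theta$ has some constant overall dimension $d$, uses a constant number $\tau$ of variables per formula, and is $q$-bounded with $q(n) = O(n^d)$.

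Finally, set $\delta := 1/(2d)$ and argue by contraposition. If $\class{C}$ and $\class{D}$ were $C^{k(n)}$-separable for some $k(n) = o(n^{\delta})$, then the same $\Cequiv{k}$-closed separating class would separate the subsets $\Theta(\class{D}_0) \subseteq \class{C}$ and $\Theta(\class{C}_0) \subseteq \class{D}$; by Lemma~\ref{lem:reduction}, the pair $\class{C}_0, \class{D}_0$ would then be $C^{dk(q(n)) + \tau}$-separable, and $dk(q(n)) + \tau = o(n^{d\delta}) = o(n^{1/2}) = o(n)$, contradicting Theorem~\ref{thm:3lin-onesided}. The main obstacle is the careful bookkeeping that $d$ is indeed a constant independent of the input size, which requires inspecting every parameter introduced by the four pieces of $\Theta$ and verifying each depends only on $\epsilon, s, p, h$; once this is done, the contradiction is immediate and yields the claimed $\delta = \delta(\epsilon) > 0$.
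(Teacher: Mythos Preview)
Your plan is exactly the paper's one-line proof: compose Theorem~\ref{thm:3lin-onesided}, Theorem~\ref{thm:dinursafrareduction}, and Lemma~\ref{lem:reduction}; your explicit extraction of a constant dimension~$d$ for~$\Theta$ and the choice~$\delta = 1/(2d)$ is the right way to make quantitative what the paper leaves implicit.

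There is, however, an arithmetic slip in your parameter bookkeeping. Since $p < p_{\max}$ and $p \mapsto 4p^3 - 3p^4$ is increasing on $(1/3,p_{\max})$, you have $p^{\bullet} < p_{\max}^{\bullet} = 4p_{\max}^3 - 3p_{\max}^4$, so the inequality you assert,
\[
1 - p^{\bullet} - \epsilon/2 \;\leq\; 1 - 4p_{\max}^3 + 3p_{\max}^4 - \epsilon,
\]
would force $p^{\bullet} \geq p_{\max}^{\bullet} + \epsilon/2$, which is impossible. Likewise your first chain $1 - p + \epsilon/2 \geq 1 - p_{\max} + \epsilon$ needs $p \leq p_{\max} - \epsilon/2$, conflicting with your stated choice $p \geq p_{\max} - \epsilon/2$. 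The repair is simple: invoke Theorem~\ref{thm:dinursafrareduction} with an inner parameter \emph{larger} than the outer~$\epsilon$ (for instance $2\epsilon$), and then choose~$p$ close enough to~$p_{\max}$ that $p_{\max}^{\bullet} - p^{\bullet} \leq \epsilon$; with these choices both containments $\Theta(\class{C}_0) \subseteq \class{D}$ and $\Theta(\class{D}_0) \subseteq \class{C}$ go through. This is purely a matter of adjusting constants and does not affect the structure of your argument.
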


\noindent 
In terms of algorithms, Theorem~\ref{thm:Ck-VC-improved} says that no
algorithm that can be expressed in~$\FPC$, or even~$\Ck$ for~$k =
n^{o(1)}$, can achieve an approximation ratio better
than~$(1-4p_{\max}^3 + 3p_{\max}^4)/(1-p_{\max}) \approx 1.36$.  In
particular, this means that~$n^{O(1)}$ levels of the Lasserre
hierarchy are necessary to give an approximation algorithm for vertex
cover with an approximation better than~$1.36$.  This result was
previously known from the work of Tulsiani~\cite{Tul09}.

\subsection{Tight lower and upper bounds for~$C^2$}

There are straightforward polynomial-time algorithms that yield a
vertex cover in a graph with guaranteed approximation ratio~$2$.  It
is conjectured that no polynomial-time algorithm can achieve an
approximation ratio of~$2-\epsilon$ for \emph{any}~$\epsilon > 0$; it
is even conjectured that achieving such an approximation ratio is
NP-hard.  It would be interesting to prove a version of this
conjecture for algorithms expressible in~$\FPC$, and without the
assumption that~$\PT\neq \NP$.  This could be established by a
strengthened version of Theorem~\ref{thm:Ck-VC-improved} with better
ratios.  We next show that we can at least do this for the special
case of~$k=2$.

\begin{theorem}\label{thm:c2vc-lower}
For any~$\epsilon > 0$, if~$\class{C}$ is the collection of graphs~$G$
with~$\vcd{G} \leq 1/2$ and~$\class{D}$ is the collection of
graphs~$G$ with~$\vcd{G} \geq 1-\epsilon$ then~$\class{C}$
and~$\class{D}$ are not~$C^2$-separable.
\end{theorem}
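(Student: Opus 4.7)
The plan is to invoke, for each fixed $\epsilon>0$, the combinatorial characterisation of $\Cequiv{2}$ recalled earlier in the excerpt: two graphs are $\Cequiv{2}$ iff their coarsest equitable vertex partitions agree in number, size, and pairwise edge counts. Since any $d$-regular graph on $N$ vertices already has the one-class partition $\{V\}$ as an equitable partition (with $\delta_{11}=d$), all $d$-regular graphs of the same order on $N$ vertices are automatically $\Cequiv{2}$-equivalent. So it suffices to exhibit, on a common vertex set and with a common regularity, one regular graph that is bipartite (forcing $\vcd{G}\leq 1/2$) and another whose components are large cliques (forcing $\vcd{H}$ close to~$1$).

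Concretely, given $\epsilon>0$, I would fix an integer $d$ with $1/(d+1)\leq \epsilon$ and set $N=2d(d+1)$. Let $G$ be the disjoint union of $d+1$ copies of $K_{d,d}$, and let $H$ be the disjoint union of $2d$ copies of $K_{d+1}$. Both have exactly $N$ vertices and are $d$-regular, so by the above characterisation $G \Cequiv{2} H$.

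To check the cover densities: a minimum vertex cover of $K_{d,d}$ has size exactly $d$ (one whole side suffices, and by König's theorem nothing smaller does), so $\vc{G}=(d+1)d$ and $\vcd{G}=d(d+1)/(2d(d+1))=1/2$, placing $G\in\class{C}$. A minimum vertex cover of $K_{d+1}$ has size $d$ (one can omit at most one vertex of an independent set in a clique), so $\vc{H}=2d\cdot d$ and $\vcd{H}=d/(d+1)=1-1/(d+1)\geq 1-\epsilon$, placing $H\in\class{D}$. This produces the required witness pair, showing that $\class{C}$ and $\class{D}$ are not $C^2$-separable.

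There is no real obstacle here: once one commits to making the two graphs regular of the same degree on the same number of vertices, the $\Cequiv{2}$-equivalence is automatic from the equitable-partition characterisation, and the only remaining task is the easy numerical check that disjoint unions of $K_{d,d}$'s and of $K_{d+1}$'s fit together with matching parameters $N$ and $d$. The small arithmetic puzzle of matching orders is solved by taking $N=2d(d+1)$, and the only ``tuning'' left is choosing $d$ large enough in terms of $\epsilon$, which is immediate.
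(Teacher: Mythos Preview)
Your argument is correct and rests on the same key observation the paper uses: any two $d$-regular graphs on the same number of vertices are $\Cequiv{2}$-equivalent, so it suffices to exhibit one such pair with vertex-cover densities on opposite sides of the gap.

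Where you differ from the paper is in the choice of witnesses. The paper fixes $d=3$ throughout and, for the high-density side, invokes a family of $3$-regular expander graphs (whose largest independent set has size $o(n)$, hence $\vcd{G_n}\to 1$); for the low-density side it takes any $3$-regular bipartite graph of the same order. Your construction is more elementary and entirely explicit: you let $d$ grow with $\epsilon$ and take disjoint unions of $K_{d,d}$'s versus disjoint unions of $K_{d+1}$'s, matching the vertex count via $N=2d(d+1)$. This avoids appealing to the existence of expanders and makes the densities computable by hand. The trade-off is that your graphs are disconnected and the degree is unbounded in $\epsilon$, whereas the paper's examples are connected and stay $3$-regular; neither feature is needed for the theorem as stated, so your route is a perfectly good (and arguably cleaner) alternative.
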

\begin{proof}
Let~$(G_n)_{n\in\NN}$ be a family of~$3$-regular expander graphs
on~$n$ vertices, so that the largest independent set in~$G_n$ has
size~$o(n)$. For the existence of such graphs see
\cite[Chapter~4]{VadhanSurvey}.  It follows that the smallest vertex
cover in~$G_n$ has size~$n - o(n)$.  Hence, we can choose a value ~$m$
such that~$G_{2m}$ has no vertex cover smaller than ~$2m(1-\epsilon)$.

Let~$H_m$ be a~$3$-regular bipartite graph on two sets of~$m$
vertices.  Now, each part of a bipartite graph is a vertex cover,
so~$H_m$ has a vertex cover of size~$m$.  However, it is known that~$G
\Cequiv{2} H$ holds for any pair~$G$ and~$H$ of~$d$-regular graphs
with the same number of vertices, for any~$d$.  Thus,~$G_{2m}
\Cequiv{2} H_m$ and the result follows.
\end{proof}

Essentially, Theorem~\ref{thm:c2vc-lower} tells us that no algorithm
that is invariant under~$\Cequiv{2}$ can determine~$\vc{G}$ to an
approximation better than~$2$, and Theorem~\ref{thm:Ck-VC-improved}
tells us that no algorithm that is invariant under~$\Cequiv{k}$ for
constant or even slowly growing~$k$ can determine~$\vc{G}$ to an
approximation better than~$1.36$.  A legitimate question at this point
is whether there is any algorithm that is invariant
under~$\Cequiv{k}$, such as one expressible in~$\FPC$ would be, that
\emph{does} achieve an approximation ratio of~$2$.  The natural
polynomial-time algorithms that give a vertex cover with size at
most~$2\vc{G}$ are not expressible in~$\FPC$.  Indeed, we cannot
expect a formula of~$\FPC$ to define an actual vertex cover in a
graph~$G$ as this is not invariant under automorphisms of~$G$.  We can
only ask for an estimate of the size, i.e.\ of~$\vc{G}$, and this we
can get up to a factor of~$2$.  For this, it turns out that~$k=2$ is
enough, showing that the lower bound of Theorem~\ref{thm:c2vc-lower}
is tight:

\begin{theorem}\label{thm:c2vc-upper}
For any~$\delta$, if~$\class{C}$ is the collection of graphs~$G$
with~$\vcd{G} \leq \delta$ and~$\class{D}$ is the collection of
graphs~$G$ with~$\vcd{G} > 2\delta$ then~$\class{C}$ and ~$\class{D}$
are~$\Cequiv{2}$-separable.
\end{theorem}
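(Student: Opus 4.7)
The plan is to construct a $\Cequiv{2}$-invariant class~$\class{E}$ that separates~$\class{C}$ from~$\class{D}$, by taking~$\class{E}$ to be the collection of graphs~$G$ whose standard LP relaxation of vertex cover has value at most~$\delta|V(G)|$. Concretely, set
\[
\mathrm{LP}(G) := \min\Big\{ \textstyle\sum_{v}x_v \;:\; x_u+x_v\geq 1 \text{ for every edge } \{u,v\},\; x_v\geq 0\Big\},
\]
and $\class{E} := \{\, G : \mathrm{LP}(G)\leq\delta|V(G)|\,\}$.

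Two things need to be checked. First, that $\class{E}$ separates the two classes. Here I would appeal to the classical $2$-approximation bound $\mathrm{LP}(G)\leq\vc{G}\leq 2\,\mathrm{LP}(G)$: the lower inequality because the indicator of any minimum vertex cover is LP-feasible, the upper inequality because the LP admits a $\{0,\tfrac12,1\}$-valued optimum, which can be rounded by pushing every value~$\geq\tfrac12$ up to~$1$. Thus $G\in\class{C}$ yields $\mathrm{LP}(G)\leq\vc{G}\leq\delta|V(G)|$ and $G\in\class{E}$, while $G\in\class{D}$ yields $\mathrm{LP}(G)\geq\vc{G}/2>\delta|V(G)|$ and $G\not\in\class{E}$.

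The heart of the argument is showing that $\mathrm{LP}(G)$ is itself a $\Cequiv{2}$-invariant. Using the vertex-refinement characterization recalled in Section~\ref{sec:preliminaries}, let $C_1,\dots,C_m$ be the coarsest equitable partition of~$V(G)$ with $\delta_{ij}$ the common number of neighbours in~$C_j$ of any vertex of~$C_i$. I would show that $\mathrm{LP}(G)$ equals the value of the quotient LP on variables $z_1,\dots,z_m$,
\[
\min\Big\{ \textstyle\sum_i |C_i|\,z_i \;:\; z_i + z_j \geq 1 \text{ whenever } \delta_{ij}>0,\; z_i\geq 0\Big\},
\]
whose value depends only on the data~$(|C_i|,\delta_{ij})$ and hence only on the $\Cequiv{2}$-class of~$G$. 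One direction is trivial: any quotient-feasible~$z$ lifts to an LP-feasible~$x$ via $x_v:=z_i$ for $v\in C_i$, with the same objective. The other direction is the averaging step: given an LP optimum~$x^*$, set $\bar x_v:=|C_i|^{-1}\sum_{u\in C_i}x^*_u$ for $v\in C_i$; then $\bar x$ has the same objective as~$x^*$, and reading off $z_i:=\bar x_v$ gives a quotient-feasible solution of value~$\mathrm{LP}(G)$.

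The main obstacle, and the only nontrivial point, is checking that $\bar x$ is LP-feasible, i.e.\ that
\[
\frac{1}{|C_i|}\sum_{u\in C_i}x^*_u \;+\; \frac{1}{|C_j|}\sum_{v\in C_j}x^*_v \;\geq\; 1
\]
whenever $\delta_{ij}>0$. I would prove this by summing the original edge constraint $x^*_u+x^*_v\geq 1$ over the $|C_i|\delta_{ij}=|C_j|\delta_{ji}$ edges between~$C_i$ and~$C_j$ (or the $|C_i|\delta_{ii}/2$ edges inside~$C_i$ when $i=j$), and dividing through by the appropriate total; the double-counting identity $|C_i|\delta_{ij}=|C_j|\delta_{ji}$ is precisely the equitable-partition feature that makes the averaging work. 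Combined with the first two steps, this yields a $\Cequiv{2}$-invariant $\class{E}$ with $\class{C}\subseteq\class{E}\subseteq\overline{\class{D}}$, giving the required $\Cequiv{2}$-separation.
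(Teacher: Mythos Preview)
Your proof is correct and takes a genuinely different route from the paper's. The paper defines a bespoke integer quantity $v_G$: it splits the refinement classes into $X=\{i:\delta_{ii}=0\}$ and $Y=\{i:\delta_{ii}>0\}$, sets $v_G$ to be the \emph{integral} weighted vertex cover of the loopless quotient on $X$ plus the full weight $\sum_{i\in Y}|C_i|$, and then proves $\vc{G}\leq v_G\leq 2\vc{G}$ via three short lemmas (the upper bound uses a fractional-cover argument on the quotient, close in spirit to your averaging step). You instead use $\mathrm{LP}(G)$ itself as the invariant, appeal to half-integrality for the sandwich $\mathrm{LP}(G)\leq\vc{G}\leq 2\,\mathrm{LP}(G)$, and establish $\Cequiv{2}$-invariance by showing $\mathrm{LP}(G)$ equals the value of the quotient LP via a clean symmetrisation argument.

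What each buys: your approach is more uniform (no $X$/$Y$ split; self-loops are handled by the same averaging, yielding $z_i\geq\tfrac12$ automatically) and connects directly to the standard LP machinery, making the result read as ``the VC LP is $\Cequiv{2}$-invariant''. The paper's approach avoids citing half-integrality and produces an integer-valued certificate $v_G$ that is itself the size of an explicit vertex cover of $G$, which is a slightly stronger structural statement. Both proofs ultimately hinge on the same double-counting identity $|C_i|\delta_{ij}=|C_j|\delta_{ji}$; your verification of the averaged edge constraint is essentially the same computation the paper uses to check that $f(i)=|S\cap C_i|/|C_i|$ is a fractional cover of the quotient.
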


The proof of Theorem~\ref{thm:c2vc-upper} proceeds through a series of
lemmas.

\begin{lemma}\label{lem:half-regular}
If~$G$ is a~$d$-regular graph on~$n$ vertices, for any~$d \geq 1$,
then~$\vc{G} \geq n/2$.
\end{lemma}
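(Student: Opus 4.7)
The plan is to use a simple edge-counting (double-counting) argument. The graph $G$ has exactly $nd/2$ edges, since it is $d$-regular on $n$ vertices. Any vertex cover $S$ must, by definition, be incident to every edge. Since each vertex of $G$ has degree $d$, each vertex of $S$ is incident to at most $d$ edges, so the total number of edges covered by $S$ is at most $d \cdot |S|$. This must be at least $|E| = nd/2$, giving $|S| \geq n/2$, and hence $\vc{G} \geq n/2$.

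An equivalent way to package this, which I would mention as an alternative, is via the standard duality $\vc{G} = n - \is{G}$: if $I$ is an independent set in a $d$-regular graph, then the $d|I|$ edge-endpoints at vertices of $I$ must all lie in edges going to $V \setminus I$, and each vertex of $V \setminus I$ can absorb at most $d$ such endpoints, yielding $d|I| \leq d|V\setminus I|$ and hence $|I| \leq n/2$; therefore $\vc{G} \geq n/2$. Either route is a one-line proof and there is no real obstacle; the only thing to be careful about is the case $d=1$ (where $G$ is a perfect matching on $n$ vertices, forcing $n$ even and giving $\vc{G}=n/2$ exactly), which is already subsumed by the general argument.

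Given the triviality of the argument, I would simply present the edge-counting proof in two sentences and move on, as its purpose in the paper is clearly as a lemma to be combined with the vertex-refinement characterisation of $\Cequiv{2}$ in later steps of Theorem~\ref{thm:c2vc-upper}.
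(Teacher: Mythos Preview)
Your edge-counting argument is exactly the paper's proof: a vertex cover $S$ can be incident to at most $d|S|$ edges, and since $G$ has $dn/2$ edges this forces $|S|\geq n/2$. The alternative independent-set phrasing you mention is not used in the paper, but the primary argument matches it essentially verbatim.
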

\begin{proof}
Let~$S$ be any set of vertices in~$G$.  Then the number of edges
incident on vertices in~$S$ is at most~$d|S|$.  Since the number of
edges in~$G$ is~$d n/2$, if~$S$ is a vertex cover~$d |S| \geq d n/2$
and so~$|S| \geq n/2$.
\end{proof}

Let~$G$ be a graph and~$C_1,\ldots,C_m$ be the partition of the
vertices of~$G$ given by \emph{vertex refinement}.  So, there are
constants~$\delta_{ij}$ such that each~$v \in C_i$ has
exactly~$\delta_{ij}$ neighbours in~$C_j$.  Since the graph is
undirected, the number of edges from~$C_i$ to~$C_j$ is the same as in
the other direction and so~$\delta_{ij}|C_i| = \delta_{ji}|C_j|$, for
all~$i$ and~$j$.  Also,~$\delta_{ij} = 0$ if, and only
if,~$\delta_{ji} = 0$.

Let~$X = \{ i \mid \delta_{ii} = 0\}$ and~$Y = \{i \mid \delta_{ii} >
0 \}$.  Consider the undirected graph~$X_G$ with vertices~$X$ and
edges~$\{ (i,j) \mid \delta_{ij} > 0\}$.  Consider the
instance~$(X_G,w)$ of \emph{weighted vertex cover} obtained by taking
the graph~$X_G$ and giving each vertex~$i$ the weight~$w(i) = |C_i|$.
Let~$p_G$ denote the value of the minimum weighted vertex cover of
this instance.  Also, let~$q_G = \sum_{i\in Y} |C_i|$.  Finally,
define~$v_G = p_G + q_G$.

\begin{lemma}\label{lem:c2vc-invariant}
If~$G \Cequiv{2} H$ then~$v_G = v_H$.
\end{lemma}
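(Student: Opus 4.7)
The plan is to invoke directly the combinatorial characterization of $\Cequiv{2}$ in terms of vertex refinement recalled in Section~\ref{sec:preliminaries}: if $G \Cequiv{2} H$ then, writing $C_1,\ldots,C_m$ and $D_1,\ldots,D_m$ for their refinement partitions with respective parameters $\delta_{ij}$ and $\gamma_{ij}$, there is a permutation $h \in \mathrm{Sym}_m$ such that $|C_i| = |D_{h(i)}|$ and $\delta_{ij} = \gamma_{h(i)h(j)}$ for all $i,j$. The idea is that this permutation $h$ identifies \emph{all} the data entering the definition of $v_G$ with that entering the definition of $v_H$, so the quantities must agree.

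First, I would observe that $h$ maps $X = \{i : \delta_{ii} = 0\}$ bijectively onto $\{j : \gamma_{jj} = 0\}$ and $Y = \{i : \delta_{ii} > 0\}$ bijectively onto $\{j : \gamma_{jj} > 0\}$, simply because $\delta_{ii} = \gamma_{h(i)h(i)}$. Using $|C_i| = |D_{h(i)}|$, this gives
\[
q_G = \sum_{i \in Y} |C_i| = \sum_{i \in Y} |D_{h(i)}| = \sum_{j : \gamma_{jj}>0} |D_j| = q_H.
\]

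Next, I would argue that $h$ restricts to a weighted-graph isomorphism $X_G \to X_H$. The weights agree by $|C_i| = |D_{h(i)}|$, and the edge relation agrees because $\{i,j\}$ is an edge of $X_G$ iff $\delta_{ij} > 0$, which by the identity $\delta_{ij} = \gamma_{h(i)h(j)}$ holds iff $\{h(i),h(j)\}$ is an edge of $X_H$. An isomorphism of weighted graphs preserves the minimum weighted vertex cover value, so $p_G = p_H$. Adding the two equalities yields $v_G = p_G + q_G = p_H + q_H = v_H$.

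There is no real obstacle here; the only thing to double-check is that the partition into $X$ and $Y$ and the associated reduced weighted graph $X_G$ really are $\Cequiv{2}$-invariants, and this reduces to the transparent fact that $h$ preserves both the size data $|C_i|$ and the adjacency data encoded by the $\delta_{ij}$. In other words, the whole construction defining $v_G$ depends only on the isomorphism type of the multiset $\{(|C_i|, (\delta_{ij})_j) : i \in [m]\}$, which is exactly what is preserved by $\Cequiv{2}$.
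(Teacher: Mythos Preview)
Your proposal is correct and is essentially the same argument as the paper's proof, just spelled out in more detail: the paper simply asserts that $v_G$ depends only on the sizes $|C_i|$ and the parameters $\delta_{ij}$, which are precisely the data preserved by $\Cequiv{2}$. Your explicit construction of the weighted-graph isomorphism $X_G \to X_H$ via the permutation $h$ is exactly how one would unpack that assertion.
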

\begin{proof}
The value~$v_G$ is determined entirely by the sizes of~$C_i$ in the
vertex refinement of~$G$ and the corresponding values
of~$\delta_{ij}$.  Since~$G \Cequiv{2} H$, these values are the same
for~$H$.
\end{proof}

\begin{lemma}\label{lem:c2vc-lower}~$\vc{G} \leq v_G$.
\end{lemma}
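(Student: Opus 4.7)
The plan is to construct an explicit vertex cover of $G$ whose size is at most $v_G$. Let $S \subseteq X$ be a minimum weighted vertex cover of the instance $(X_G,w)$, so $\sum_{i \in S} |C_i| = p_G$. Define
\[
T \;=\; \bigcup_{i \in S} C_i \;\cup\; \bigcup_{i \in Y} C_i.
\]
Since $S \subseteq X$ and $X \cap Y = \emptyset$, the two unions are disjoint, so $|T| = p_G + q_G = v_G$. It then suffices to verify that $T$ is a vertex cover of $G$.

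To check this, I take an arbitrary edge $\{u,v\}$ of $G$, with $u \in C_i$ and $v \in C_j$. There are three cases, all easy. If $i = j$, then $\delta_{ii} > 0$, so $i \in Y$ and therefore $C_i \subseteq T$, covering the edge. If $i \neq j$ and one of $i,j$ belongs to $Y$, then the corresponding class is included in $T$ and again the edge is covered. Finally, if $i \neq j$ and both $i,j \in X$, then from $\delta_{ij} > 0$ we get that $\{i,j\}$ is an edge of $X_G$; since $S$ is a vertex cover of $X_G$, either $i \in S$ or $j \in S$, and in either case $C_i \subseteq T$ or $C_j \subseteq T$, covering the edge.

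I do not anticipate a real obstacle here: the only point worth double-checking is that the definition of $X_G$ together with the weights $w(i) = |C_i|$ really makes the set $T$ constructed from a minimum weighted cover of $X_G$ contribute exactly $p_G$ vertices of $G$, and that the classes in $Y$ together contribute exactly $q_G$ further vertices disjoint from the first collection. Both of these follow immediately from $X \cap Y = \emptyset$ and the disjointness of the vertex-refinement classes $C_1,\dots,C_m$. Thus $\vc{G} \leq |T| = v_G$, as required.
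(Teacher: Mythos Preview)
Your proof is correct and essentially identical to the paper's own argument: take a minimum-weight vertex cover of $(X_G,w)$, throw in all classes indexed by $Y$, and verify edge-by-edge that the resulting union of classes covers $G$. The only cosmetic differences are notation (the paper calls your $S$ and $T$ respectively $Z$ and $S$) and that you separate out the case $i=j$ explicitly, whereas the paper folds it into the ``$i$ or $j$ in $Y$'' case.
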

\begin{proof}
Let~$Z \subseteq X$ be a minimum-weight vertex cover in~$(X_G,w)$.
Take the set~$S \subseteq V(G)$ defined by ~$S = \bigcup_{i \in Y \cup
  Z} C_i$.  Note that the sets~$Y$ and~$Z$ are disjoint, ~$\sum_{i\in
  Y} |C_i| = q_G$ by definition, and~$\sum_{i\in Z} |C_i| = p_G$ by
construction.  So~$S$ has exactly~$v_G$ vertices.  We claim that~$S$
is a vertex cover in~$G$.  Let~$e$ be any edge of~$G$ with endpoints
in~$C_i$ and~$C_j$.  If either~$i$ or~$j$ is in~$Y$, then the
corresponding endpoint of~$e$ is in~$S$ since~$C_i \subseteq S$ for
all~$i\in Y$.  If both~$i$ and~$j$ are not in~$Y$ then both are in~$X$
and~$\delta_{ij} > 0$.  Thus, since~$Z$ is a vertex cover for the
graph~$X_G$ then one of~$i$ or~$j$ must be in~$Z$ and again at least
one endpoint of~$e$ is in~$S$.
\end{proof}

For the proof of the next lemma, we need the notion of a
\emph{fractional vertex cover} of a graph~$G=(V,E)$.  This is a
function~$f: V \ra [0,1]$ satisfying the condition that for
every~$(u,v) \in E$,~$f(u) + f(v) \geq 1$.  It is known that if~$f$ is
a fractional vertex cover of~$G$, then~$\sum_{v \in V} f(v) \geq
\vc{G}/2$ (see~\cite[Thm.~14.2]{Vaz03}).  More generally, suppose we
have an instance of \emph{weighted vertex cover}, i.e.~$G$ along with
a weight function~$w: V \ra \nats$ where~$\vc{G,w}$ is defined as the
value of the minimum weighted vertex cover.  Then~$\sum_{v \in V}
f(v)w(v) \geq \vc{G,w}/2$.

\begin{lemma}\label{lem:c2vc-upper}~$v_G \leq 2\vc{G}$.
\end{lemma}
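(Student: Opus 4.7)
The plan is to take any minimum vertex cover $S$ of $G$ and show that it must use many vertices from each color class $C_i$: at least half of each class indexed by $Y$, and a ``fractional'' amount on classes in $X$ that forms a fractional vertex cover of $(X_G,w)$. Summing these contributions will recover $v_G/2$, giving $\vc{G}\geq v_G/2$.

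First I would handle $Y$. If $i \in Y$ then $\delta_{ii}\geq 1$, so the subgraph of $G$ induced on $C_i$ is a $\delta_{ii}$-regular graph. By Lemma~\ref{lem:half-regular}, its minimum vertex cover has size at least $|C_i|/2$, and since $S\cap C_i$ is in particular a vertex cover of that induced subgraph we obtain $|S\cap C_i|\geq |C_i|/2$. Summing over $i\in Y$ yields $\sum_{i\in Y}|S\cap C_i|\geq q_G/2$.

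Next I would define, for $i \in X$, the fraction $f(i) := |S \cap C_i|/|C_i|$ and show that $f$ is a fractional vertex cover of $X_G$. Fix an edge $\{i,j\}$ of $X_G$, so $\delta_{ij}>0$. The number of $G$-edges between $C_i$ and $C_j$ is $\delta_{ij}|C_i|=\delta_{ji}|C_j|$. The $(|C_i|-|S\cap C_i|)\delta_{ij}$ edges from $C_i\setminus S$ to $C_j$ must all be covered on the $C_j$ side, so $(|C_i|-|S\cap C_i|)\delta_{ij}\leq |S\cap C_j|\,\delta_{ji}$. Dividing by $\delta_{ji}|C_i|$ and using $\delta_{ij}|C_i|=\delta_{ji}|C_j|$ gives
\begin{equation*}
1 - f(i) \;\leq\; f(j),
\end{equation*}
so $f(i)+f(j)\geq 1$ as required. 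Applying the weighted version of the fact that a fractional vertex cover has weight at least $\vc{X_G,w}/2$ (which is stated just before the lemma), we get $\sum_{i\in X} f(i)\,|C_i| \geq p_G/2$, i.e.\ $\sum_{i\in X}|S\cap C_i|\geq p_G/2$. Adding the two bounds, $|S|\geq p_G/2 + q_G/2 = v_G/2$, hence $v_G\leq 2\vc{G}$.

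The only step with any real content is verifying that $f$ is a fractional vertex cover: it relies crucially on the regularity of the bipartite pattern between $C_i$ and $C_j$ imposed by $\Cequiv{2}$ (the ``double-counting'' identity $\delta_{ij}|C_i|=\delta_{ji}|C_j|$). The $Y$ part and the final addition are essentially bookkeeping once that inequality is in hand, so no technical obstacle beyond this calculation is anticipated.
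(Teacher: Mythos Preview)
Your proposal is correct and follows essentially the same route as the paper: split the classes into $X$ and $Y$, use Lemma~\ref{lem:half-regular} on each $C_i$ with $i\in Y$ to get $q_G/2$, and for $X$ define $f(i)=|S\cap C_i|/|C_i|$, verify via the double-counting identity $\delta_{ij}|C_i|=\delta_{ji}|C_j|$ that $f$ is a fractional vertex cover of $(X_G,w)$, and invoke the half-integrality bound to get $p_G/2$. The only cosmetic difference is that the paper phrases the edge-counting as $|S\cap C_i|\delta_{ij}+|S\cap C_j|\delta_{ji}\geq \delta_{ij}|C_i|$ rather than your equivalent ``uncovered edges from $C_i$ must be covered in $C_j$'' form (and note a small slip: you should divide by $\delta_{ij}|C_i|$, not $\delta_{ji}|C_i|$, to reach $1-f(i)\leq f(j)$).
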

\begin{proof}
Let~$S$ be any vertex cover of~$G$.  Let~$U_X = \bigcup_{i \in X} C_i$
and~$U_Y = \bigcup_{i \in Y} C_i$ and note that these sets are
disjoint.  We claim that~$|S \cap U_X| \geq p_G/2$ and~$|S \cap U_Y|
\geq q_G/2$, and therefore~$|S| = |S \cap U_X| + |S \cap U_Y| \geq
v_G/2$, establishing the result.

First, consider~$S \cap U_Y$.  Note that for any~$i \in Y$, the
subgraph of~$G$ induced by~$C_i$ is~$\delta_{ii}$-regular.
Since~$\delta_{ii} > 0$ by definition of~$Y$, by
Lemma~\ref{lem:half-regular} we have~$|S \cap C_i| \geq |C_i|/2$ and
therefore~$|S \cap U_Y| \geq q_G/2$.

Secondly, consider the function~$f: X \ra [0,1]$ defined by~$f(i) = |S
\cap C_i|/|C_i|$.  We claim that this is a fractional vertex cover of
the graph~$X_G$.  To verify this, we need to check that ~$f(i) + f(j)
\geq 1$ whenever~$\delta_{ij} > 0$.  There are ~$\delta_{ij}|C_i|$
edges between~$C_i$ and~$C_j$.  Each element of ~$S\cap C_i$ can cover
at most~$\delta_{ij}$ of these edges and similarly each element
of~$S\cap C_j$ covers at most~$\delta_{ji}$ of them.  Thus, since~$S$
is a vertex cover~$| S \cap C_i| \delta_{ij} + |S \cap C_j|
\delta_{ji} \geq \delta_{ij}|C_i|$.  Substituting for~$\delta_{ji}$
using the identity~$\delta_{ij}|C_i| = \delta_{ji}|C_j|$ gives~$| S
\cap C_i| \delta_{ij} + |S \cap C_j| \delta_{ij} |C_i|/|C_j| \geq
\delta_{ij}|C_i|$. Now dividing through by~$\delta_{ij}|C_i|$
gives~$f(i) + f(j) \geq 1.$

Thus, we have that the weighted vertex cover instance~$(X_g,w)$ admits
the fractional solution~$f$ whose total weight is ~$$\sum_{i \in X}
f(i)|C_i| = \sum_{i\in X} |S \cap C_i| = |S \cap U_X|.$$ Since~$p_G$
is the value of the minimum weight vertex cover of~$(X_g,w)$, we
have~$ |S \cap U_X| \geq p_G/2$, as was to be shown.
\end{proof}

\begin{proof}[Proof of Theorem~\ref{thm:c2vc-upper}]
Suppose for contradiction that there is a~$G \in \class{C}$ and~$H \in
\class{D}$ such that~$G \Cequiv {2} H$.  Since~$G$ and~$H$ must have
the same number of vertices, we have~$2 \vc{G} < \vc{H}$.  But, by
Lemma~\ref{lem:c2vc-upper} we have~$v_G \leq 2\vc{G}$, by
Lemma~\ref{lem:c2vc-lower} we have~$\vc{H} \leq v_H$ and by
Lemma~\ref{lem:c2vc-invariant} we have~$v_G = v_H$, giving a
contradiction.
\end{proof}

\section{Conclusions}
This paper introduces a new method for studying the hardness of
approximability of~$\NP$-hard optimization problems by showing that
the approximation cannot be \emph{defined} in a suitable logic such
as~$\FPC$.  This is done by showing that no class of bounded counting
width can separate instances of the problem with a high optimum from
those with a low one.  This raises a large number of new challenges in
the application of this method.  A clear demonstration of the power of
this method would be to derive a lower bound stronger than one for
which~$\NP$-hardness is known. For instance, can we improve, in the
context of inexpressibility, on the~$\sqrt{2}$-inapproximability for
vertex cover from the~$\NP$-hardness result of Khot et
al.~\cite{KhotMinzerSafra}?  In other words, can we show that the
class of graphs that have a vertex cover of density~$\delta$ is not
separable from the class of graphs that do not have a vertex cover of
density~$c\delta$, for some~$\delta \in (0,1)$ and some constant~$c$
greater than~$\sqrt{2}$? If this were achieved for~$\Ck$, for
unbounded~$k$, it would have major consequences in the study of
semidefinite programming hierarchies of relaxations of vertex cover.
A version of this question, with~$\delta$ being~$1-\epsilon$ and~$c$
being~$2-\epsilon$ for arbitrary small~$\epsilon$, was stated as Open
Problem~4.1 in~\cite{FAMT-Dagstuhl}.  Indeed, similar questions can be
posed for any optimization problem for which the exact
inapproximability factor is not known, including~$\prob{MAX CUT}$,
sparsest cut, etc.

\bigskip \bigskip \noindent\textbf{Acknowledgments.}  The research
reported here was initiated at the Simons Institute for the Theory of
Computing during the programme on Logical Structures in Computation in
autumn 2016.  The first author was partially funded by European
Research Council (ERC) under the European Union's Horizon 2020
research and innovation programme, grant agreement ERC-2014-CoG 648276
(AUTAR) and MICCIN grant TIN2016-76573-C2-1P (TASSAT3).  The second
author was partially supported by a Fellowship of the Alan Turing
Institute under the EPSRC grant EP/N510129/1 and by the EPSRC grant
EP/S03238X/1

\bibliographystyle{plain}
\bibliography{indistinguishable}

\end{document}